\documentclass[a4paper,USenglish,authorcolumns,pdfa]{lipics-v2021}
% ----------------------------------------------------------------
% package includes and custom commands go here
% do not use packages that alter the layout or style

% Packages - Start
%\usepackage[T1]{fontenc}
%\usepackage{lmodern}
\usepackage{hyperref}
\usepackage{color}
\usepackage{algorithm}
\usepackage[noend]{algpseudocode}
\usepackage{xspace}
\usepackage[textsize=small,disable]{todonotes}
\usepackage{booktabs}
\usepackage{enumerate}
\usepackage{subcaption}

\hideLIPIcs

%\usepackage[capitalize, nameinlink]{cleveref}
% Packages - End

% Custom macros::begin %%%%%%%%%%%%%%%%%%%%%%%%%%%%%%%%%%%%%%%%%%%%%%%%
%\newcommand{\ra}[1]{\renewcommand{\arraystretch}{#1}}
%\newtheorem{theorem}{Theorem}[section]
%\newtheorem{lemma}[theorem]{Lemma}
%\newtheorem{fact}[theorem]{Fact}
%\newtheorem{meta-theorem}[theorem]{Meta-Theorem}
%\newtheorem{claim}[theorem]{Claim}
%\newtheorem{remark}[theorem]{Remark}
%\newtheorem{corollary}[theorem]{Corollary}
%\newtheorem{proposition}[theorem]{Proposition}
%\newtheorem{observation}[theorem]{Observation}
%\newtheorem{definition}[theorem]{Definition}
%\newtheorem{question}[theorem]{Question}
%\newtheorem{invariant}[theorem]{Invariant}
\definecolor{darkgreen}{rgb}{0,0.5,0}
\definecolor{darkblue}{rgb}{0,0,0.5}
\hypersetup{
	unicode=false,
	colorlinks=true,
	linkcolor=darkblue,
	citecolor=darkgreen,
	filecolor=magenta,
	urlcolor=cyan
}
\interfootnotelinepenalty=10000
%\crefname{theorem}{Theorem}{Theorems}
%\Crefname{lemma}{Lemma}{Lemmas}
%\Crefname{fact}{Fact}{Facts}
%\Crefname{observation}{Observation}{Observations}
%\Crefname{remark}{Remark}{Remarks}
%\Crefname{invariant}{Invariant}{Invariants}
%\Crefname{equation}{Equation}{Equations}

% \newcommand{\bigO}{\ensuremath{\mathcal O}}
% \DeclareMathOperator{\polylog}{polylog}
% \DeclareMathOperator*{\argmin}{argmin} % no space, limits on side in displays

\newcommand{\LOCAL}{\ensuremath{\mathsf{LOCAL}}\xspace}
\newcommand{\CONGEST}{\ensuremath{\mathsf{CONGEST}}\xspace}
\newcommand{\BCONGEST}{\ensuremath{\mathsf{Broadcast\mbox{-}CONGEST}}\xspace}
\newcommand{\NCC}{\ensuremath{\mathsf{NCC}}\xspace}
\newcommand{\HYBRID}{\ensuremath{\mathsf{HYBRID}}\xspace}

\newcommand{\UDG}{\ensuremath{\mathsf{UDG}}\xspace}

\newcommand{\bigO}{\smash{\ensuremath{O}}}
\newcommand{\tilO}{\smash{\ensuremath{\widetilde{O}}}}

\newcommand{\eps}{\varepsilon}

\newcommand{\m}{\!-\!}

\newcommand{\interior}[1]{%
	{\kern0pt#1}^{\mathrm{o}}%
}

\DeclareMathOperator{\polylog}{polylog}

\DeclareMathOperator*{\argmin}{arg\,min}
\DeclareMathOperator{\dist}{\,dist}
\DeclareMathOperator{\hop}{\,hop}
\DeclareMathOperator{\incomp}{\nsim}

% Custom macros::end %%%%%%%%%%%%%%%%%%%%%%%%%%%%%%%%%%%%%%%%%%%%%%%%
% ----------------------------------------------------------------

% ----------------------------------------------------------------

\bibliographystyle{plainurl}% the mandatory bibstyle

\title{Near-Shortest Path Routing in Hybrid Communication Networks
}
\titlerunning{Near-Shortest Path Routing in Hybrid Communication Networks}

\author{Sam Coy}{University of Warwick, United Kingdom}{S.Coy@warwick.ac.uk}{}{}

\author{Artur Czumaj}{University of Warwick, United Kingdom}{A.Czumaj@warwick.ac.uk}{}{}

\author{Michael Feldmann}{Paderborn University, Germany}{michael.feldmann@upb.de}{}{}

\author{Kristian Hinnenthal}{Paderborn University, Germany}{krijan@mail.upb.de}{}{}

\author{Fabian Kuhn}{University of Freiburg, Germany}{kuhn@cs.uni-freiburg.de}{}{}

\author{Christian Scheideler}{Paderborn University, Germany}{scheideler@upb.de}{}{}

\author{Philipp Schneider}{University of Freiburg, Germany}{philipp.schneider@cs.uni-freiburg.de}{}{}

\author{Martijn Struijs}{TU Eindhoven, The Netherlands}{m.a.c.struijs@tue.nl}{}{}

\authorrunning{S. Coy, A. Czumaj, M. Feldmann, K. Hinnenthal, F. Kuhn, C. Scheideler, P. Schneider, M. Struijs}

\Copyright{Sam Coy, Artur Czumaj, Michael Feldmann, Kristian Hinnenthal, Fabian Kuhn, Christian Scheideler, Philipp Schnieder, and Martijn Struijs}

\ccsdesc[500]{Theory of computation~Distributed algorithms} %TODO mandatory: Please choose ACM 2012 classifications from https://dl.acm.org/ccs/ccs_flat.cfm

\keywords{Hybrid networks, overlay networks}

\supplement{}%optional, e.g., related research data, source code, ... hosted on a repository like zenodo, figshare, GitHub, ...

\funding{	This work has been partially supported by the German Research Foundation (DFG) within the Collaborative Research Center 901 ''On-The-Fly Computing'' under the project number 160364472-SFB901, by the Centre for Discrete Mathematics and its Applications (DIMAP), by EPSRC studentship, and by EPSRC award EP/V01305X/1.}%optional, to capture a funding statement, which applies to all authors. Please enter author specific funding statements as fifth argument of the \author macro.

%\acknowledgements{We thank the anonymous reviewers for helpful comments.}%optional

\nolinenumbers %uncomment to disable line numbering

%Editor-only macros:: begin (do not touch as author)%%%%%%%%%%%%%%%%%%%%%%%%%%%%%%%%%%
\EventEditors{Quentin Bramas, Vincent Gramoli, and Alessia Milani}
\EventNoEds{3}
\EventLongTitle{25th International Conference on Principles of Distributed Systems (OPODIS 2021)}
\EventShortTitle{OPODIS 2021}
\EventAcronym{OPODIS}
\EventYear{2021}
\EventDate{December 13--15, 2021}
\EventLocation{Strasbourg, France}
\EventLogo{}
\SeriesVolume{217}
\ArticleNo{25}

%%%%%%%%%%%%%%%%%%%%%%%%%%%%%%%%%%%%%%%%%%%%%%%%%%%%%%

\begin{document}

\maketitle

\begin{abstract}
    Hybrid networks, i.e., networks that leverage different means of communication, become ever more widespread. To allow theoretical study of such networks, [Augustine et al., SODA'20] introduced the \HYBRID model, which is based on the concept of synchronous message passing and uses two fundamentally different principles of communication: a \emph{local} mode, which allows every node to exchange one message per round with each neighbor in a \emph{local communication graph}; and a \emph{global} mode where \emph{any pair} of nodes can exchange messages, but only \emph{few such exchanges} can take place per round.
    A sizable portion of the previous research for the \HYBRID model revolves around basic communication primitives and computing distances or shortest paths in networks. In this paper, we extend this study to a related fundamental problem of \emph{computing compact routing schemes for near-shortest paths} in the local communication graph. We demonstrate that, for the case where the local communication graph is a \emph{unit-disc graph} with $n$ nodes that is realized in the plane and has \emph{no radio holes}, we can deterministically compute a routing scheme that has constant stretch and uses labels and local routing tables of size $O(\log n)$ bits in only $O(\log n)$ rounds.
\end{abstract}

\section{Introduction}

Humans naturally communicate in a hybrid fashion by making use of broadcast services, emails, phones, or simply face-to-face communication. Thus, it seems natural to study hybrid communication also in distributed systems. But fundamental research in this area is still in its infancy, even though
there are several examples where hybrid communication is already exploited in practice. For instance, in modern data centers, wired communication networks are combined with high-speed wireless communication to reduce wire length or increase bandwidth without adding congestion to the wired network \cite{FoersterS19}.
This paper focuses on hybrid \emph{wireless} networks: networks that combine ad-hoc, WLAN-based connections (the \emph{local} network) with connections via a cellular or satellite infrastructure (the \emph{global} network). These can be realized, for instance, by smartphones, since they support both communication modes
and solutions for smartphone ad hoc networks have been around for almost a decade. Connections in the local network can transfer large amounts of data cheaply, but have limited range, while global connections can transmit data between any pair of devices, but typically with bandwidth restrictions and additional costs. So ideally, global communication should be reserved for exchanging control messages while the data should be sent via the local edges, which \emph{necessitates the computation of a routing scheme for the local network}.

The simplest solution to compute a routing scheme would be to use the global mode to collect all local device connections and/or positions in a centralized server and do the computation there. However, a centralized solution would represent a bottleneck and single point of failure, or it would not be for free when making use of a cloud service. We avoid these problems by only relying on the devices themselves. Interestingly, even without any central service, we vastly improve the results over what is possible with just the local network. %
More specifically, we demonstrate that with a hybrid wireless network one can significantly speed up the computation of compact routing schemes under certain natural circumstances, thereby opening up a new research direction for wireless networks.

\subsection{Model and Problem Definition}
\label{subsec:model+definitions}

We assume a set $V$ of $n$ nodes with unique IDs. Each node is associated with a fixed, distinct point in the 2-dimensional Euclidean plane, (i.e., $V \subseteq \mathbb R^2$), and every node $v \in V$ knows the global coordinates of its point.
We assume the standard \emph{synchronous message passing model}: time proceeds in synchronous time slots called \emph{rounds}. In each round, every node can perform an arbitrary amount of local computation and then communicate with other nodes.

In the \HYBRID model, communication occurs in one of two modes: the \textit{local mode} and the \textit{global mode}. The connections for the local mode are given by a fixed graph. In our case, this graph is represented by a \emph{unit-disk graph} $\UDG(V)$: for any $v,w \in V$, $\{v,w\} \in \UDG(V)$ if and only if $v$ and $w$ are at distance at most 1.
For the local mode, we use the \CONGEST model for simplicity: 
in each round, for all edges $\{v,w\} \in \UDG(V)$, node $v$ can send a message of $\bigO(\log n)$ bits to node $w$. However, our algorithms still work if instead the more restrictive (and more natural) \BCONGEST model is used (see Appendix~\ref{sec:works-in-broadcast}). We assume that each message can carry a constant number of node locations (this is analogous to the \textit{Real RAM model}, a standard model of sequential computation). 

For the global mode, we are using a variant of the \emph{node-capacitated clique (\NCC)} model called $\NCC_0$  \cite{ACCPSS20} that captures key aspects of overlay networks. In this model, any node $u$ can send messages to any other node $v \in V$ provided $u$ knows $v$’s ID. 
Initially, the set of IDs known to each node is just limited to its neighbors in $\UDG(V)$. Each node is limited to sending $\bigO(\log n)$ messages of $\bigO(\log n)$ bits via the global mode in each round. W.l.o.g., we assume that whenever a node $v$ knows the ID of some node $w$, it also knows $w$'s location (since this can be sent together with its ID).

\subparagraph{Model Motivation.}
The assumption that the nodes know their global coordinates is motivated by the fact that smartphones can nowadays accurately determine their location using GPS or wireless access point or base station information. However, it would also be sufficient if the nodes can determine the distance and relative angles to their neighbors in the UDG (this can be obtained via kown localization methods~\cite{HoflingerB0BSRS18}), though
with some precision loss.

The use of the $\NCC_0$ model in the global communication mode is motivated by the fact that nodes can communicate with any other node in the world via the cellular infrastructure given its ID (e.g., its phone number). Note that $\NCC_0$ is weaker than $\NCC$, which assumes a clique from the start, but it is known that once the right topology has been set up in $\NCC_0$ (which can be done in $O(\log n)$ rounds \cite{GHSW20}), any communication round in $\NCC$ can be simulated by $O(\log n)$ communication rounds in $\NCC_0$ \cite{LeightonMRR94}.

\subparagraph{Problem Definition.}
Our goal is to compute a \textit{compact routing scheme} for $\UDG(V)$, in hybrid networks where $\UDG(V)$ is connected and does not contain radio holes. $\UDG(V)$ is said to contain a \emph{radio hole} if, roughly speaking, there is an internal cycle in $\UDG(V)$ that cannot be triangulated. A precise definition will be given in Section~\ref{sec:gridgraph}.

Let $\mathcal{G}$ be a class of graphs. A stateless\footnote{In a stateless routing scheme a packet can not accumulate information along the routing path and is thus oblivious to the routing path that the packet took so far (as opposed to stateful routing)} routing scheme $\mathcal{R}$ for $\mathcal{G}$ is a family of labeling functions $\ell_G: V(G) \rightarrow \{0,1\}^+$ for each $G\in \mathcal{G}$, which assigns a bit string to every node $v$ in $G$. The label $\ell_G(v)$ serves as the address of $v$ for routing in $G$: it contains the identifier of $v$, and may also contain information about the topology of $G$.

While the identifier is given as part of the input, the label is determined in the \emph{preprocessing}.
Additionally, the preprocessing has to set up a routing function $\rho_G: V(G) \times \{0,1\}^+ \rightarrow V(G)$ for the given graph $G$ that, given the current node of a message and the label of the destination, determines the neighbor of $v$ in $G$ to forward the message to\footnote{More general definitions of routing functions exist, but we do not require the additional power afforded by stateful routing (for instance), to compute near-constant routing schemes in logarithmic time.}. A routing scheme must satisfy various properties.

First of all, it must be \emph{correct}, i.e., for every source-destination pair $(s,t)$, $\rho_G$ determines a path in $G$ leading from $s$ to $t$. Second, it must be \emph{local}, in a sense that every node $v$ can evaluate $\rho_G(v,\ell)$ locally. Third, the routing should be \emph{efficient}, i.e., the ratio of the length of the routing path and the shortest path --- also known as the \emph{stretch factor} --- should be as close to 1 as possible. In our case, the length of a routing path is simply determined by the number of edges used by it. Note that whenever we have a constant stretch w.r.t.\ the number of edges in $\UDG(V)$, we also have a constant stretch w.r.t.\ the sum of the Euclidean lengths of its edges, so we achieve a constant stretch for \emph{both} types of metrics (see Section~\ref{sec:gridgraph}). Finally, the routing scheme should be \emph{compact}, i.e., the labels $\ell_G(v)$ of the nodes $v$ and the amount of space needed at each node $v$ to evaluate $\rho_G(v,\ell)$ should be as small as possible.

\subparagraph*{Problem Motivation.}

There are various reasons for developing fast distributed algorithms for compact routing schemes in hybrid wireless networks. First of all, computing routing schemes for the local ad-hoc network is useful even in the presence of a cellular infrastructure since ad-hoc connections are comparatively cheap to use and typically offer a much larger bandwidth. Also, the ability to quickly compute compact routing schemes allows for frequent adaption in case of topological changes in the wireless ad-hoc network with low overhead.

\subsection{Our Contributions}

In Appendix~\ref{sec:lowerbound} we show that it is impossible to set up a compact routing scheme with constant stretch in time $o(\sqrt{n})$ when just relying on the UDG for communication even if the geometric location of all nodes is known and the UDG is hole-free.
This poses the question of whether limited global communication can overcome this. We answer this question by showing the following result, which demonstrates the impact that a modest amount of global communication has when applied to problems which are challenging to solve locally.

\begin{theorem}
\label{thm:main}
For a \HYBRID network with a hole-free $\UDG(V)$, a compact, stateless routing scheme can be deterministically computed for $\UDG(V)$ in $\bigO(\log n)$ rounds. The scheme uses node labels of $\bigO(\log n)$ bits and a mapping $\rho$ that (i) can be evaluated locally with $\bigO(\log n)$ bits of information in each node and (ii) such that for every source-destination pair $s, t \in V$, $\rho$ determines a routing path of constant stretch from $s$ to $t$ in $\UDG(V)$.
\end{theorem}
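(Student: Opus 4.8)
The plan is to reduce the problem to computing a routing scheme on a more structured graph — a suitably defined "grid graph" — that approximates $\UDG(V)$ up to constant stretch, and then to exploit the geometry of that grid graph together with the global communication mode to build the labels and routing tables quickly. First I would partition the plane into a grid of cells of side length $\Theta(1)$ (say $1/\sqrt{2}$, so that any two nodes in the same or adjacent cells are UDG-neighbors), and contract each nonempty cell to a single supernode, obtaining a grid graph $H$ whose vertices are the nonempty cells and whose edges connect cells that contain a pair of UDG-adjacent nodes. Since $\UDG(V)$ is connected and hole-free, I expect $H$ to inherit a clean planar structure — in particular, it is (essentially) a solid, simply-connected subset of the integer grid with no holes — and a routing path in $H$ can be lifted to a routing path in $\UDG(V)$ at only a constant factor cost in length, because each intra-cell and adjacent-cell hop costs $O(1)$ edges. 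Conversely, a shortest path in $\UDG(V)$ projects to a walk in $H$ of proportional length, so constant stretch on $H$ gives constant stretch on $\UDG(V)$; this is exactly the equivalence the excerpt promises to establish in Section~\ref{sec:gridgraph}, so I may assume it.

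Next I would design the routing scheme on the hole-free grid graph $H$. The key geometric fact I want to use is that in a hole-free grid graph, (near-)shortest paths between two cells can be described by a small amount of coordinate information: intuitively, one can route greedily toward the destination's grid coordinates, using the boundary of $H$ to navigate around the "staircase" shape of the region. Concretely, I would aim to show that $H$ admits an interval-style routing scheme, or a decomposition into $O(1)$ monotone "strips" or a constant number of BFS trees from canonical anchor cells (e.g., the extreme cells of $H$), such that from the coordinates of $s$ and $t$ alone — plus $O(\log n)$ bits of precomputed metadata per node — each node can decide locally which neighboring cell lies on a constant-stretch path to $t$. The label of a node $v$ would then consist of its ID, its cell coordinates (which are $O(\log n)$ bits since coordinates are polynomially bounded, or can be made so by a renaming), and a constant number of $O(\log n)$-bit fields recording its position relative to these anchors; the routing function $\rho$ compares these fields and moves one step in the appropriate direction, within a cell first descending to a designated "port" node and then crossing to the next cell.

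The distributed implementation in $O(\log n)$ rounds is where the global mode does the real work. After $O(\log n)$ rounds of setup in $\NCC_0$ we may assume full $\NCC$ power, i.e., an overlay where any node can reach any other whose ID it knows, with $O(\log n)$-message-per-round throughput; this lets us run aggregation, broadcast, and sorting-type primitives on the $n$ nodes in $O(\log n)$ rounds (this is the kind of primitive the excerpt invokes via \cite{GHSW20,LeightonMRR94}). Using these, I would: (1) have each node compute its cell and, via aggregation over cells, elect a leader per cell and let every node learn the set of neighboring nonempty cells — $O(\log n)$ rounds; (2) build the grid-graph adjacency structure and identify the $O(1)$ anchor cells (extreme coordinates) by global aggregation — $O(\log n)$ rounds; (3) compute, for every cell, the $O(\log n)$-bit metadata (its offsets / boundary-distances relative to the anchors) — this is the crux and the step I expect to be the main obstacle, since it amounts to a distance-like computation in $H$ and must be done in $O(\log n)$ rounds despite $H$ possibly having diameter $\Theta(\sqrt n)$; I would attack it by exploiting that the needed quantities for a hole-free grid region are not genuine graph distances but simple monotone functions of coordinates along the region's boundary, so they can be obtained by scanning the $O(\text{boundary length})$ boundary cells with a parallel prefix-sum / pointer-jumping computation over the boundary cycle in $O(\log n)$ rounds in $\NCC$, then broadcasting the results to all interior cells; (4) finally, each node assembles its label and fills its $O(\log n)$-bit routing table locally. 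Correctness and constant stretch then follow by combining the grid-graph routing analysis of step two with the UDG-to-grid equivalence of Section~\ref{sec:gridgraph}, and the round count is $O(\log n)$ throughout; the whole construction is deterministic since all primitives used ($\NCC_0$ setup, aggregation, sorting, pointer jumping) can be made deterministic.
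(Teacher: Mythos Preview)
Your high-level framework matches the paper: reduce to a grid graph (the paper's $\Gamma$), show constant-stretch equivalence between $\UDG(V)$ and $\Gamma$ (Section~\ref{sec:gridgraph}, Theorems~\ref{thm:hybrid_on_grid}--\ref{thm:routing_scheme_udg}), and then use $\NCC_0$/$\NCC$ primitives (pointer jumping, Euler tours, aggregation) to build the grid-graph routing scheme in $O(\log n)$ rounds. That part of your plan is sound and essentially what the paper does.

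The genuine gap is your routing scheme \emph{on the hole-free grid graph itself}. You offer three candidates --- greedy coordinate routing, $O(1)$ anchor BFS trees, and boundary-cycle prefix sums --- without committing to or justifying any of them, and none of them works as stated. Greedy routing toward $t$'s coordinates fails already on simple hole-free grids (a U-shape, a comb, a spiral): you get stuck at local dead ends and incur non-constant stretch. A constant number of BFS trees from extreme cells does not give constant stretch either, since distances in a hole-free grid region are \emph{not} determined by coordinates plus a constant number of anchor distances. And the ``boundary-cycle scan'' idea is underspecified: interior cells can be $\Theta(\sqrt n)$ away from the boundary, so broadcasting boundary data to them does not obviously yield per-node routing information of $O(\log n)$ bits. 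You even flag step~(3) as ``the main obstacle'' without resolving it --- that is exactly where the missing idea lives.

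What the paper actually does here is the key construction you are missing: it partitions $\Gamma$ into maximal \emph{vertical} strips (``portals''), connects adjacent portals by a single horizontal edge to obtain a spanning \emph{portal tree} $T_\Gamma$ (hole-freeness is what makes this a tree rather than a graph with cycles), computes DFS intervals $I_v$ on $T_\Gamma$ via an Euler-tour/pointer-jumping pipeline in $O(\log n)$ rounds, and then routes using interval containment --- crucially allowing horizontal hops across non-tree edges to jump between branches, which is what upgrades tree routing to an \emph{exact} shortest-path scheme on $\Gamma$ (Theorem~\ref{thm:routing_scheme_grid}). Your ``interval-style routing'' remark is pointing in the right direction, but the portal decomposition and the branch-jumping rule are the concrete ingredients you need to supply.
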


Technical novelties of this work include a \emph{grid graph abstraction} of any UDG which serves to sparsify the UDG while preserving its geometric structure.
Computations on the grid graph can be simulated efficiently in $\UDG(V)$.
Furthermore, we can transform a routing scheme on the grid graph to one in the UDG, increasing the stretch only by a constant.

We show how to construct this abstraction in a distributed setting based entirely on local communication. This could potentially make it of interest when studying routing or distance approximation problems on UDGs in the \CONGEST or \BCONGEST models or for simplification of existing algorithms.
We also believe that the grid graph abstraction and its properties will be useful for future work in the \HYBRID setting, making it a springboard for the case of UDGs \emph{with} radio-holes.

\subsection{Overview}
The first step is the computation of a simple, yet surprisingly useful abstract graph structure on $\UDG(V)$, which we call a \textit{grid graph} $\Gamma$. The vertices of $\Gamma$ are the centers of the cells of a regular square grid which intersect with an edge of $\UDG(V)$. Two vertices of $\Gamma$ share an edge iff their cells are vertically or horizontally adjacent (see Section \ref{sec:gridgraphdef}, Figure \ref{fig:grid-polygon}).
Subsequently, in Section \ref{sec:active-nodes}, we tie the graphs $\UDG(V)$ and $\Gamma$ together by defining a \textit{representative} in $V$ for each vertex of $\Gamma$ that fulfills two main properties. First, two representatives of \textit{adjacent} grid vertices are connected with a path of at most 3 hops in $\UDG(V)$ (see Figure \ref{fig:representative-edge}). Second, each node in $V$ has such a representative within 1 hop in $\UDG(V)$.

We then turn to the algorithmic aspects of $\Gamma$. In Section \ref{sec:gridrep} we define the \textit{representation} $R$ of $\Gamma$, where grid vertices correspond to their aforementioned representatives and grid edges correspond to paths of 3 hops in $\UDG(V)$, and we show that $R$ can be efficiently computed in $\UDG(V)$. Furthermore, the representation $R$ can be used to efficiently \textit{simulate} the \HYBRID model on $\Gamma$, which is summarized in Theorem \ref{thm:hybrid_on_grid}.
In Section \ref{sec:constant_stretch}, we show that an optimal path in $\Gamma$ implies a path in $R$ with a constant approximation ratio (Theorem \ref{thm:grid-routing-guarantees}).

The final step of the first part is to construct a constant stretch routing scheme $\mathcal R$ for $\UDG(V)$ assuming that we have an optimal one for $\Gamma$ (Section \ref{sec:constant_stretch_routing_scheme}), which is encapsulated by Theorem \ref{thm:routing_scheme_udg}.
Since we can efficiently simulate the \HYBRID model on $\Gamma$ (Theorem \ref{thm:hybrid_on_grid}), the second part can be considered in isolation from the first part. Note that so far we did not exploit the fact that $\UDG(V)$ is hole-free. In fact, the construction, simulation, and properties of $\Gamma$ hold without that assumption, which is only needed for the second part.

Requiring the UDG to be hole-free is a strong assumption. However, we believe that at least bounding the number of holes is necessary in order to compute a compact, constant-stretch labelling scheme in $\bigO(\log n)$ rounds. Doing this in time and space polylogarithmic in $n$ that also \textit{scales well} in the number of radio holes in the UDG seems to be highly non-trivial, as these holes may intertwine in arbitrary ways, while there are exponentially many possibilities of navigating around them.\footnote{The number of simple $st$-paths that cannot be continuously deformed into each other without crossing a hole (i.e., non-homotopic paths) is $2^h$, where $h$ is the number of radio holes.} While in our setting there still can be exponentially many simple paths between two points, we are able to exploit the lack of large holes between them to deal with arbitrarily complex boundaries of UDGs in a hybrid network setting.

To compute the routing scheme $\mathcal R_\Gamma$ on $\Gamma$, the first step (Section \ref{sec:portals_and_labeling}) is to arrange the grid nodes into maximal vertical lines, called \emph{portals} (see Figure \ref{fig:grid_preprocessing:a}). All portals with two horizontally adjacent nodes will add one such edge, resulting in a \textit{portal-tree} $T_\Gamma$, which is cycle-free because $\UDG(V)$ is hole-free (see Figure \ref{fig:grid_preprocessing:b}).
In order to compute a labelling scheme we first perform a distributed depth-first traversal on $T_{\Gamma}$ (where the root is the node with min ID). This allows us to compute intervals $I_v$ for each node $v$ of $T_\Gamma$ that fulfill the parenthesis theorem: it is $I_w \supset I_v$ ($I_w \subset I_v$) for each ancestor (descendant) node $w$ of $v$ in $T_\Gamma$, or else $I_w \cap I_v = \emptyset$ when $v,w$ are in different branches of $T_\Gamma$ (see Figure \ref{fig:grid_preprocessing:d}).
Then all nodes of a portal will agree on interval $I_r$ of node $r$ that is closest to the root as their \textit{portal label}. The challenge here is to carefully line up techniques for the more restrictive $\NCC_0$ model to obtain such a labelling in $\bigO(\log n)$ rounds.

Finally, in Section~\ref{sec:grid_routing}, we use $T_\Gamma$ to route a packet from source $s$ to target node $t$ in $T_\Gamma$. Since the shortest path in $T_\Gamma$ may not necessarily follow the tree, we have to define a routing strategy that jumps over branches when needed, for which we can use the ``tree information'' encoded in the labels. We use the portal labels to prioritize jumping horizontally as soon as the next portal on a path  is reachable via any edge in $\Gamma$. Vertical routing within portals is done as a second priority for which node labels $I_v$ are used.
We prove that this strategy yields an exact routing scheme $\mathcal R_\Gamma$ for $\Gamma$ formalized in Theorem \ref{thm:routing_scheme_grid}. Consequently, Theorem \ref{thm:main} is a corollary from the fact that we can emulate $\Gamma$ on $\UDG(V)$ (Theorem \ref{thm:hybrid_on_grid}) and that $\mathcal R_\Gamma$ can be transformed into a constant stretch routing scheme $\mathcal R$ for $\UDG(V)$ (Theorem~\ref{thm:routing_scheme_udg}).

\subsection{Related Work}
\label{sec:related_work}

An early effort to formalize hybrid communication networks by \cite{ALSY90}, combined the \LOCAL model with a global communication mode that essentially allows a single node to broadcast a message to all others per round. Note that this conception of the global network is fundamentally different to ours, which manifests in the fact that solving a aggregations problem (e.g., computing the sum of inputs of each node) can take $\Omega(n)$ rounds (by contrast, it takes $\bigO(\log n)$ rounds in the \NCC model).

Recently, shortest path problems in general \textbf{hybrid networks} have been studied by various authors \cite{AHK+20,CLP20,KuhnS20,FHS20}, which provide approximate and exact solutions for the all-pairs shortest paths problem (APSP) and the single-source shortest paths problem (SSSP). These solutions all require $O(n^\varepsilon)$ rounds (for constant $\varepsilon>0$) to achieve a constant approximation ratio, and this is tight in the case of APSP. $\bigO(\log n)$-time algorithms to solve SSSP for some classes of sparse graphs (not including UDGs) are given in \cite{FHS20}. 
Shortest path problems have also been studied for hybrid wireless networks \cite{CastenowKS20}. They show that for a bounded-degree $\UDG(V)$ with a convex outer boundary, where the bounding boxes of the radio holes do not overlap, one can compute an abstraction of $\UDG(V)$ in $\bigO(\log^2 n)$ time so that paths of constant stretch between all source-destination pairs {\em outside of} the bounding boxes can be found (a simple extension of their approach to outer boundaries of {\em arbitrary} shape seems unlikely).

Numerous \textbf{online routing strategies} have been proposed for general UDGs, including FACE-I, FACE-II, AFR, OAFR, GOAFR and GOAFR+ \cite{BoseMSU01,KuhnWZ03,KuhnWZZ03,KuhnWZ02}. In \cite{KuhnWZ03,KuhnWZZ03} it is proven that GOAFR and GOAFR+ are asymptotically optimal w.r.t.\ path length compared to any geometric routing strategy. However, the achieved stretch is linear in the length of a shortest path. When a UDG contains the Delaunay graph of its nodes, one can exploit the fact that the Delaunay graph is a 2-spanner of the Euclidean metric \cite{Xia13}, and MixedChordArc has been shown to be a constant-competitive routing strategy for Delaunay graphs \cite{BonichonBCDHS18}. This is only applicable in UDGs where the line segment connecting two nodes of the UDG does not intersect a boundary, which is the case if it has a convex outer boundary \textit{and} is hole-free.

\textbf{Centralized constructions}\footnote{Note that in this paper, we allow ourselves just $O(\log n)$ rounds for pre-computation and each node can learn only $\polylog n$ bits per round given that it has small ($\polylog n$) degree, which can be true for every node. The local network has size $\Omega(n)$, meaning no single node can learn it completely. This inhibits solving the problem locally at some node, i.e., by \textit{direct} use of some centralized algorithm.} for compact routing schemes have been heavily investigated {for general graphs} (see, e.g., \cite{TZ01}) as well as UDGs. Here, we just focus on UDGs. Bruck et al.\ \cite{BruckGJ07} present a medial axis based naming and routing protocol that does not require geographical locations, makes routing decisions locally, and achieves good load balancing. The routing paths seem near-optimal in simulations, but no rigorous results are given.
Gao and Goswami \cite{GaoG15} propose a routing algorithm that achieves a constant approximation ratio for load balanced routing in a UDG of arbitrary shape, but the question of near-optimal routing paths is not addressed.
Based on work by Gupta et al. \cite{GuptaKR04} for planar graphs, Yan et al.\ \cite{YanXD12} show how to assign a label of $\bigO(\log^2 n)$ bits to each node of the graph such that given the labels of a source $s$ and of a target $t$, one can locally compute a path from $s$ to $t$ with constant stretch.
Using the well-separated pair decomposition (WSPD) for UDGs \cite{GaoZ05}, Kaplan et al. \cite{KaplanMRS18} present a local routing scheme with stretch $1\!+\!\eps$ with node labels, routing tables and headers of size polynomial in $\log n, \log D$, and $1/\varepsilon$, where $D$ is the diameter of $\UDG(V)$. 
Later, \cite{mulzerLIPIcs2020} shows how to achieve a stretch of $1\!+\!\eps$ without using dynamic headers.

Our routing scheme for the grid graph abstraction extends the routing scheme proposed by Santoro and Khatib~\cite{SK85}, who presented a labelling along with an optimal routing scheme for trees by computing a minimum-distance spanning tree and labelling of that tree via a depth-first search.\footnote{
While the routing scheme in \cite{SK85} guarantees a 2-approximation for general graphs regarding the worst-case optimal cost when routing over all possible source-target-pairs, their scheme does not guarantee constant stretch when routing a message between two specific nodes $s,t$ in the grid graph.}
In our scheme, we provide optimal paths between any source-target pair in the grid graph, because we allow using edges that are not part of the spanning tree for routing in order to jump between the branches of the spanning tree.

Our study is also related to routing problems in sparse graphs in \textbf{parallel models} \cite{kavvadias1996hammock, DPZ91}.
For example, the algorithm of Kavvadias et al. \cite{kavvadias1996hammock} can be used to compute routing tables in planar graphs in time $\tilO(1)$ and work $\tilO(n)$.
Together with the simulation framework of Feldmann et al.~\cite{FHS20}, the algorithm could in principle be used to solve our problem.
However, for the simulation to work, one would need to construct a suitable global network, sparsify the graph, and, together with the simulation overhead, one would obtain a polylogarithmic runtime much higher than $\bigO(\log n)$.
Further, the size of the routing tables may be $\Theta(n^2)$.

\section{Grid Graph}
\label{sec:gridgraph}

Let $G : = \UDG(V)$. The goal of this section is to construct a grid abstraction of $G$ which makes finding routing protocols in the subsequent section manageable. In particular (but still suppressing some details), we want to simulate a bounded degree \textit{grid graph} on $G$ such that shortest paths in the grid graph represents only a constant factor detour in $G$. The way we obtain such a grid representation of $G$ in a distributed fashion is by simulating grid nodes with real nodes of $V$ that are close by, where edges between grid nodes correspond to paths of constant length in $G$. We start by introducing some notations we require in the following.

\subsection{Preliminaries}
\label{sec:preliminaries}

\subparagraph{Graphs and Polygons in $\mathbb{R}^2$.}
Since each node in $V$ is associated with a point in $\mathbb{R}^2$, we can associate each edge $\{u,v\}\in \UDG(V)$ with the line segment with endpoints $u$ and $v$, i.e., the set $\{x\cdot u + (1 \m x) \cdot v \mid x \in [0,1]\}$. We use the names of vertices and edges to refer to their associated subsets of $\mathbb{R}^2$ when no ambiguity arises. 

A \emph{polygonal chain} is a finite sequence of points where consecutive points are connected by segments. A polygonal chain is \emph{closed} if the first point in the sequence is equal to the last. A \emph{polygon} is a closed, connected, and bounded region in $\mathbb{R}^2$ where the boundary consists of a finite number of (not necessarily disjoint) closed polygonal chains (this implies the edges in these polygonal chains have no proper intersections).

A \emph{hole} of a polygon $P$ is an open region in $\mathbb{R}^2$ that is a maximal bounded and connected component of $\mathbb{R}^2\setminus P$. Note that the boundary of each hole of $P$ is equal to one of the polygonal chains bounding $P$.
A polygon is \emph{simple} if it has no holes.

\subparagraph{Distance Metrics.} We use the notation $\Vert \cdot \Vert$ for the Euclidean metric on $\mathbb R^2$. Consequently, for $p, q \in \mathbb R^2$, $\Vert p - q \Vert$ denotes the Euclidean distance from $p$ to $q$. For sets of points $A,B \subseteq \mathbb R^2$ we define the distance between those sets as $\dist(A,B) := \min_{a \in A, b \in B} \Vert a \m b\Vert$.

Let $P \subseteq \mathbb{R}^2$ be a polygon in the Euclidean plane and let $p,q \in P$. We define the \textit{geometric} distance between $p$ and $q$ in $P$, $\dist_P(p,q)$, to be the length of the shortest path between $p,q$ in $P$. Note that because $P$ is a polygon, there is a polygonal chain $\Pi = (v_1, \ldots, v_k)$ from $p$ to $q$ inside $P$ such that $\dist_P(p,q) = \sum_{i=1}^{k-1} \|v_{i+1} \m v_i\|$.

Let $G = (V,E)$ be an embedded graph. Let $\Pi \subseteq E$ be a path, i.e., a sequence of incident edges of $G$. Then we define $\dist_G(\Pi) = \sum_{(u,v) \in \Pi} \|u \m v\|$. Let $|\Pi|$ be the number of edges (or \emph{hops}) of a path $\Pi$ in $G$. The \emph{hop-distance} between two nodes $u, v \in V$ is defined as  $\hop_G(u,v) := \!\min_{\text{$u$-$v$-path } \Pi} |\Pi|$.

\subsection{Grid Graph Definition}
\label{sec:gridgraphdef}

We first give some definitions to formalize the notion of an UDG having radio-holes.
A \emph{triangle of $\UDG(V)$} is a region in $\mathbb{R}^2$ that is bounded by the edges of a $3$-cycle in $\UDG(V)$ (including both the boundary and interior of the triangle). We define the \emph{contour polygon $P$ of $\UDG(V)$} as the union of all triangles and edges of $\UDG(V)$. Since $\UDG(V)$ is connected, $P$ is indeed a polygon. We call the holes in $P$ \emph{radio-holes} of $\UDG(V)$. We say an UDG has \emph{no radio-holes} if the contour polygon of that UDG has no holes, i.e., the polygon $P$ is simple.

Next we partition the plane into an axis-parallel square grid with side-length $c=\frac{1}{10}\sqrt{15}$ and a fixed origin corresponding to origin of the coordinate system.
Note that due to knowledge of coordinates, all nodes are aware of their position relative to the grid.

Define a square grid-cell to be \emph{active} if it has a non-empty intersection with $P$. Based on this grid, we define the grid graph $\Gamma= (V_\Gamma, E_\Gamma)$, where $V_\Gamma$ has a node positioned at the center of each active cell in our grid, and we have an edge in $E_\Gamma$ between every pair of nodes of $V_\Gamma$ that lie in adjacent cells in the grid (i.e., the square cells share an edge). The grid graph $\Gamma$ will be simulated in the routing protocol. We will also define the cell graph $\Gamma'=(V_{\Gamma'},E_{\Gamma'})$ in the analysis of our protocol, but do not simulate it. We call a vertex of the grid \emph{loose} if it is a corner of exactly $2$ active cells that are not adjacent. $\Gamma'$ is composed of the boundaries of the square grid, with $V_{\Gamma'}$ the set of all corners of each active grid-cell that are not loose, with a pair of vertices in $V_{\Gamma'}$ having an edge in $E_{\Gamma'}$ if they are ends of an edge of a grid-cell.
To define the \emph{cell polygon} $P'$, first take the union of all active grid-cells. Then, for every loose vertex $v$ in the grid, remove a triangle from $P'$ at every active grid-cell incident to $v$ that is small enough to be disjoint from $P$, such that $P'$ no longer contains $v$. Note that since a loose vertex does not lie in $P$ (otherwise, all $4$ cells incident to it would have been active), such a triangle exists. 
See Figure~\ref{fig:grid-polygon} for an example of these definitions.
Next, we define a \emph{representative} $r$ for each grid node $g \in V_\Gamma$, which simulates $g$ throughout the rest of the protocol. We apply one of the following rules to assign a grid node to a node $u \in V$.

\begin{figure}
    \centering
    (a) \includegraphics[width=0.43\textwidth]{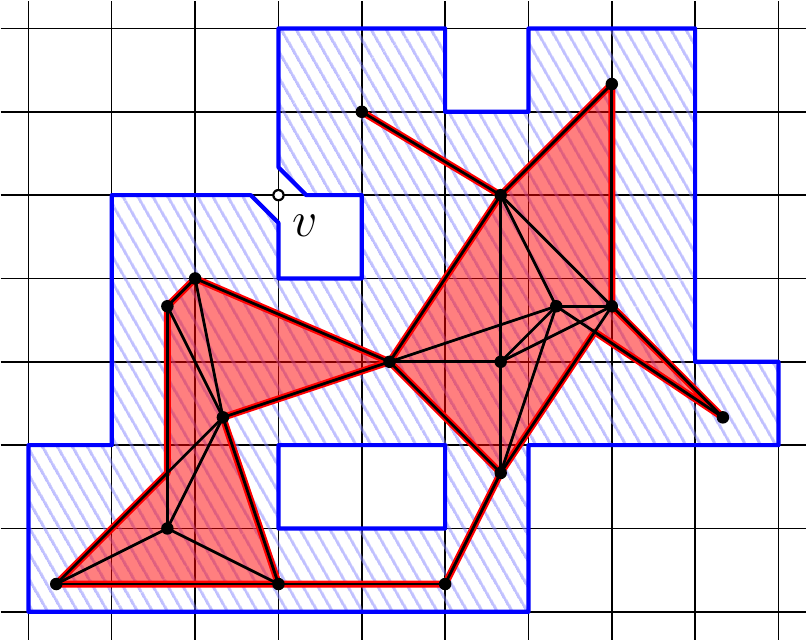}\hfil
    (b) \includegraphics[width=0.43\textwidth]{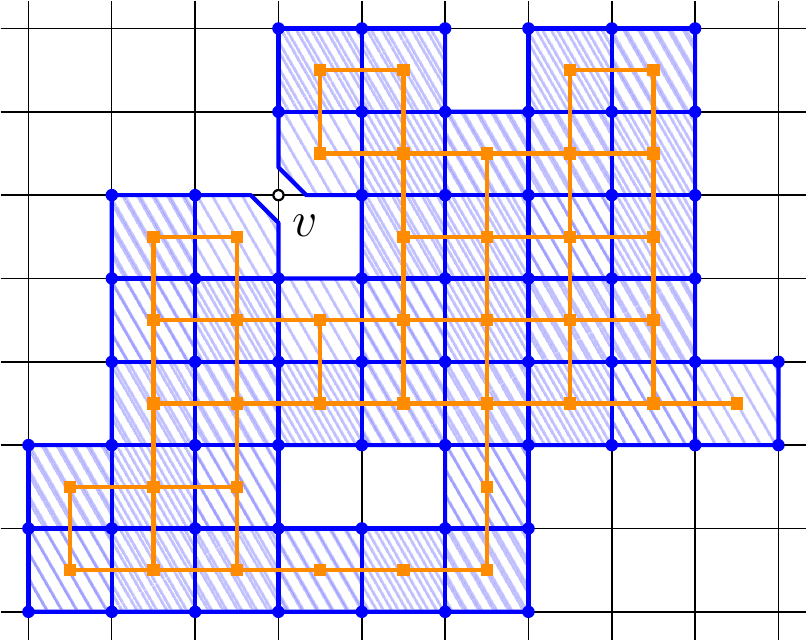}
    \caption{(a): $G := \UDG(V)$ (\textcolor[rgb]{0.00,0.00,0.00}{black}), polygon $P$ (\textcolor[rgb]{1.00,0.00,0.00}{red}), and cell polygon $P'$ (\textcolor[rgb]{0.00,0.07,1.00}{blue}). (b): grid graph $\Gamma$ (\textcolor[rgb]{1.00,0.50,0.00}{orange}), active grid-cells and cell graph $\Gamma'$ (\textcolor[rgb]{0.00,0.07,1.00}{blue}). The grid vertex $v$ is loose. Note that \boldmath $G$ and $\Gamma$ have a hole. Our routing algorithm on $\Gamma$ would not work for this UDG, but we can still construct~$\Gamma$.}
    \label{fig:grid-polygon}
\end{figure}

\begin{definition}
    \label{def:representatives}
    Let $g \in V_\Gamma$, and let $C$ be the grid cell of which $g$ is the center.
    We define $\mathcal{C}_1(g)$ as the set of vertices of all triangles of $\UDG(V)$ that contain the point $g$. We define $\mathcal{C}_2(g)$ as the set of vertices incident to an edge that intersects $C$.

    We define the \emph{set of candidate representatives} $\mathcal{C}(g)$ as $\mathcal{C}_1(g) \cup \mathcal{C}_2(g)$.

    The representative of $g$ is defined as $r = \argmin_{v\in \mathcal C_1(g)} \|v-g\|$ if $\mathcal{C}_1(g)$ is non-empty, and $r = \argmin_{v\in \mathcal C_2(g)} \|v-g\|$ otherwise. In either case, we break ties by smallest node ID.
\end{definition}

\subsection{Properties of the Grid Graph}
\label{sec:active-nodes}

The next step is to show that the grid abstraction introduced in Definition \ref{def:representatives} represents the UDG well. In this section we prove several properties to this effect: we show that nodes are adjacent to the representative of the cell which they are in (Lemma~\ref{lem:hops_node_rep}); that representatives for adjacent grid cells are close (Lemma~\ref{lem:rep_grid_3_hops}); and that the cell polygon $P'$ is simple (Lemma~\ref{lem:cell-polygon-simple}).

\begin{lemma}
	\label{lem:hops_node_rep}
	Let $u,r\in V$. If $r$ is the representative of the cell $C$ containing $u$, then $\hop_G(u,r)\leq 1$
\end{lemma}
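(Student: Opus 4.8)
The plan is to prove the (equivalent, and slightly stronger) statement that $\|u - r\| \le 1$; this immediately yields $\hop_G(u,r)\le 1$, since then either $u=r$ (distance $0$) or $\{u,r\}$ is an edge of $\UDG(V)$ by definition of a unit-disk graph. Write $g$ for the center of the cell $C$ containing $u$, and $c=\frac{1}{10}\sqrt{15}$ for the grid side length. Since $u$ and $g$ both lie in the square $C$ of side $c$ and $g$ is its center, $\|u-g\|\le \frac{c}{\sqrt 2}$ (center-to-corner distance). By the triangle inequality it then suffices to bound $\|g-r\|$. I would first record that $r$ is well defined: $C$ is active, so it meets the contour polygon $P$, which is a union of triangles and edges of $\UDG(V)$; by connectedness of $C$ this forces either $g$ to lie in some triangle of $\UDG(V)$ (so $\mathcal C_1(g)\neq\emptyset$) or some edge of $\UDG(V)$ to intersect $C$ (so $\mathcal C_2(g)\neq\emptyset$), hence $\mathcal C(g)\neq\emptyset$. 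The bound on $\|g-r\|$ then splits according to the two cases in Definition~\ref{def:representatives}.

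\emph{Case 1: $\mathcal C_1(g)\neq\emptyset$, so $r=\argmin_{v\in\mathcal C_1(g)}\|v-g\|$.} Pick any triangle $T$ of $\UDG(V)$ with $g\in T$; its three vertices lie in $\mathcal C_1(g)$ and all its side lengths are at most $1$. The key geometric fact I would use is: any point $X$ of a triangle $ABC$ with all sides at most $1$ is within distance $1/\sqrt 3$ of some vertex. Indeed, one of the angles $\angle AXB,\angle BXC,\angle CXA$ (which together equal $2\pi$) is at least $2\pi/3$, say $\angle AXB$; the law of cosines then gives $\|A-B\|^2 \ge \|XA\|^2+\|XB\|^2+\|XA\|\|XB\| \ge 3\min(\|XA\|,\|XB\|)^2$, so $\min(\|XA\|,\|XB\|)\le \|A-B\|/\sqrt 3 \le 1/\sqrt 3$. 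Applying this with $X=g$ and $T$ gives $\|g-r\|\le 1/\sqrt 3$, hence $\|u-r\|\le \frac{c}{\sqrt 2}+\frac{1}{\sqrt 3} < 1$.

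\emph{Case 2: $\mathcal C_1(g)=\emptyset$, so $r=\argmin_{v\in\mathcal C_2(g)}\|v-g\|$.} Then there is an edge $e=\{a,b\}$ of $\UDG(V)$ meeting $C$, with $a,b\in\mathcal C_2(g)$. Let $p$ be a point of $e\cap C$ and let $q$ be the point of the segment $e$ nearest to $g$; since $p,g\in C$ we get $\dist(g,e)=\|g-q\|\le\|g-p\|\le\frac{c}{\sqrt 2}$. If $q\in\{a,b\}$ then already $\min(\|g-a\|,\|g-b\|)\le\frac{c}{\sqrt 2}$. Otherwise $q$ lies in the relative interior of $e$, so $gq\perp ab$ and $\|g-a\|^2=\|g-q\|^2+\|q-a\|^2$, similarly for $b$; since $\|q-a\|+\|q-b\|=\|a-b\|\le 1$ we have $\min(\|q-a\|,\|q-b\|)\le \frac12$, so $\min(\|g-a\|,\|g-b\|)\le\sqrt{c^2/2+1/4}$. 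In either subcase $\|g-r\|\le\sqrt{c^2/2+1/4}$, and therefore $\|u-r\|\le\frac{c}{\sqrt 2}+\sqrt{c^2/2+1/4}<1$ (with $c^2=\frac{3}{20}$ this bound is $\approx 0.844$).

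The only mildly delicate points are the plane-geometry estimate in Case~1 (the $2\pi/3$-angle argument) and checking that the subcase split in Case~2 is exhaustive; I do not expect a real obstacle, since the chosen side length $c=\frac{1}{10}\sqrt{15}$ leaves comfortable slack in both inequalities (it is presumably tuned for the tighter $3$-hop estimate of Lemma~\ref{lem:rep_grid_3_hops}, not for this one).
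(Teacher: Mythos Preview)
Your proof is correct, but it follows a somewhat different line than the paper's. The paper's key observation is that $u$ itself is always a candidate: since $u\in C$ has an incident edge (the UDG is connected), that edge meets $C$ and so $u\in\mathcal C_2(g)\subseteq\mathcal C(g)$. In the easy case (in particular whenever $\mathcal C_1(g)=\emptyset$) this immediately gives $\|g-r\|\le\|g-u\|$ by the $\argmin$ property, hence $\|u-r\|\le 2\|g-u\|\le\sqrt 2\,c$. Only when $r\in\mathcal C_1(g)$ but the argmin comparison with $u$ is unavailable does the paper resort to an edge-geometry estimate essentially identical to your Case~2 computation, obtaining the same bound $\sqrt{c^2/2+1/4}+c/\sqrt 2$.

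By contrast, you bound $\|g-r\|$ \emph{absolutely}, without reference to $u$: the $2\pi/3$-angle argument giving $\|g-r\|\le 1/\sqrt 3$ whenever $\mathcal C_1(g)\neq\emptyset$ is a clean standalone fact the paper does not use. This makes your Case~1 entirely self-contained, at the price of a slightly looser constant. Conversely, your Case~2 is more work than needed: once you notice $u\in\mathcal C_2(g)$, the paper's one-line argmin argument replaces your orthogonal-projection computation. Both routes are valid; the paper's is more economical in Case~2, yours is more modular in Case~1.
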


Intuitively, this is true because $u$ must be close to the centre of $C$, as must $r$: even if these nodes are different they cannot be too far apart.

\begin{proof}
	Let $g$ be the center of $C$. Because $u$ lies in $C$, we have $u\in \mathcal{C}(g)$. If $r\in \mathcal{C}_2(g)$ or $u\notin \mathcal{C}_2(g)$, we have $\|g-r\|\leq \|g-u\|$, because otherwise $r$ would not be the representative of $g$. Together with the triangle inequality, we get $\|r-u\| \leq \|g-r\| + \|g-u\|\leq 2\|g-u\|\leq \sqrt{2}\cdot c =\frac{1}{10}\sqrt{30}\leq 1$.
	
	Otherwise, we have $r\in \mathcal{C}_1(g)$ and $u\in \mathcal{C}_2(g)$. This means $g$ lies inside a triangle of $G$ and $u$ does not. So, there is an edge $e$ of a triangle of $G$ that intersects $C$. Let $g'$ be the closest point to $g$ on $e$. We have $\|g'-g\|\leq \frac{1}{2}\sqrt{2} \cdot c$ because $e$ intersects $C$. There is an endpoint $v$ of $e$ with $\|v-g'\|\leq \frac{1}{2}$, because the length of $e$ is at most $1$. Since $gg'$ is orthogonal to $e$, we have $\|v-g\| = \sqrt{\|v-g'\|^2 + \|g-g'\|^2}$. Therefore, we have $\|r-u\| \leq \|g-r\| + \|g-u\| \leq \|v-g\| + \|g-u\| \leq \sqrt{\frac{1}{4} + \frac{1}{2}c^2} + \frac{1}{2}\sqrt{2} \cdot c \leq 1$.
\end{proof}

Next, we show that the representatives of adjacent grid cells are within $3$ hops of each other.

\begin{lemma}
	\label{lem:rep_grid_3_hops}
	Let $(g_1, g_2) \!\in\! E_\Gamma$ be an edge in $\Gamma$. Let $u, v \!\in\! V$ be representatives of $g_1, g_2$ respectively. Then $\hop_G(u,v) \!\leq\! 3$.
\end{lemma}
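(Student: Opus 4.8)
The plan is to assemble a path of at most three hops from $u$ to $v$ out of ``unit-length'' building blocks, using Lemma~\ref{lem:hops_node_rep} as the elementary link together with a few numeric facts tuned to the side length $c=\frac{1}{10}\sqrt{15}$: (i) any two points lying in the union $C_1\cup C_2$ of two edge-adjacent cells are at Euclidean distance at most $\sqrt{(2c)^2+c^2}=c\sqrt5=\frac{\sqrt3}{2}<1$, so any two \emph{nodes} of $V$ in $C_1\cup C_2$ are adjacent in $G$; (ii) every edge of $G$ has length $\le1$, so any point on it lies within $\frac12$ of one endpoint; (iii) every triangle of $G$ has diameter $\le1$, so if such a triangle contains a point $p$ then all its vertices lie within $1$ of $p$ and the nearest vertex is within $1/\sqrt3$ of $p$ (the covering radius of a triangle with unit-bounded sides); and (iv) $1/\sqrt3+c<1$ and related inequalities, in the spirit of the constant-chasing in the proof of Lemma~\ref{lem:hops_node_rep}. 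Since the center of a cell is within $\frac{\sqrt2}{2}c$ of every point of that cell, each ``distance from the witnessing structure'' bound below immediately becomes a bound on the distance from $g_1$ or $g_2$.

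First I would dispose of the easy case: if both $C_1$ and $C_2$ contain a node of $V$, pick $p_1\in V\cap C_1$ and $p_2\in V\cap C_2$; then $\hop_G(u,p_1)\le1$ and $\hop_G(p_2,v)\le1$ by Lemma~\ref{lem:hops_node_rep}, and $\{p_1,p_2\}\in G$ by (i), so $\hop_G(u,v)\le3$. The substantive part is when some cell, say $C_1$, contains no node of $V$; then, as $C_1$ is active, either an edge $\{a,b\}\in G$ crosses $C_1$ with $a,b\notin C_1$, or $C_1$ lies in the interior of a triangle of $G$ (forcing $g_1$ into a triangle). I would run the proof as a case analysis over the ``type'' of each center $g_i$: \emph{type A} ($\mathcal C_1(g_i)\neq\emptyset$): the representative is the nearest-to-$g_i$ vertex of a triangle $T_i$ of $G$ containing $g_i$, hence within $1/\sqrt3$ of $g_i$ by (iii); \emph{type B} ($\mathcal C_1(g_i)=\emptyset$ but $V\cap C_i\neq\emptyset$): a node $p_i\in V\cap C_i$ lies in $\mathcal C_2(g_i)$, so the representative is within $\frac{\sqrt2}{2}c$ of $g_i$; \emph{type C} ($\mathcal C_1(g_i)=\emptyset$, $V\cap C_i=\emptyset$): some edge crosses $C_i$, and by (ii) the representative (the nearest-to-$g_i$ endpoint over all edges crossing $C_i$) is within $\frac12+\frac{\sqrt2}{2}c$ of $g_i$. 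In each resulting combination I would exhibit an explicit $\le3$-hop path from $u$ to $v$, using as intermediate nodes: any node of $V$ in $C_1$ or $C_2$ when available (adjacent to its representative by Lemma~\ref{lem:hops_node_rep}, and to each other by (i)); the other two vertices of a witnessing triangle $T_i$ (adjacent to the representative, as $T_i$ is a $3$-cycle in $G$); or the far endpoint of a witnessing edge crossing $C_i$. The arithmetic engine is that a type-A/B/C bound for $g_1$, plus the gap $c=\|g_1-g_2\|$, plus one more application of (i)--(iii), drives the relevant pairwise distance to $\le1$ on a suitable pair of intermediate nodes; crucially, a node-free cell forces \emph{both} endpoints of each edge crossing it to lie outside it, which together with the unit edge length and the smallness of $C_1\cup C_2$ keeps those endpoints from drifting far past the two-cell region.

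I expect the main obstacle to be the type-A/type-A combination with \emph{both} cells node-free: both $g_1,g_2$ lie in triangles $T_1,T_2$ of $G$, the representatives $u\in V(T_1)$, $v\in V(T_2)$ are only guaranteed within $1/\sqrt3$ of their centers, and $T_1,T_2$ need share neither a vertex nor an edge, so $\|u-v\|$ can be as large as $2/\sqrt3+c\approx1.54$ and no single intermediate node is handed to us. The key sub-lemma to establish there is that, for any two triangles of $G$ whose contained points $g_1,g_2$ are at distance $c$, there is a vertex $w_1$ of $T_1$ and a vertex $w_2$ of $T_2$ with $\|w_1-w_2\|\le1$; this gives the path $u\to w_1\to w_2\to v$. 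I would prove it geometrically: a vertex of $T_1$ nearest to $g_2$ is within $1/\sqrt3+c$ of $g_2$ (the distance from $g_2$ to $T_1$ is at most $\|g_1-g_2\|=c$ since $g_1\in T_1$; project and apply (iii)); and then, exploiting that $T_1$ and $T_2$ each have \emph{all} sides $\le1$ --- so three vertices all far from a single nearby point would force an over-long side --- one rules out the configuration where every $T_1$-vertex is more than $1$ from every $T_2$-vertex. This is precisely where the value $c=\frac{1}{10}\sqrt{15}$ is needed, exactly as the delicate constant bound appears at the end of the proof of Lemma~\ref{lem:hops_node_rep}.
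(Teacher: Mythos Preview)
Your plan diverges from the paper's, and the type-A/type-A case you single out is exactly where the sketch has a real gap. The step ``three vertices all far from a single nearby point would force an over-long side'' fails as stated: take $T_2$ equilateral with unit sides and vertices $A=(0,0)$, $B=(1,0)$, $C=(\tfrac12,\tfrac{\sqrt3}{2})$, and $w_1=(\tfrac12,-y)$ with $y$ just above $\tfrac{\sqrt3}{2}$. Then $\|w_1-A\|=\|w_1-B\|>1$ and $\|w_1-C\|>1$, yet the nearest point of $T_2$ to $w_1$ (the midpoint of $AB$) is at distance $y\approx 0.867<\tfrac{1}{\sqrt3}+c\approx0.964$. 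So knowing only that one $T_1$-vertex sits within $\tfrac{1}{\sqrt3}+c$ of $T_2$ does \emph{not} force a short side in $T_2$, and bringing in the other $T_1$-vertices leads to a multi-body argument you have not supplied. Your sub-lemma is true, but this is not a proof of it.

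The paper sidesteps all of this by working with \emph{edges} rather than vertices. Its case split has only two branches. If $u$ (or $v$) happens to be a vertex of a triangle containing \emph{both} $g_1$ and $g_2$, then $u$ and $v$ are both candidate representatives for the same grid node, and an auxiliary lemma (Lemma~\ref{lem:rep-candidates-3-hops}) gives $\hop_G(u,v)\le3$ directly. Otherwise, the paper observes that each representative has a \emph{witnessing edge} of $G$ that meets the $2c\times c$ rectangle $C_1\cup C_2$ and whose two endpoints are within one hop of that representative: if the representative comes from $\mathcal C_1$, its triangle contains exactly one of $g_1,g_2$, so some side of the triangle crosses the segment $g_1g_2$; if it comes from $\mathcal C_2$, the defining edge already meets the cell. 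The two witnessing edges then lie at distance at most $\sqrt{(2c)^2+c^2}=c\sqrt5=\tfrac{\sqrt3}{2}$, and the paper's key geometric tool, Lemma~\ref{lem:close-segments-connect} (any two segments of length $\le1$ at distance $\le\tfrac{\sqrt3}{2}$ have a pair of endpoints at distance $\le1$), supplies the middle hop. This collapses all of your A/B/C combinations into a single uniform argument; in particular, specialised to your A/A case it \emph{is} the clean proof of your sub-lemma, via boundary edges of the two triangles rather than vertex-distance counting.
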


\begin{proof}
	We consider two cases: First, suppose either $u$ or $v$ is a vertex of a triangle that contains both $g_1$ and $g_2$. Then $u$ and $v$ are both candidates for the same grid node, so $\hop_G(u,v)\leq 3$ by Lemma~\ref{lem:rep-candidates-3-hops}.
	
	Otherwise, both $u$ and $v$ are either a vertex of a triangle of which the boundary intersects the segment $g_1g_2$ (because the triangle contains exactly one of two grid nodes) or an endpoint of an edge that intersects the cell of $g_1$ or $g_2$. In both cases, there exists an edge $e(u)$ that (i) intersects the union of the cells belonging to $g_1$ and $g_2$, and (ii) has both ends within hop-distance at most $1$ to $u$. Analogously, there exists a segment $e(v)$ with the same properties.
	
	Since $e(u)$ and $e(v)$ both intersect a rectangle of size $2c$ by $c$, their distance is at most $\sqrt{(2c)^2 + c^2}=\sqrt{5}\cdot c =\frac{1}{2}\sqrt{3}$. By Lemma~\ref{lem:close-segments-connect}, one end of $e(u)$ is adjacent to an end of $e(v)$. Since both ends of $e(u),e(v)$ have distance at most $1$ to $u,v$ respectively, we have $\hop_G(u,v)\leq 3$.
\end{proof}

We show that the edges which define $\mathcal C(g_1)$ and $\mathcal C(g_2)$ are at most the diagonal of a $2 \times 1$ block of grid cells apart.
We conclude that this distance is small enough that an edge connects an endpoint of one edge with an endpoint from the other, and so the representatives of adjacent cells have distance at most $3$ from each other.

Finally, we show that $P'$ is simple, i.e., it has no holes. We show this by observing that if there is a hole in $P'$, there is a cycle of active cells with an inactive cell in its interior. We show this cycle of cells contains a cycle of $G$, which implies $G$ contains a radio-hole.

\begin{lemma}
	\label{lem:cell-polygon-simple}
	If $G$ has no holes, then $P'$ is simple.
\end{lemma}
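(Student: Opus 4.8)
The plan is to prove the contrapositive in a slightly stronger form: if $P'$ has a hole, then $G = \UDG(V)$ has a radio-hole, i.e., the contour polygon $P$ is not simple. Recall $P'$ is obtained from the union of active grid-cells by clipping small triangles near loose vertices. A hole of $P'$ is a bounded connected component of $\mathbb{R}^2 \setminus P'$. Since the clipped triangles are chosen small enough to be disjoint from $P$ and merely remove loose corners, a hole of $P'$ forces the existence of at least one \emph{inactive} grid-cell (or a loose-corner gap) that is ``surrounded'' by active cells: more precisely, there is a cycle of active grid-cells (consecutive cells sharing an edge, or meeting only at a non-loose corner that lies in $P$) enclosing a bounded region containing at least one point not in $P$.

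\medskip

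First I would set up the combinatorial topology carefully. Consider the set $\calA$ of active cells as a subcomplex of the plane. A hole in $P'$ corresponds to a bounded complementary face; by a standard planar duality / Jordan-curve argument, its boundary traces a closed walk through active cells. I would extract from this an \emph{innermost} enclosing cycle $\calS = (C_0, C_1, \dots, C_{k-1}, C_0)$ of distinct active cells such that consecutive cells are edge-adjacent or share a non-loose corner, and such that $\calS$ encloses a region $R$ with $R \setminus P \neq \emptyset$ (this uses that the clipped triangles near loose vertices are disjoint from $P$, so the hole of $P'$ cannot be ``explained away'' purely by loose-corner clipping — there must be genuine non-$P$ area inside). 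The key point to extract is: $\calS$ is a cyclic sequence of active cells going around a region that is not entirely covered by $P$.

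\medskip

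Second, I would convert the cycle of cells into a cycle in $G$. For each active cell $C_i$, by definition $C_i \cap P \neq \emptyset$, so $C_i$ either meets an edge of $\UDG(V)$ or lies partly inside a triangle of $\UDG(V)$; in either case $C_i$ contains (or is very near, within the $c = \tfrac{1}{10}\sqrt{15}$ scale) a vertex of $G$ — here I expect to reuse the geometric estimates already developed for Lemmas~\ref{lem:hops_node_rep} and~\ref{lem:rep_grid_3_hops}, which show that edges/triangles intersecting adjacent cells have endpoints at bounded hop-distance, hence at Euclidean distance $\le 1$, so they are connected by $G$-edges. Threading these vertices together as $i$ runs around $\calS$ yields a closed walk $W$ in $G$ whose drawing stays within a bounded annular neighborhood of $\calS$ and therefore encloses the non-$P$ region $R$ (or a sub-region of it). Since $P$ is the union of all triangles and edges of $G$, the bounded region enclosed by $W$ that contains a point of $R \setminus P$ witnesses a hole of $P$: the closed walk $W$ lies in $P$ but cannot be ``filled in'' by triangles of $G$ (such a triangle would cover part of $R$, contradicting $R \setminus P \neq \emptyset$ — or at least, a minimal non-fillable sub-cycle exists). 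Hence $P$ has a hole, i.e., $G$ has a radio-hole, completing the contrapositive.

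\medskip

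The main obstacle I anticipate is the bookkeeping in the middle step: making precise ``the cycle of active cells encloses a point outside $P$'' and ruling out the degenerate scenario where the apparent hole in $P'$ is entirely an artifact of clipping triangles at loose vertices. I would handle this by exploiting exactly the defining property of loose vertices (a loose vertex is not in $P$, and the two active cells at a loose corner are \emph{not} adjacent, so the clipping genuinely separates them in $P'$) together with the fact that the clipped triangles are disjoint from $P$: this means $P \cap (\text{union of active cells}) = P \cap P'$, so any hole of $P'$ that intersects the interior of the active-cell union restricts to a region avoiding $P$. The second delicate point is ensuring the $G$-walk $W$ we build is genuinely a cycle (not just a walk) enclosing the bad region — here I would take a minimal such cycle, i.e., pass to a chordless sub-cycle, and argue that the interior of that sub-cycle still meets $\mathbb{R}^2 \setminus P$, so it cannot be triangulated inside $G$, which is precisely the definition of a radio-hole. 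The Euclidean/hop estimates needed to build $W$ are routine adaptations of the lemmas already proved and I would not expect surprises there.
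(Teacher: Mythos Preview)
Your approach is essentially the paper's: prove the contrapositive by finding a cycle of active cells around the hole of $P'$, thread a $G$-cycle through those cells, and observe it encloses a region outside $P$. One simplification you miss, which the paper exploits, is that every cell $C_i$ in the surrounding cycle necessarily shares a corner with an \emph{inactive} cell (because it touches the hole's boundary chain $\Pi$, whose vertices are non-loose by construction of $P'$). Hence $C_i$ is not entirely contained in $P$, so $\partial P$ --- which consists of $G$-edges --- actually crosses $C_i$. This lets the paper pick a $G$-edge $e_i$ \emph{inside each $C_i$}, and then connect consecutive $e_i$'s by a single $G$-edge via Lemma~\ref{lem:close-segments-connect} (distance $\le \sqrt{5}\,c = \tfrac{1}{2}\sqrt{3}$). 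That keeps the resulting $G$-cycle glued tightly to the ring of cells, so the ``encloses an inactive cell'' step is immediate. Your plan via representatives and the 3-hop Lemma~\ref{lem:rep_grid_3_hops} also works, but the walk can stray by several units from the cells, so verifying that it still winds around the hole needs an extra (routine but slightly fiddly) homotopy argument that the paper avoids.
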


\begin{proof}
	We prove the contrapositive. Suppose $P'$ is not simple. Then, one of the polygonal chains $\Pi$ bounding $P'$ contains only inactive grid cells in its interior. Let $A$ be the set of active cells with a vertex of $\Pi$ on its boundary. Because $\Pi$ does not contain any loose vertices, $A$ forms a cycle of cells where adjacent cells share an edge. Since $\Pi$ contains only inactive cells in its interior, $A$ lies outside of $\Pi$. Each cell in $A$ shares a corner with a grid-cell that is not active, because all  vertices of $\Pi$ lie on the boundary of an inactive grid cell. So, each cell in $A$ intersects the boundary of $P$ and therefore intersects an edge in $G$. Consider two adjacent cells $C_1,C_2$ in $A$ containing edges $e_1,e_2$, respectively. These edges intersect a rectangle of size $2c$ by $c$, so their distance is at most $\sqrt{(2c)^2 + c^2}=\sqrt{5}\cdot c =\frac{1}{2}\sqrt{3}$. By Lemma~\ref{lem:close-segments-connect}, one end of $e_1$ is adjacent to an end of $e_2$ in $G$. Since $A$ lies outside $\Pi$ there is a cycle in $G$ that has an inactive cell in its interior. This means the cycle bounds a region of $\mathbb{R}^2\setminus P$. But then the polygon $P$ of $G$ is not simple, so $G$ has a radio-hole.
\end{proof}

\subsection{Grid Graph Representation, Computation and Simulation}
\label{sec:gridrep}

Building on the previous subsections, we show that we can efficiently simulate the grid graph $\Gamma$ with a sub-graph $R = (V_R,E_R)$ of the UDG $G$ which we call a \textit{representation} of $\Gamma$ in $G$ which closely approximates the structure of $\Gamma$.
In a nutshell: the set of nodes $V_R$ contains the set of representatives of all grid nodes $V_\Gamma$. On top of that, for each grid edge in $E_\Gamma$, we add a path in the UDG $G$ to $R$ between two representatives of the corresponding grid nodes (see example in Figure \ref{fig:representative-edge}). Note that in the previous subsection we have shown the existence of such paths that have at most 3 hops.

\begin{figure}
    \centering
    \includegraphics[height=2.3cm]{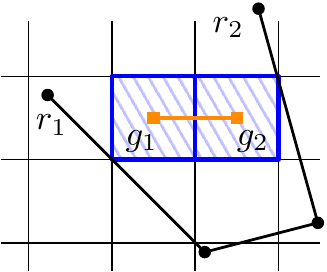}
    \caption{Representatives $r_1,r_2$ of adjacent grid nodes $g_1,g_2$ are connected by a path of $3$ hops.}
    \label{fig:representative-edge}
\end{figure}

The first goal of this subsection is to thoroughly define $R$ and to show that we can compute $R$ in $G$ according to that definition in $\bigO(1)$ rounds.
The second goal is to give an interfacing theorem for later sections that purely work with $\Gamma$, showing that a round of \HYBRID in the grid graph $\Gamma$ can be simulated in $\bigO(1)$ rounds by the nodes in $R$. By \emph{simulation}, we mean that one round of local communication between adjacent grid nodes in $\Gamma$ can be performed using $\bigO(1)$ rounds of local communication in $G$ to route messages between the representatives of adjacent grid nodes. An analogous property holds for the global communication.

\begin{definition}
    \label{def:grid_graph_represenation}
    Let $\Gamma = (V_\Gamma, E_\Gamma)$ be the grid graph as defined in Section \ref{sec:gridgraph}. A \emph{representation} $R = (V_R,E_R)$ of $\Gamma$ in $G$ is a sub-graph of $G$ defined as follows. For every grid node $g \in V_\Gamma$ with representative $r \in V$ we define: $r \in V_R$. For each edge $\{g_1,g_2\} \in E_\Gamma$ let $r_1,r_2 \in V$ be the corresponding representatives. Then $R$ contains all nodes and edges of one $r_1$-$r_2$-path $\Pi_{r_1,r_2}$ in $G$ such that $|\Pi_{r_1,r_2}| \leq 3$. We call $\Pi_{r_1,r_2}$ the representation of the edge $\{g_1,g_2\}$. Note that such a path always exists due to Lemma \ref{lem:rep_grid_3_hops}.
\end{definition}

\begin{lemma}
    \label{lem:compute_representation}
    A representation $R=(V_R,E_R)$ of $\Gamma$ can be computed in $\bigO(1)$ rounds.
\end{lemma}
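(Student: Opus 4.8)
The plan is to exhibit an $\bigO(1)$-round local-communication procedure that produces, at each node, exactly the information that defines $R$ (Definition~\ref{def:grid_graph_represenation}): namely, whether the node is a representative and for which cell(s), and, for each grid edge, one fixed $\le 3$-hop connecting path together with the roles of its internal nodes. I would first observe that the candidate-set memberships of Definition~\ref{def:representatives} can be computed with \emph{no} communication at all. Since $G=\UDG(V)$, two nodes are adjacent iff their Euclidean distance is at most $1$, and every node $v$ already knows the coordinates and IDs of all of its neighbours. Hence $v$ can locally (i) for every cell $C$ whose center $g_C$ lies within distance $1$ of $v$, run over all pairs $w,x$ of its neighbours with $\|w-x\|\le 1$ and test whether $g_C\in\triangle vwx$, which determines precisely the cells with $v\in\mathcal C_1(g_C)$; and (ii) for each incident edge compute the $\bigO(1)$ cells it intersects, which determines the cells with $v\in\mathcal C_2(g_C)$. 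Only $\bigO(1)$ cells are relevant to $v$ (all within distance $1$). This sidesteps the usual hardness of triangle detection in \CONGEST: in a UDG adjacency is a function of the coordinates, which $v$ already has.

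Next I would elect, for each active cell $C$, its representative. By Definition~\ref{def:representatives} this is the node minimizing the pair $(\|v-g_C\|,\id(v))$ lexicographically over $v\in\mathcal C_1(g_C)$, or over $v\in\mathcal C_2(g_C)$ when $\mathcal C_1(g_C)=\emptyset$; by Step 1 every candidate already knows it is a candidate of $C$, and by Lemma~\ref{lem:rep-candidates-3-hops} (the fact underlying Lemma~\ref{lem:rep_grid_3_hops}, that any two candidates of the same grid node are within hop-distance $3$ in $G$) all candidates of $C$ lie in each other's $3$-hop ball. So I would run, for a constant number of rounds, a flooding/convergecast in which every node repeatedly forwards to all its neighbours, for each cell within a fixed constant distance, the lexicographically smallest candidate pair it has seen for that cell. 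After $\bigO(1)$ rounds every candidate of $C$ knows the winner, i.e. the representative $r_C$, together with its ID and location. Each node is relevant to only $\bigO(1)$ cells, so each message carries only $\bigO(1)$ pairs of $\bigO(\log n)$ bits, respecting the \CONGEST bandwidth.

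Finally I would construct $E_R$. Every grid edge $\{g_1,g_2\}$ joins cells $C_1,C_2$ that are horizontally or vertically adjacent, so their representatives $r_1,r_2$ are within distance $\bigO(1)$; choosing the convergecast radius of the previous phase large enough to cover $C_2$, node $r_1$ now knows $\id(r_2)$ and its location. For each of its $\le 4$ incident grid edges, $r_1$ initiates a depth-$\le 3$ breadth-first search for $r_2$ in $G$ (forward a token for three rounds; by Lemma~\ref{lem:rep_grid_3_hops} it reaches $r_2$). Then $r_2$ returns along the discovered path a message containing that path (at most three IDs, hence $\bigO(\log n)$ bits), breaking ties canonically, say in favour of the lexicographically smallest ID-sequence, and the $\le 2$ internal nodes record that they lie on $\Pi_{r_1,r_2}$. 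Every node relays at most $\bigO(1)$ such searches, since only $\bigO(1)$ representatives lie within $3$ hops and each initiates at most $4$; so this too takes $\bigO(1)$ rounds of \CONGEST. Collecting the chosen paths over all grid edges, together with $V_R$ from the previous phase, yields a valid representation $R$.

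The step I expect to need the most care is keeping every communication phase within the \CONGEST bandwidth bound despite the fact that node degrees in a UDG can be arbitrarily large, so that a node cannot afford to learn its full $\bigO(1)$-hop neighbourhood. The resolution, as above, is that the "triangle" part of the candidate sets is computed with zero communication because adjacency is a known function of coordinates, and that the only data that must actually travel — candidate pairs during the election, and search tokens and return paths during the edge construction — is indexed by the $\bigO(1)$ grid cells near each node, so only $\bigO(1)$ words of $\bigO(\log n)$ bits cross any edge in any round. One should also verify that each phase still works in \BCONGEST (broadcasting the same packet to all neighbours suffices in the flooding, in the BFS token forwarding, and in returning the path), matching the remark after the model definition.
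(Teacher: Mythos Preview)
Your proposal is correct and follows essentially the same approach as the paper: elect representatives by a constant-round local flooding among candidates (the paper's subroutine $\mathcal A_g$, justified by Lemma~\ref{lem:rep-candidates-3-hops}), then fix $\le 3$-hop paths between representatives of adjacent cells via constant-depth BFS (the paper's subroutine $\mathcal B_r$), with parallelization guaranteed because every node participates in only $\bigO(1)$ cell-indexed computations (the paper's Lemma~\ref{lem:parallelize-algorithms}, which is exactly your ``only $\bigO(1)$ cells are relevant to each node'' argument). Your write-up is more explicit than the paper about why $\mathcal C_1(g)$-membership requires no communication in a UDG and about the \CONGEST bandwidth accounting, but the algorithmic skeleton and the supporting lemmas are the same.
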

\begin{proof}
    We first compute the representatives of all grid nodes $g \in V_\Gamma$ in $\bigO(1)$ rounds (see Lemma \ref{lem:decide_representatives}). The representatives $r_g, g \in V_\Gamma$ will serve as source nodes to run all the $\mathcal B_{r_g}$ in parallel. From Lemma~\ref{lem:candidates_are_close_to_grid_nodes}, we have $\|r_g, g\| \leq (1 + \frac{c}{\sqrt{2}})$. By construction of $\mathcal B_{r_g}$, only nodes that are within distance 3 of $r_g$ can ever receive a message from $\mathcal B_{r_g}$. Hence only nodes within distance $\big(4 + \frac{c}{\sqrt{2}}\big)$ of $g$ can participate in $\mathcal B_{r_g}$. Thus, by Lemma \ref{lem:parallelize-algorithms} we can run the $\mathcal B_{r_g}$ in parallel.

    Due to Lemma \ref{lem:rep_grid_3_hops} we have $\hop_G(r,r') \leq 3$ for any pair of representatives $r,r'$ whose corresponding grid nodes $g,g'$ are adjacent in $\Gamma$. Hence we will add exactly one path $\Pi_{r,r'}$ (including all nodes on $\Pi_{r,r'}$) to $R$; either the branch in $T_{r'}$ in which $r$ is located in case ID($r$) $>$ ID($r'$), or vice versa.
\end{proof}

\begin{theorem}
    \label{thm:hybrid_on_grid}
    A round of the \HYBRID model in $\Gamma$ can be simulated in $\bigO(1)$ rounds.
\end{theorem}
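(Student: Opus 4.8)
The plan is to show that both communication primitives of the \HYBRID model on $\Gamma$ — a round of local communication along edges of $E_\Gamma$, and a round of global ($\NCC_0$) communication — can each be carried out using $\bigO(1)$ rounds of \HYBRID communication among the nodes of the representation $R$ in $G$. Since each grid node $g$ is simulated by a unique representative $r_g \in V_R$ (Lemma~\ref{lem:compute_representation}), it suffices to describe how $r_g$ receives exactly the information $g$ would receive in one \HYBRID round on $\Gamma$.

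\textbf{Local mode on $\Gamma$.}
Fix an edge $\{g_1,g_2\} \in E_\Gamma$ with representatives $r_1,r_2$. By Definition~\ref{def:grid_graph_represenation}, $R$ contains a concrete $r_1$--$r_2$ path $\Pi_{r_1,r_2}$ of at most $3$ hops, and every node of $R$ knows which of the (at most $4$) grid edges incident to its own grid node it lies on. So $r_1$ simply sends the message destined for $g_2$ along $\Pi_{r_1,r_2}$, hop by hop; after $3$ rounds it arrives at $r_2$. The only subtlety is \emph{congestion}: a single node of $V$ may serve as representative of its own grid cell and simultaneously lie on the $3$-hop paths of several nearby grid edges, so in one simulated round it could be asked to relay $\bigO(1)$ messages — but this is a constant, since a grid node has at most $4$ incident edges and, by Lemma~\ref{lem:parallelize-algorithms} (or the locality bound $\big(4+\tfrac{c}{\sqrt2}\big)$ used in the proof of Lemma~\ref{lem:compute_representation}), only $\bigO(1)$ grid edges have representation paths passing through any fixed UDG node. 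Hence all these relays can be pipelined in $\bigO(1)$ rounds of \CONGEST in $G$, and a simulated grid-local round costs $\bigO(1)$ real rounds. (One should note that $\bigO(1)$ bits in $\Gamma$ correspond to $\bigO(\log n)$-bit messages carrying node locations in $G$, which is exactly what the model permits.)

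\textbf{Global mode on $\Gamma$.}
In $\NCC_0$ on $\Gamma$, a grid node $g$ may send $\bigO(\log n)$ messages to grid nodes whose IDs it has learned; to simulate this, the representatives must be able to address one another in the \emph{global} mode of $G$. The point is that during the computation of $R$ (Lemma~\ref{lem:compute_representation}) each representative $r_g$ can be told the IDs of $r_{g'}$ for all $g'$ adjacent to $g$ (along the same $3$-hop paths), giving it the initial neighbourhood knowledge $\NCC_0$ on $\Gamma$ requires; thereafter, whenever $g$ would learn a new grid-node ID, the simulation lets $r_g$ learn the corresponding representative's ID, which (by the model convention) comes with its location. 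A global round of $\Gamma$ is then executed directly as a global round of $G$: each $r_g$ sends its $\bigO(\log n)$ messages to the appropriate representatives. Since the correspondence $g \leftrightarrow r_g$ is a bijection, no real node exceeds the $\bigO(\log n)$-messages-per-round budget by more than a constant factor, so this also takes $\bigO(1)$ rounds.

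\textbf{Main obstacle.}
The routine parts are correctness of the routing along fixed $3$-hop paths and the message-budget bookkeeping; the step that needs the most care is the \emph{congestion / parallelisation} argument for the local mode — establishing that only $\bigO(1)$ representation paths pass through any single UDG node, so that the per-edge relays of all simultaneously-active grid edges fit into $\bigO(1)$ \CONGEST rounds without blow-up. This is where Lemma~\ref{lem:parallelize-algorithms} and the constant-distance locality bounds from the construction of $R$ do the work, and it is the place a proof must be explicit rather than hand-wave "$\bigO(1)$ paths."
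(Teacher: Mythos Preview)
Your overall approach matches the paper's, but there is one genuine error in the global-mode argument. You assert that ``the correspondence $g \leftrightarrow r_g$ is a bijection,'' and use this to bound the per-node message budget. This is false: a single UDG node can be the representative of several grid nodes (any node lying close to several grid centers can win the tie-breaking for all of them). The paper handles exactly this via Corollary~\ref{cor:constant_responsibility}, which says each UDG node represents only $\bigO(1)$ grid nodes; that is the correct justification for the constant-factor blow-up in the $\NCC_0$ budget.

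Relatedly, you omit a detail the paper treats explicitly: when one representative $r$ simulates grid nodes $g_1,\dots,g_c$, it must assign them \emph{distinct} IDs so that the simulated $\NCC_0$ on $\Gamma$ has well-defined addressing. The paper sets $ID(g_i) := i \circ ID(r)$. Once you replace ``bijection'' by ``at most $\bigO(1)$-to-one (Corollary~\ref{cor:constant_responsibility})'' and add this ID-assignment step, your argument coincides with the paper's; the local-mode part (via the $3$-hop paths and the parallelisation lemma) is already essentially what the paper packages as Lemma~\ref{lem:forward_message_in_R}.
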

\begin{proof}
    We simulate $\Gamma$ with a representation $R$ of $\Gamma$. Lemma \ref{lem:compute_representation} shows that the construction of $R$ can be done in $\bigO(1)$ rounds. The representative $r \in V_R$ of some grid node $g \in V_\Gamma$ is responsible for simulating $g$. If a given representative has to simulate multiple grid nodes $g_1, \dots, g_c$, where $c \in \bigO(1)$ by Corollary \ref{cor:constant_responsibility}, then it assigns the simulated grid node $g_i$ the identifier $ID(g_i) := i \circ  ID(r)$ (where ``$\circ$'' denotes the concatenation of strings and $ID(r)$ has length at least $\lceil \log n \rceil$ with leading zeros if necessary). Since $r$'s ID is unique in $G$, the IDs of $g_1, \dots, g_c$ are unique as well.
    The simulation of the \CONGEST model by the representatives is then covered by Lemma \ref{lem:forward_message_in_R}.

    As a consequence of the Lemma \ref{lem:forward_message_in_R}, representatives of adjacent grid nodes also know each others ID, which fulfills the formal requirement of the $\NCC_0$ model that is used as the global network. By Corollary~\ref{cor:constant_responsibility} each node has only a constant number of grid nodes it needs to simulate so we are also able to simulate a round in the $\NCC_0$ model by either combining messages or via time-multiplexing.
\end{proof}

\section{Constant Stretch Routing Scheme for the UDG}
\label{sec:constant_stretch}

It remains to show how to leverage the grid graph constructed in the previous section for the computation of routing schemes for the UDG assuming that an exact routing scheme for the grid graph is known. We start with the analysis of the approximation factor.

\subsection{From Shortest Paths in $\Gamma$ to Approximate Paths in $G$}

\label{sec:constant_stretch_paths}

The goal of this subsection is to show that shortest paths in the simulated grid graph $\Gamma$ represent good paths in the UDG $G$. In particular, paths in $G$ that are obtained via the representation $R$ of $\Gamma$ are constant approximations of \textit{optimal} paths in $G$, both in terms of hop-length and Euclidean distance. We start by defining a \textit{representative path}.

\begin{definition}
    \label{def:representative_path}
    Let $s,t \in V$. Let $g_s, g_t \in V_\Gamma$ be the two grid nodes which are located in the same grid cell as $s,t$ respectively. Let $r_s,r_t$ be the representatives of $g_s$ and $g_t$. Note that $\{s,r_s\},\{r_t,t\} \in E_G$ due to Lemma \ref{lem:hops_node_rep}. Consider an optimal $g_s$-$g_t$-path $\Pi^*$. For $e \in \Pi^*$ let $\Pi_e$ be the representation of the grid edge $e \in E_\Gamma$ (see Definition \ref{def:grid_graph_represenation}). Let $\Pi_{r_s,r_t} := \bigcup_{e \in \Pi^*} \Pi_e$. We define the \emph{representative $s$-$t$-path} as $\Pi_{s,t} := \{\{s,r_s\}\} \cup  \Pi_{r_s,r_t} \cup \{\{r_t,t\}\}.$
\end{definition}

We will show that our routing scheme routes packets from $s$ to $t$ along the representative path $s$-$t$-path $\Pi_{s,t}$. First, we show that these paths achieve constant stretch in~$G$.

\begin{theorem}
    \label{thm:grid-routing-guarantees}
    Let $s,t \in V$. Let $\Pi_{s,t}$ be the $s$-$t$-path given in Def.\ \ref{def:representative_path}. If $\{s,t\} \notin E_G$ Then
    $\dist(\Pi_{s,t}) \leq |\Pi_{s,t}| \leq 36 \cdot  \dist_G(s,t)$.
\end{theorem}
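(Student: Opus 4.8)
The plan is to reduce the statement to an upper bound on the hop-length of an optimal path in $\Gamma$. The inequality $\dist(\Pi_{s,t})\le|\Pi_{s,t}|$ is immediate, since every edge of $\Pi_{s,t}$ is an edge of $G=\UDG(V)$ and so has Euclidean length at most $1$. Set $L:=\dist_G(s,t)$; because $\{s,t\}\notin E_G$ we have $L\ge\|s-t\|>1$, which will give a bit of slack at the end. By Definition~\ref{def:representative_path}, $\Pi_{s,t}$ consists of the edges $\{s,r_s\}$ and $\{r_t,t\}$ (which exist by Lemma~\ref{lem:hops_node_rep}) together with the representations $\Pi_e$ of the edges $e$ of an optimal $g_s$-$g_t$-path $\Pi^*$ in $\Gamma$, and each $\Pi_e$ has at most $3$ hops (Definition~\ref{def:grid_graph_represenation}); hence $|\Pi_{s,t}|\le 2+3|\Pi^*|$. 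So it suffices to show $|\Pi^*|<2L\,(1/c+3)$: with $c=\frac{1}{10}\sqrt{15}$ this is below $11.2\,L$, and then $|\Pi_{s,t}|<2+33.6\,L\le 36\,L$ using $L>1$.

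First I would fix a convenient $s$-$t$-path in $G$ to imitate. Let $\Pi^G=(v_0,\dots,v_m)$ be a Euclidean-weighted shortest $s$-$t$-path in $G$ using as few edges as possible among all such shortest paths. Then $\Pi^G$ is simple and has no ``chord shortcut'', i.e.\ $\|v_i-v_{i+2}\|>1$ for every $i$, because otherwise the edge $\{v_i,v_{i+2}\}\in E_G$ could replace $v_{i+1}$ without increasing the length while saving one hop, contradicting minimality. Grouping the edges of $\Pi^G$ into $\lfloor m/2\rfloor$ disjoint consecutive pairs and using $\|v_i-v_{i+1}\|+\|v_{i+1}-v_{i+2}\|\ge\|v_i-v_{i+2}\|>1$ for each pair gives $L>\lfloor m/2\rfloor$, hence $m<2L+1$, and therefore $m<3L$ since $L>1$.

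Next I would trace the grid cells visited by $\Pi^G$ to build a $g_s$-$g_t$-walk in $\Gamma$. Since $\Pi^G$ is a union of edges of $G$, every point of it lies in the contour polygon $P$; hence every grid cell it meets is active, its first and last cells are exactly the cells of $g_s$ and $g_t$, and whenever $\Pi^G$ passes through a grid corner $p$ that corner lies in $P$, so all four cells around $p$ are active, and in particular $p$ is not a loose vertex. Following the cell sequence of $\Pi^G$: a transition between edge-adjacent cells costs one step of $\Gamma$ between their (active) centers, and a transition through a corner into the diagonally opposite cell is realized by two steps of $\Gamma$ through one of the two (active) side cells at that corner. The number of $\Gamma$-steps obtained equals the total number of grid-line crossings of $\Pi^G$, counting a corner as one crossing of a vertical line and one of a horizontal line. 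On each straight edge $e_i$ the $x$-coordinate changes monotonically by $|\Delta x_i|\le\|e_i\|$, so $e_i$ contributes at most $|\Delta x_i|/c+1$ crossings of vertical lines and at most $|\Delta y_i|/c+1$ of horizontal lines, with $|\Delta x_i|,|\Delta y_i|\le\|e_i\|$; summing over $i$ with $\sum_i\|e_i\|=L$ gives at most $2L/c+2m$ crossings in total. Since $\Pi^*$ is a shortest $g_s$-$g_t$-path in $\Gamma$, $|\Pi^*|\le 2L/c+2m<2L/c+6L=2L\,(1/c+3)$, and as $1/c=10/\sqrt{15}<2.6$ this is below $11.2\,L$, which completes the argument as in the first paragraph.

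The step I expect to be the main obstacle is the bookkeeping in the cell-tracing: charging each diagonal corner transition to exactly two $\Gamma$-edges and confirming that every cell involved (in particular the two side cells at a corner) is active — which is exactly where ``$\Pi^G\subseteq P$'' and the absence of loose vertices on $\Pi^G$ are used — while keeping all constants tight enough that the final bound is genuinely at most $36\,L$ and not merely $O(L)$. This tightness is why the reduction to a fewest-edge shortest Euclidean path, which bounds $m$ linearly in $L$, cannot be dropped: an arbitrary shortest Euclidean path could have many more hops than its length, leaving the term $2m$ uncontrolled.
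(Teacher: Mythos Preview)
Your argument is correct and reaches the same constant $36$, but it follows a genuinely different route from the paper. The paper proves the bound by introducing the auxiliary cell graph $\Gamma'$ and the cell polygon $P'$ and chaining four lemmas: it passes from $|\Pi_{s,t}|$ to $\dist_\Gamma(g_s,g_t)$ (Lemma~\ref{lem:dist_rep_path_grid}), then to $\dist_{\Gamma'}$ (Lemma~\ref{lem:bound-gamma-prime}), then to the geometric distance $\dist_{P'}$ via an $L_1/L_2$ comparison of a shortest polygonal chain in $P'$ (Lemma~\ref{lem:optimality-root-two-factor-grid}), and finally to $\dist_G$ (Lemma~\ref{lem:dist_comparison_G_cell-polygon}); a triangle-inequality detour through $g_s',g_t'$ then yields $|\Pi_{s,t}|\le\frac{6\sqrt2}{c}\,\dist_G(s,t)+14\le 22\,\dist_G(s,t)+14$. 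You bypass $\Gamma'$ and $P'$ entirely: you pick a Euclidean-shortest $s$-$t$-path in $G$ with the fewest hops, trace the sequence of active cells it visits to obtain a $g_s$-$g_t$-walk in $\Gamma$, and count cell changes as grid-line crossings, getting $|\Pi^*|\le 2L/c+2m$ with $m<3L$.

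What each approach buys: your argument is more elementary and self-contained (no auxiliary graph or polygon, no geodesics in $P'$), and it keeps the additive slack tiny (only the ``$+2$'' from the two end edges), at the cost of a worse multiplicative constant ($\approx 33.5$ versus the paper's $\approx 22$); both squeeze under $36$ only because $L>1$. The paper's decomposition is more modular---Lemmas~\ref{lem:bound-gamma-prime}--\ref{lem:dist_comparison_G_cell-polygon} relate the four metrics independently and could be reused---whereas your cell-tracing is tailored to this theorem. Your key step that would \emph{not} go through without care is exactly the one you flagged: a diagonal cell transition at a corner $p$ needs a side cell to be active; your observation that $p\in\Pi^G\subseteq P$ forces all four incident cells to be active (hence $p$ is not loose) is precisely what makes this work, and it is the analogue of the paper's use of $P\subseteq P'$ in Lemma~\ref{lem:dist_comparison_G_cell-polygon}. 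The choice of a fewest-edge shortest Euclidean path, giving $m<3L$, is also essential in your version and has no counterpart in the paper's proof.
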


Note that if $\{s, t\} \in E_G$ then we can send the packet directly along this edge and the distance and number of hops is guaranteed to be optimal. If $\{s, t\} \not \in E_G$ then $dist_G(s, t) > 1$, a fact which we use in the proof of Theorem~\ref{thm:grid-routing-guarantees}.
We prove this theorem in stages represented by the subsequent lemmas. In the first stage we upper bound the number of hops of the representative path $\Pi_{s,t}$ with the distance of a corresponding $g_s,g_t$-path in $\Gamma$.

\begin{lemma}
    \label{lem:dist_rep_path_grid}
    Let $s,t \in V$.
    Then
    $|\Pi_{s,t}| \leq \tfrac{3}{c} \cdot \dist_\Gamma(g_s,g_t) + 2$.
\end{lemma}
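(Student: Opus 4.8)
The goal is to bound the hop-length $|\Pi_{s,t}|$ of the representative path by $\tfrac{3}{c}\dist_\Gamma(g_s,g_t)+2$. The representative path is $\Pi_{s,t}=\{\{s,r_s\}\}\cup\Pi_{r_s,r_t}\cup\{\{r_t,t\}\}$, where $\Pi_{r_s,r_t}=\bigcup_{e\in\Pi^*}\Pi_e$ is the concatenation of the representations of the edges along an optimal $g_s$--$g_t$-path $\Pi^*$ in $\Gamma$. The first step is simply to count: the two "attachment" edges $\{s,r_s\}$ and $\{r_t,t\}$ contribute $2$ hops, so $|\Pi_{s,t}|\le 2+\sum_{e\in\Pi^*}|\Pi_e|$. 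By Definition~\ref{def:grid_graph_represenation}, each $\Pi_e$ satisfies $|\Pi_e|\le 3$, so $\sum_{e\in\Pi^*}|\Pi_e|\le 3\cdot|\Pi^*|$, i.e.\ $3$ times the hop-length of an optimal path in $\Gamma$.

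The remaining step is to relate the hop-length $|\Pi^*|$ of an optimal $g_s$--$g_t$-path in $\Gamma$ to the geometric distance $\dist_\Gamma(g_s,g_t)$. Since $\Pi^*$ is a \emph{hop}-optimal path, it suffices to exhibit \emph{some} $g_s$--$g_t$-path in $\Gamma$ whose hop-length is at most $\tfrac{1}{c}\dist_\Gamma(g_s,g_t)$; then $|\Pi^*|$ is no larger. (Here one should double-check whether $\Pi^*$ in Def.~\ref{def:representative_path} is hop-optimal or distance-optimal in $\Gamma$; in a grid graph where every edge has the same Euclidean length $c$, distance-optimal and hop-optimal coincide, so $\dist_\Gamma(\Pi^*)=c\cdot|\Pi^*|$ regardless.) The key geometric fact is that consecutive grid nodes in $\Gamma$ are at Euclidean distance exactly $c$ (they are centers of edge-adjacent cells of side length $c$), so any $g_s$--$g_t$-path of hop-length $k$ has geometric length $kc$; conversely, taking the genuinely shortest $g_s$--$g_t$-path in $\Gamma$, which has length $\dist_\Gamma(g_s,g_t)$, its hop-length is exactly $\dist_\Gamma(g_s,g_t)/c$. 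Combining, $|\Pi^*|=\dist_\Gamma(g_s,g_t)/c$, hence $|\Pi_{s,t}|\le 3\cdot|\Pi^*|+2=\tfrac{3}{c}\dist_\Gamma(g_s,g_t)+2$.

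The main thing to be careful about — the only real "obstacle" here — is making sure the accounting at the endpoints is right: the grid nodes $g_s,g_t$ are defined as the centers of the cells containing $s,t$, and one must confirm these cells are active (so that $g_s,g_t\in V_\Gamma$ and representatives exist), which follows because the cell of $s$ contains the point $s$, which lies in an edge of $\UDG(V)$ and hence in $P$, making the cell active; likewise for $t$. One should also note that $\Pi_{r_s,r_t}$ being a union of representations is a genuine walk in $G$ from $r_s$ to $r_t$ (the endpoints of consecutive $\Pi_e$'s match up at the shared representative of the shared grid node), so the hop-count adds up as claimed, and the bound $|\Pi_e|\le 3$ from Definition~\ref{def:grid_graph_represenation} applies uniformly. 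Everything else is the elementary observation that in an axis-parallel grid of cell size $c$, hop-distance equals geometric distance divided by $c$.
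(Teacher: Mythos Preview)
Your proof is correct and follows essentially the same approach as the paper: decompose $|\Pi_{s,t}|$ into the two attachment edges plus $\sum_{e\in\Pi^*}|\Pi_e|$, bound each $|\Pi_e|\le 3$ (the paper cites Lemma~\ref{lem:rep_grid_3_hops}, you cite Definition~\ref{def:grid_graph_represenation}, which amounts to the same thing), and then convert $|\Pi^*|=\hop_\Gamma(g_s,g_t)$ into $\tfrac{1}{c}\dist_\Gamma(g_s,g_t)$ using that every edge of $\Gamma$ has length exactly $c$. Your extra remarks about the cells of $s,t$ being active and about hop- versus distance-optimality in $\Gamma$ are correct but not strictly needed, since all grid edges have the same length $c$.
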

\begin{proof}
    We exploit the fact that edges in $\Gamma$ have distance at least $c$ and that $\Pi_{s,t} = \{\{s,r_s\}\} \cup  \Pi_{r_s,r_t} \cup \{\{r_t,t\}\}$ is constructed from an optimal path in $\Gamma$ (see Definition \ref{def:representative_path}). We combine this with Lemmas \ref{lem:hops_node_rep}, \ref{lem:rep_grid_3_hops} to obtain the following $ |\Pi_{s,t}| \stackrel{\text{Lem. } \ref{lem:hops_node_rep}}{\leq} |\Pi_{r_s,r_t}| + 2 \stackrel{\text{Lem. } \ref{lem:rep_grid_3_hops}}{\leq}  3 \cdot \hop_\Gamma(g_s,g_t) + 2 \leq \tfrac{3}{c} \dist_\Gamma(g_s,g_t) + 2.\qedhere
   $
\end{proof}

Since the cell-polygon $P'$ completely covers $P$ (the smallest polygon containing all edges of $\Gamma$ does not, in general), we relate paths in the grid graph $\Gamma$ to paths in the cell-graph $\Gamma'$. This allows us to relate paths in $P$ to $\Gamma$.
Note that comparisons of hop-distance in $\Gamma$ and $\Gamma'$ correspond to equal comparisons of distances, since both graphs have the same granularity $c$.

\begin{lemma}
\label{lem:bound-gamma-prime}
    Let $g_1,g_2 \in V_\Gamma$ be located in cells $C_1,C_2$, respectively. There exist nodes $g_1', g_2' \in V_{\Gamma'}$ that are corners of $C_1,C_2$ respectively, such that
    $\dist_\Gamma(g_1,g_2) \leq 2 \cdot \dist_{\Gamma'}(g_1',g_2')$.
\end{lemma}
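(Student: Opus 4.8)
The plan is to compare a shortest path in the cell graph $\Gamma'$ with a path one can build in the grid graph $\Gamma$, exploiting the fact that $\Gamma$ has a vertex in every active cell while $\Gamma'$ has vertices on the corners of active cells. First I would set up the correspondence: given $g_1, g_2 \in V_\Gamma$ sitting in active cells $C_1, C_2$, I choose $g_1'$ to be any corner of $C_1$ that lies in $V_{\Gamma'}$ (i.e.\ is not loose) and similarly $g_2'$ a non-loose corner of $C_2$; such corners exist because $C_1$ is active, so at least one of its corners lies in $P$ or is otherwise non-loose (a loose corner of $C_1$ would only be removed when exactly two non-adjacent cells are active there, and an active cell always has a non-loose corner — this needs a one-line check). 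Then I would take a shortest path $\Pi' = (g_1' = w_0, w_1, \dots, w_k = g_2')$ in $\Gamma'$, so $k = \hop_{\Gamma'}(g_1', g_2')$, and show it can be ``followed'' in $\Gamma$ at the cost of at most a factor $2$ in hops.

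The key observation is that each edge $\{w_i, w_{i+1}\}$ of $\Gamma'$ is an edge of some active grid cell $D_i$; hence $\Gamma$ contains a vertex $h_i$ at the center of $D_i$. Two consecutive cells $D_i$ and $D_{i+1}$ share the vertex $w_{i+1}$, so either $D_i = D_{i+1}$, or $D_i$ and $D_{i+1}$ are adjacent (share an edge), or they are diagonally opposite at $w_{i+1}$. In the first two cases $h_i = h_{i+1}$ or $\{h_i, h_{i+1}\} \in E_\Gamma$, contributing at most one grid edge. The diagonal case is the only delicate one: if $D_i$ and $D_{i+1}$ are the two diagonal cells at corner $w_{i+1}$, then $w_{i+1}$ would be loose unless one of the other two cells at $w_{i+1}$ is also active — but since $w_{i+1} \in V_{\Gamma'}$, $w_{i+1}$ is \emph{not} loose, so one of the two ``bridging'' cells at $w_{i+1}$ is active, and its center $h$ is a $\Gamma$-vertex adjacent to both $h_i$ and $h_{i+1}$. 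So each step of $\Pi'$ costs at most $2$ edges in $\Gamma$, giving a walk from $h_0$ to $h_k$ in $\Gamma$ of at most $2k$ hops. Finally $h_0$ is the center of a cell with $w_0 = g_1'$ as a corner, so $h_0$ is either $g_1$ or adjacent to $g_1$; we can absorb these two extra endpoint edges by instead rooting the path at the cells actually containing $g_1$ and $g_2$ — more cleanly, note $C_1$ is among the cells having $g_1'$ as a corner, so we may take $D_0 = C_1$ and $h_0 = g_1$ (and $D_{k-1} = C_2$, $h_k = g_2$) directly. This yields $\hop_\Gamma(g_1, g_2) \leq 2\,\hop_{\Gamma'}(g_1', g_2')$, and since both grids have the same granularity $c$, multiplying through by $c$ gives $\dist_\Gamma(g_1,g_2) \leq 2\,\dist_{\Gamma'}(g_1', g_2')$ as claimed.

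The main obstacle is the bookkeeping around loose vertices and the diagonal case: one must be careful that the corner choices $g_1', g_2'$ are legitimate $V_{\Gamma'}$ vertices and that every intermediate cell $D_i$ used to simulate an edge of $\Pi'$ is genuinely active (which it is, since it has an edge of $\Gamma'$ on its boundary, and edges of $\Gamma'$ are edges of active cells by definition). A secondary subtlety is aligning endpoints: I would handle this by choosing the first and last cells of the $\Gamma'$-path to be $C_1$ and $C_2$ themselves whenever the path's endpoints $g_1', g_2'$ are corners of those cells, which they are by construction, so no extra endpoint edges are incurred.
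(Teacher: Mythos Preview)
Your approach is essentially the same as the paper's: follow a shortest $\Gamma'$-path with a chain of active cells whose centers give a walk in $\Gamma$, and use the non-looseness of interior vertices to bridge diagonal transitions with a single extra cell. The paper phrases this as constructing a sequence $A$ of side-sharing active cells from $C_1$ to $C_2$, each touching an edge or vertex of $\Pi'$, and then counts $|A|\le 2|\Pi'|+1$ by splitting cells into those carrying an edge of $\Pi'$ (at most $|\Pi'|$) versus those carrying only a vertex (at most $|\Pi'|+1$).

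There is, however, a genuine bookkeeping gap in your endpoint handling. You define $D_0$ as an active cell having the edge $\{g_1',w_1\}$ on its boundary, then assert ``we may take $D_0=C_1$''. That is not justified: the first edge of $\Pi'$ emanates from the corner $g_1'$ but need not lie on $C_1$'s boundary --- it could be a side of one of the other cells incident to $g_1'$. Your preceding fallback, ``$h_0$ is either $g_1$ or adjacent to $g_1$'', is also false in the diagonal case: if $D_0$ and $C_1$ meet only at the corner $g_1'$, their centers are \emph{not} adjacent in $\Gamma$. You would need a bridging cell at $w_0$ (and symmetrically at $w_k$), which your non-loose argument does supply, but then the count becomes $2(k-1)$ internal edges plus up to $2$ at each endpoint, i.e.\ $2k+2$, not $2k$. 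The paper sidesteps this by building $A$ to start at $C_1$ and end at $C_2$ from the outset (so the endpoint cells are already in the chain and are absorbed into the vertex-type count), and by choosing $g_1',g_2'$ distinct so that $|\Pi'|\ge 1$. Your argument is easily repaired along those lines, but as written it does not deliver the exact factor $2$ the lemma states.
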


\begin{proof}
    Choose $g_1',g_2'$ such that $\hop_G(g_1',g_2')\geq 1$.
    Let $\Pi'$ be a shortest $g_1' g_2'$-path in $\Gamma'$. Since all edges and vertices of $\Gamma'$ are part of the boundary of an active grid cell and $\Gamma'$ contains no loose vertices, there is a sequence $A$ of active grid-cells from $C_1$ to $C_2$, where consecutive cells share a side and each cell has an edge or vertex of $\Pi'$ on its boundary. There are two kinds of cells in $A$: the first kind has an edge of $\Pi'$ on its boundary, the second kind does not have an edge of $\Pi'$ on its boundary, but has a vertex of $\Pi'$ on its boundary. The number of cells of the first kind is at most $|\Pi'|$, because each edge in $\Pi'$ is adjacent to at most one cell of $A$. The number of cells of the second kind is at most $|\Pi'|+1$, because each vertex of $\Pi'$ is adjacent to at most one cell of this type (since $\Pi'$ has at least one edge.). So, $|A|\leq 2|\Pi'|+1$.

    We obtain a $g_1 g_2$-path $\Pi$ of length $|A|\!-\!1$ in $\Gamma$ from the chain $A$ by taking the vertex centered at each cell in $A$. So, we have $\hop_\Gamma(g_1,g_2)\leq |\Pi|\leq |A|\!-\!1 \leq 2|\Pi'| = 2\hop_{\Gamma'}(g_1',g_2')$. Since all edges in $\Gamma$ and $\Gamma'$ have length $c$, we have $\dist_\Gamma(g_1,g_2) \leq 2 \dist_{\Gamma'}(g_1',g_2')$.
\end{proof}

We follow up on the previous stage, and bound the distance of an optimal path in the \textit{graph} $\Gamma'$ with that of an optimal \textit{geometric} path in the \textit{polygon} $P'$. The resulting approximation factor of $\sqrt 2$ stems from a segment-wise comparison of Euclidean distance of a shortest polygonal chain in $P'$ to the Manhattan distance in the graph $\Gamma'$.

\begin{lemma}
\label{lem:optimality-root-two-factor-grid}
    Let $g_1,g_2 \in V_{\Gamma'}$.
    Then $\dist_{\Gamma'}(g_1,g_2) \leq \sqrt{2} \cdot \dist_{P'}(g_1,g_2)$.
\end{lemma}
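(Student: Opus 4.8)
The plan is to prove this by a segment-by-segment comparison between a shortest geometric path in $P'$ and a corresponding monotone staircase path in the grid graph $\Gamma'$. Since $P'$ is a polygon, an optimal geometric $g_1$-$g_2$-path is a polygonal chain $\Pi = (v_1, \ldots, v_k)$ with $v_1 = g_1$, $v_k = g_2$, and $\dist_{P'}(g_1,g_2) = \sum_{i=1}^{k-1}\|v_{i+1}-v_i\|$. The key geometric observation is that, because $P'$ is a union of grid cells (with small triangular notches removed near loose vertices, which lie outside $P'$ anyway), this chain stays within the union of active cells, and moreover every bend vertex $v_i$ can be taken to lie on the grid lines; in fact one may assume each $v_i$ is a grid vertex in $V_{\Gamma'}$ (a standard ``taut-string'' argument: a shortest path in a polygon bends only at reflex vertices of the boundary, and the reflex vertices of $P'$ are exactly non-loose grid corners, i.e.\ elements of $V_{\Gamma'}$). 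I would not belabor this; it follows from the definition of $P'$.

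First I would reduce to a single segment. If I can show that for any two grid vertices $p,q \in V_{\Gamma'}$ that are the endpoints of a straight segment $\overline{pq}$ lying inside $P'$, there is a $p$-$q$-path in $\Gamma'$ of length at most $\sqrt 2 \|p-q\|$, then summing over the $k-1$ segments of $\Pi$ and using the triangle inequality for $\dist_{\Gamma'}$ gives $\dist_{\Gamma'}(g_1,g_2) \le \sum_i \dist_{\Gamma'}(v_i,v_{i+1}) \le \sqrt 2 \sum_i \|v_{i+1}-v_i\| = \sqrt 2 \,\dist_{P'}(g_1,g_2)$, as required.

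Second, for the single-segment claim: write $p-q = (a,b)$ with $a,b$ integer multiples of $c$. A monotone staircase from $p$ to $q$ along grid edges has exactly $(|a|+|b|)/c$ edges, hence $\Gamma'$-length $|a|+|b| = \|p-q\|_1$. Since $\|p-q\|_1 \le \sqrt 2 \,\|p-q\|_2$ for any vector in $\mathbb R^2$, the bound follows — provided such a staircase exists inside $\Gamma'$. This is where the only real work lies: I must argue the staircase uses only edges of $\Gamma'$, i.e.\ only edges on boundaries of active cells. Because $\overline{pq} \subseteq P'$ and $P'$ is a union of active cells, the segment passes only through active cells; the ``lower-left'' staircase hugging $\overline{pq}$ traverses grid edges each of which bounds a cell the segment passes through, hence an active cell. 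The one subtlety is the notches removed near loose vertices: I would observe that the staircase can always be chosen on the side of $\overline{pq}$ avoiding any such notch, or more simply, that since $\overline{pq}\subseteq P'$ and loose vertices are not in $P'$, the segment never touches a loose vertex, so a staircase of grid edges bounding the cells crossed by $\overline{pq}$ avoids the deleted triangles.

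The main obstacle I anticipate is precisely this bookkeeping around loose vertices and the claim that the optimal chain in $P'$ bends only at $V_{\Gamma'}$ vertices: one has to be careful that the removed triangles do not force the geodesic to bend at a non-grid point, and that a monotone staircase following a segment inside $P'$ really does stay on active-cell boundaries. Everything else — the per-segment $\|\cdot\|_1 \le \sqrt 2 \|\cdot\|_2$ inequality and the summation — is routine.
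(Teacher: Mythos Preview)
Your proposal is correct and essentially identical to the paper's own proof: both argue that a geodesic in $P'$ bends only at reflex vertices of $P'$, which are grid corners in $V_{\Gamma'}$, and then bound each straight segment by a monotone staircase in $\Gamma'$ via the $\|\cdot\|_1 \le \sqrt 2\,\|\cdot\|_2$ inequality. Your explicit discussion of the loose-vertex notches is a bit more careful than the paper's (which simply asserts that all reflex vertices of $P'$ are in $V_{\Gamma'}$ and that the boundary of the monotone chain of crossed cells gives a $\Gamma'$-path), but the underlying reasoning is the same.
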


\begin{proof}
    Let $\Pi$ be the shortest \textit{geometric} path from $g_1$ to $g_2$ in $P'$. Since $P'$ is a polygon, $\Pi$ is a polygonal chain connecting vertices $g_1 =: v_1,\ldots ,v_n := g_2$, where $v_i$ are reflex vertices (i.e., vertices with an internal angle of at least $\pi$) of $P'$. Note that by construction of $P'$, all reflex vertices of $P'$ are vertices of $\Gamma'$, so we have $v_1,\ldots ,v_n \in V_{\Gamma'}$.

    Consider one such segment $s_i$. Each point of $s_i$ lies in some gridcell belonging to $P'$, because the path $\Pi$ lies in $P'$. Therefore, there is a monotone chain of gridcells connecting $v_i$ and $v_{i+1}$. Consider the axis aligned bounding rectangle $R_i$ defined by the two opposite corners $v_i, v_{i+1} \in V_{\Gamma'}$. The width and length of $R_i$ sum up to $\|v_i - v_{i+1} \|_1$ (where $\|(x,y)\|_1 = x+y$ for some $(x,y) \in \mathbb R^2$ denotes the $L_1$-norm).

    Traversing the boundary of the monotone chain of gridcells between $v_i$ and $v_{i+1}$ in the shortest possible way represents a shortest path between $v_i$ and $v_{i+1}$ in ${\Gamma'}$.
    On one hand, the length of this path equals the sum of side-lengths of $R_i$, i.e., $\dist_{\Gamma'}(v_i,v_{i+1}) = \|v_i - v_{i+1} \|_1$. On the other hand the geometric distance equals the length of $s_i$ which is \smash{$\dist_{P'}(v_i,v_{i+1}) = \|v_i-v_{i+1}\|$}. We have
    $$\dist_{\Gamma'}(v_i,v_{i+1}) = \|v_i - v_{i+1} \|_1 \leq \sqrt{2}\cdot \|v_i - v_{i+1} \|_2 = \sqrt{2}\cdot \dist_{P'}(v_i,v_{i+1}),$$
    using the equivalence property of $L_1$ and $L_2$-norms: $\|x\|_1 \leq \sqrt 2 \|x\|_2$ for any $x \in \mathbb R^2$.
    So, for each segment $S_i$ of $\Pi$, there exists a path in ${\Gamma'}$ with stretch at most $\sqrt{2}$ connecting the endpoints. Concatenating these paths gives the required $g_1$-$g_2$-path in ${\Gamma'}$.
\end{proof}

Next we observe that an optimal path between two nodes in the UDG $G$ can not be any shorter than a corresponding shortest geometric path in $P'$.

\begin{lemma}
    \label{lem:dist_comparison_G_cell-polygon}
    Let $s,t \in V$. Then $\dist_{P'}(s, t) \leq \dist_G(s, t)$.
\end{lemma}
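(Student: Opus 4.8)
The plan is to exhibit, for any $s,t \in V$, a continuous path inside the cell polygon $P'$ whose length is at most $\dist_G(s,t)$. The natural candidate is the path traced out by an optimal (hop-minimizing or Euclidean-shortest, it will not matter) $s$-$t$-path $\Pi$ in $G$ itself: $\Pi$ is a concatenation of straight line segments, one per edge of $G$, and its total Euclidean length is exactly $\dist_G(s,t)$ (here I am using that an edge-shortest path in $G$, measured in Euclidean edge lengths, has length at least $\dist_G(\Pi)$ for the hop-shortest path, but in fact the cleanest route is: let $\Pi$ be a path in $G$ realizing $\dist_G(s,t)$ in the sense of summed Euclidean edge lengths, so $\dist_G(\Pi) = \dist_G(s,t)$). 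Every such edge is, by the definition of the contour polygon $P$, contained in $P$ (edges of $\UDG(V)$ are explicitly part of $P$). Hence the geometric curve $\Pi$ lies entirely in $P$.

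The next step is to observe that $P \subseteq P'$. This should follow directly from the construction of $P'$: $P'$ is obtained by taking the union of all active grid cells — and a cell is active precisely when it has nonempty intersection with $P$ — and then only removing small triangles near loose vertices that are chosen small enough to remain disjoint from $P$. So the union of active cells already contains $P$ (every point of $P$ lies in some grid cell, which is then active and hence included), and the subsequent triangle removals do not touch $P$ by design. Therefore $P \subseteq P'$, and consequently the curve $\Pi$, which lies in $P$, also lies in $P'$.

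Finally, since $\Pi$ is a polygonal curve from $s$ to $t$ lying inside $P'$, by the definition of the geometric distance $\dist_{P'}$ (the length of the shortest path between the two points inside the polygon), we get $\dist_{P'}(s,t) \le \text{length}(\Pi) = \dist_G(\Pi) = \dist_G(s,t)$, which is the claim. One small point to double-check is that $s$ and $t$ themselves lie in $P'$ (needed for $\dist_{P'}(s,t)$ to be defined), but this is immediate since $s,t \in V$ are endpoints of edges of $G$, hence lie in $P \subseteq P'$.

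I expect essentially no serious obstacle here; the only thing that needs a little care is the inclusion $P \subseteq P'$, i.e., verifying that the triangle-removal step in the construction of $P'$ genuinely cannot remove any point of $P$. This is guaranteed by the phrase in the construction that the removed triangles are ``small enough to be disjoint from $P$,'' and by the fact that a loose vertex does not lie in $P$ (otherwise all four incident cells would be active), so a small enough triangle at such a vertex avoids $P$. Once $P \subseteq P'$ is in hand, the lemma is just the monotonicity of shortest-path length under enlarging the ambient region, applied to the concrete witness path $\Pi$.
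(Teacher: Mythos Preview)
Your proposal is correct and follows essentially the same approach as the paper: exhibit a shortest $s$-$t$-path $\Pi$ in $G$ as a geometric path lying inside $P'$, whence $\dist_{P'}(s,t)\le \dist_G(\Pi)=\dist_G(s,t)$. The paper compresses this to two lines by noting that every cell intersected by an edge of $\Pi$ is active, so each edge lies in $P'$; you route the same argument through the intermediate inclusion $P\subseteq P'$ and are in fact a bit more careful than the paper in explicitly addressing why the triangle removals at loose vertices cannot delete any point of an edge of $G$.
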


\begin{proof}
    Let $\Pi$ be a shortest $st$-path in $G$. By definition, each cell that is intersected by an edge of $\Pi$ is active and therefore this edge lies in $P'$. So, $\Pi$ is an $st$-path in $P'$.
\end{proof}

We now use the inequalities proven in the lemmas above to prove Theorem \ref{thm:grid-routing-guarantees}. 

\begin{proof}[Proof of Theorem \ref{thm:grid-routing-guarantees}]
    Let $s,t \in V$ and let $g_s,g_t \in \Gamma$ be their cell representatives. Let $g_s',g_t' \in V_{\Gamma'}$ be two corner-nodes of $g_s,g_t$ for which Lemma \ref{lem:bound-gamma-prime} holds. Then we get
    \begin{align*}
        |\Pi_{s,t}| & \leq \tfrac{3}{c} \cdot \dist_\Gamma(g_s,g_t) + 2 \tag*{\small\textit{Lemma \ref{lem:dist_rep_path_grid}}}\\
        & \leq  \tfrac{6}{c} \cdot \dist_{\Gamma'}(g_s',g_t') + 2 \tag*{\small\textit{Lemma \ref{lem:bound-gamma-prime}}}\\
        & \leq  \tfrac{6\sqrt{2}}{c}  \cdot \dist_{P'}(g_s',g_t') +2 \tag*{\small\textit{Lemma \ref{lem:optimality-root-two-factor-grid}}} \\
        &\leq  \tfrac{6\sqrt{2}}{c} \cdot \big(\!\! \dist_{P'}(g_s',s) + \dist_{P'}(s,t) +\dist_{P'}(t,g_t')\big) +2 \tag*{\small \textit{triangle ineq.}} \\
        &=  \tfrac{6\sqrt{2}}{c} \cdot \big( \|g_s' \m s\| + \dist_{P'}(s,t) +\|g_t' \m t\|\big) + 2 \tag*{\small\textit{$s g_s'$ and $t g_t'$ in same cell}}\\
        &\leq  \tfrac{6\sqrt{2}}{c} \cdot \big( \|g_s' \m s\| + \dist_{G}(s,t) +\|g_t' \m t\|\big) +2 \tag*{\small \textit{Lemma \ref{lem:dist_comparison_G_cell-polygon}}}\\
        & = \tfrac{6\sqrt{2}}{c} \cdot \big( \!\! \dist_G(s,t) + \sqrt{2}\cdot c\big) +2
     = \tfrac{6\sqrt{2}}{c} \cdot \dist_G(s,t) + 14
    \end{align*}

    In the equality in the fourth step we use that the segments $s g_s'$ and $t g_t'$ are both contained in a single grid cell, hence the distance in the cell-polygon equals the Euclidean distance. Since a grid cell has side length $c$, we have $\|g_s' \m s\|, \|g_t' \m t\| \leq \tfrac{1}{2}\sqrt{2} \cdot c$ in the second last step.
    As $\Pi_{s,t}$ is a path in a UDG each edge has distance at most 1, thus \[\dist_G(\Pi_{s,t}) \leq |\Pi_{s,t}| \leq \tfrac{6\sqrt{2}}{c} \cdot \dist_G(s,t) + 14 \leq 22\cdot \dist_G(s,t) + 14.\]

    Since we have a direct edge to targets with distance at most 1,
    the additive error can be accounted for by increasing the multiplicative stretch by the additive error for targets at distance more than $1$. Consequentially, we obtain $\dist(\Pi_{s,t}) \leq |\Pi_{s,t}| \leq 36 \cdot  \dist_G(s,t). \qedhere$
\end{proof}

\subsection{Transforming Routing Schemes for the Grid Graph to the UDG}
\label{sec:constant_stretch_routing_scheme}

We provide an interface to transform a routing scheme $\mathcal R_\Gamma$ for the grid graph $\Gamma$ (for which an exact routing scheme is provided in the subsequent section) into a routing scheme $\mathcal R$ for the UDG $G$ with constant stretch. The idea is to construct $\mathcal R$ from $\mathcal R_\Gamma$ using the representation $R$ of $\Gamma$ (see Definition \ref{def:grid_graph_represenation}). Theorem \ref{thm:routing_scheme_udg} provides approximation guarantees by leveraging the insights on representative paths from the previous subsection.

\begin{definition}[UDG Routing Scheme]
    \label{def:routing_scheme_udg}
    Let $\mathcal R_\Gamma$ be an \emph{exact} routing scheme for $\Gamma$ consisting of $\ell_\Gamma: V_\Gamma \to \{0,1\}^+$ and $\rho_\Gamma: V_\Gamma \times \{0,1\}^+ \to V_\Gamma$.
    Let $R = (V_R,E_R)$ be the representation of $\Gamma$ (see Def.\ \ref{def:grid_graph_represenation}). The routing scheme $\mathcal R$ for $G$ is defined on the basis of grid cells. Let $C$ be a cell with grid node $g \in V_\Gamma$ and let $r \in V_R$ be the representative of $g$. For each $v \in C$ we set $\ell_G(v) := \ell_\Gamma(g) \circ ID(v)$ (where ``$\circ$'' represents the concatenation of bit strings).
    The routing function $\rho_G$ is defined as follows. Let $v \in V_G$ be the current node and let $\ell_t := \ell_{\Gamma,t} \circ ID(t)$ be the label of the target node $t \in V$, where $\ell_{\Gamma,t}$ is the label of the representative in $t$'s cell w.r.t. $\mathcal R_\Gamma$. We assume $t \neq v$, as otherwise the packet has already arrived.
    \begin{enumerate}
        \item If $\{v,t\} \in E_G$, then we can directly deliver to $t$: $\rho_G(v, \ell_t) := t$.
        \item Else, if $v \in C \setminus V_R$ is the source we directly route to the representative of $C$: $\rho_G(v,\ell) := r$.
        \item Else, if $v=r$ \emph{is} the representative of this grid cell $C$, let $g' := \rho_\Gamma(g,\ell_{\Gamma,t})$ be the next grid node suggested by $\mathcal R_\Gamma$. Let $u$ be the first node on the path $\Pi_{\{g,g'\}} \subseteq E_R$ that represents the edge $\{g,g'\} \in E_\Gamma$. Then $\rho_G(v, \ell_t) := u$.
        \item Else, if $v \in V_R$ but $v$ is not the representative of $C$, then $v$ must be a ``transitional node'' on $\Pi_{\{g,g'\}} \in E_R$ that represents $\{g,g'\} \in E_\Gamma$. W.l.o.g. let $g' := \rho_\Gamma(g,\ell_{\Gamma,t})$ be the next grid node suggested by $\mathcal R_\Gamma$ and $u$ be the next node on $\Pi_{\{g,g'\}}$ towards $g'$. Then $\rho_G(v, \ell_t) := u$.
    \end{enumerate}
\end{definition}

\begin{theorem}
    \label{thm:routing_scheme_udg}
    Let $\mathcal R_\Gamma$ be a local, correct, exact routing scheme for $\Gamma$ with labels and local routing information of $\bigO(\log n)$ bits. Then the routing scheme $\mathcal R$ from Definition \ref{def:routing_scheme_udg} is local, correct, has constant stretch, labels and local routing information of size $\bigO(\log n)$ bits and can be computed in $\bigO(1)$ rounds.
\end{theorem}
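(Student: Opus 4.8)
The plan is to verify, one at a time, the five properties asserted of $\mathcal R$ — that it is local, correct, of constant stretch, uses $\bigO(\log n)$-bit labels and routing tables, and can be installed in $\bigO(1)$ rounds — with correctness and the stretch bound flowing from a single structural claim: for every source--target pair $(s,t)$ the function $\rho_G$ of Definition~\ref{def:routing_scheme_udg} drives the packet along a walk no longer than the representative path $\Pi_{s,t}$ of Definition~\ref{def:representative_path}.

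\emph{Construction and compactness.} Assume $\mathcal R_\Gamma$ is already in place on the simulated grid, so that the representative $r$ of each grid node $g$ holds $\ell_\Gamma(g)$ together with $g$'s $\mathcal R_\Gamma$-table. First build the representation $R$ in $\bigO(1)$ rounds (Lemma~\ref{lem:compute_representation}). Then each representative $r$ of a grid node $g$ sends $\ell_\Gamma(g)$ to every node of $g$'s cell; by Lemma~\ref{lem:hops_node_rep} all of these are neighbours of $r$, so this is $\bigO(1)$ rounds and lets each $v$ form $\ell_G(v)=\ell_\Gamma(g)\circ ID(v)$, which is $\bigO(\log n)$ bits because $\mathcal R_\Gamma$ is compact. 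Finally, along each representation path $\Pi_{\{g,g'\}}$ (at most $3$ hops by Lemma~\ref{lem:rep_grid_3_hops}) the two endpoint representatives disseminate the endpoint IDs, the labels $\ell_\Gamma(g),\ell_\Gamma(g')$, the relevant $\mathcal R_\Gamma$-tables, and next-hop pointers, so that every transitional node obtains its table; this is again $\bigO(1)$ rounds. For compactness of routing tables one uses the fixed granularity of the grid: a node represents $\bigO(1)$ grid nodes (Corollary~\ref{cor:constant_responsibility}) and, since cells have side length $c$ and UDG edges have length at most $1$, only $\bigO(1)$ grid nodes are close enough for their $\le 3$-hop representation paths to touch a given node; hence each node stores only $\bigO(1)$ grid-node tables plus $\bigO(1)$ path records, i.e.\ $\bigO(\log n)$ bits in total. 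Locality is then immediate, since evaluating $\rho_G(v,\ell_t)$ is the fixed case distinction of Definition~\ref{def:routing_scheme_udg}, each branch of which reads only $v$'s neighbourhood, its stored labels and tables, and $\ell_t$.

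\emph{Correctness and constant stretch.} If $\{s,t\}\in E_G$ then Case~1 delivers in one hop, which is optimal, so assume $\{s,t\}\notin E_G$; then $\dist_G(s,t)>1$. Let $g_s,g_t$ be the grid nodes of the cells of $s,t$, and $r_s,r_t$ their representatives. As $\mathcal R_\Gamma$ is correct and exact, following $\rho_\Gamma(\cdot,\ell_{\Gamma,t})$ from $g_s$ traverses a shortest $g_s$--$g_t$ path $g_s=g^{(0)},g^{(1)},\dots,g^{(k)}=g_t$ in $\Gamma$. By induction on the length of the walk generated by $\rho_G$ one shows this walk is $s\to r_s\to\Pi_{\{g^{(0)},g^{(1)}\}}\to r^{(1)}\to\Pi_{\{g^{(1)},g^{(2)}\}}\to\cdots\to r^{(k)}=r_t\to t$, i.e.\ exactly $\Pi_{s,t}$: Case~2 moves the packet from $s$ to $r_s$; each time it sits at the representative $r^{(i)}$ of the grid node $g^{(i)}$ it currently visits, Case~3 recomputes $g^{(i+1)}=\rho_\Gamma(g^{(i)},\ell_{\Gamma,t})$ and forwards it to the first node of $\Pi_{\{g^{(i)},g^{(i+1)}\}}$; Case~4 slides it along that path to $r^{(i+1)}$; and at $r^{(k)}=r_t$ we have $\{r_t,t\}\in E_G$ by Lemma~\ref{lem:hops_node_rep}, so Case~1 delivers. (If some intermediate node happens to be adjacent to $t$, Case~1 fires early, which only shortens the walk.) Thus the routing path equals $\Pi_{s,t}$, or a prefix of it plus one edge, and Theorem~\ref{thm:grid-routing-guarantees} bounds its length by $\bigO(\dist_G(s,t))$ — the possible extra final edge being absorbed into the constant using $\dist_G(s,t)>1$. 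In particular the walk is finite, so $\mathcal R$ is correct, and it has constant stretch.

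\emph{Main obstacle.} The crux is that $\rho_G$ must be a well-defined, stateless, local function of $(v,\ell_t)$ alone, whereas a single UDG node may be the representative of one or more grid nodes \emph{and} an internal node of one or more representation paths at the same time; when a packet arrives, such a node must decide from the target label which role it plays (Case~3 versus Case~4, and which grid node it is acting as) and in which direction to forward, and this decision must never cause the walk above to stall or to cycle. Making this precise — presumably by exploiting the BFS-tree-based choice of representation paths in Lemma~\ref{lem:compute_representation} to show that internal nodes of representation paths are not themselves representatives, and more generally that the role assignment and forwarding direction are consistent across all sources routing to a fixed target — is where the real work of the proof lies; once that is established, the induction of the previous paragraph goes through, and the remaining claims follow from Lemmas~\ref{lem:hops_node_rep},~\ref{lem:rep_grid_3_hops},~\ref{lem:compute_representation}, Corollary~\ref{cor:constant_responsibility}, and Theorem~\ref{thm:grid-routing-guarantees}.
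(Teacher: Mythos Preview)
Your argument mirrors the paper's proof almost exactly: same label construction, same appeal to Lemma~\ref{lem:compute_representation} for the $\bigO(1)$-round setup, same case split on whether $\{s,t\}\in E_G$, and the same reduction of correctness and stretch to the claim that $\rho_G$ traces out $\Pi_{s,t}$, after which Theorem~\ref{thm:grid-routing-guarantees} finishes. Your treatment of compactness (via Corollary~\ref{cor:constant_responsibility} and the bounded reach of $\le 3$-hop representation paths) is in fact more explicit than the paper's, which simply declares the routing information to be $\bigO(\log n)$ bits by hypothesis.

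Where you diverge is in your ``Main obstacle'' paragraph: you flag the ambiguity in Cases~3/4 of Definition~\ref{def:routing_scheme_udg} when a single UDG node simultaneously serves as a representative and as an internal vertex of one or more representation paths, and you treat resolving this as unfinished work. The paper does \emph{not} address this point at all; its proof simply asserts ``the path implied by $\rho_G$ equals $\Pi_{s,t}$ \dots\ This is due to rules (2),(3) and (4).'' So the gap you identify is real as a matter of exposition, but it is not something the paper's own proof closes --- the paper treats the case distinction as self-evidently well-defined. For the purposes of matching the paper, you may safely drop the caveat and assert, as the paper does, that rules~(2)--(4) drive the packet along $\Pi_{s,t}$; if you want to be more rigorous than the paper, the cleanest fix is to observe that whenever the packet sits at a node $v$ that is the representative of its own cell, Case~3 restarts grid routing from $g_v$, and since $\mathcal R_\Gamma$ is exact this still yields a shortest grid path to $g_t$ and hence a walk bounded by a representative path from $v$ --- so no cycling occurs and the stretch bound still follows from Theorem~\ref{thm:grid-routing-guarantees}.
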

\begin{proof}
    Given some $t \in V$, the label $\ell_t$ of $t$ is the concatenation of the label $\ell_{\Gamma,t}$ of the grid node which is in the same cell as $t$ and ID($t$). Hence, the labelling $\ell_G$ requires $\bigO(\log n)$ bits, given that the same is true for $\ell_\Gamma$. The information required to compute $\rho_G(v, \ell_t)$ is composed of the knowledge of neighbors of $v$ in $G$ which includes the representative of the cell of $v$ (due to Lemma \ref{lem:hops_node_rep}) and the information required to evaluate $\rho_\Gamma(v, \ell_{\Gamma,t})$. Since nodes know their neighbors already as part of the problem input (local network equals the routing graph), we do not regard this as additional routing information. The information to evaluate $\rho_\Gamma(v, \ell_{\Gamma,t})$ is $\bigO(1)$ bits by our presumption.

    We continue with the correctness and the stretch of a path implied by $\rho_G$. Let $s \neq t \in V$ be the current node and the target node respectively. Consider the case that $\|s-t\| \leq 1$, i.e., the nodes are adjacent. Then, according to Definition \ref{def:routing_scheme_udg} rule (1) the packet is delivered directly to $t$ which constitutes a correct and exact path.

    Consider the case that $\|s\!-\!t\| \geq 1$. Let $g_s,g_t \in V_\Gamma$ and $r_s,r_g$ be the respective grid nodes and representatives of the cells of $s,t$. Let $\Pi^*$ be the optimal $g_s$-$g_t$-path in $\Gamma$ implied by $\rho_\Gamma$. Then the path implied by $\rho_G$ equals $\Pi_{s,t} = \{\{s,r_s\}\} \cup  \Pi_{r_s,r_t} \cup \{\{r_t,t\}\}$ from Definition \ref{def:representative_path}, where $\Pi_{r_s,r_t} := \bigcup_{{\{g,g'\}} \in \Pi^*} \Pi_{\{g,g'\}}$ and $\Pi_{\{g,g'\}}$ is the representation of the grid edge ${\{g,g'\}} \in \Pi^*$. This is due to rules (2),(3) and (4).

    By Theorem \ref{thm:grid-routing-guarantees}, we have that $\dist(\Pi_{s,t}) \leq |\Pi_{s,t}| \leq 36 \cdot  \dist_G(s,t)$, implying a stretch of $\bigO(1)$.

    The runtime of pre-computing $\mathcal R_G$ amounts to that of computing a representation $R$ of $\Gamma$, which takes $\bigO(1)$ rounds due to Lemma \ref{lem:compute_representation}. Note that in all four cases of the routing function $\rho_G$ can be evaluated locally using the representation $R$ of $\Gamma$ from the pre-computation step, information about local neighbors in $G$ and (local) evaluations of $\rho_\Gamma$.
\end{proof}

\section{Computing a Labelling for the Grid Graph} \label{sec:portals_and_labeling}

This section is dedicated to computing the labelling $\ell_\Gamma: V_\Gamma \to \{0,1\}^+$ for the grid graph by first constructing a particular tree structure $T_\Gamma$ and then computing a labelling on it in $\bigO(\log n)$ rounds leveraging various \HYBRID (and in particular $\NCC_0$) model techniques. 
For the tree-labelling we use a similar approach as presented in~\cite{SK85}, but slightly adapt the labelling which later allows jumping over branches of our specifically constructed tree, facilitating an optimal routing scheme in grid graphs.
Afterwards, Section~\ref{sec:grid_routing} will deal with computing the routing function $\rho_\Gamma: V_\Gamma \times \{0,1\}^+ \to V_{\Gamma}$ leading to the routing scheme $\mathcal R_{\Gamma}$.

We assume the \HYBRID model on the grid graph $\Gamma$ that represents the network which we constructed and simulated in the previous sections (Theorem \ref{thm:hybrid_on_grid}).
The goal is to divide the grid nodes into sets of vertically connected grid nodes called \emph{portals}.
Connecting neighboring portals with a single edge gives us a spanning tree of $\Gamma$, which we call \emph{portal tree}. We then root the portal tree at the node with minimum identifier and compute a label for each grid node, leading to a well-defined labelling function $\ell_\Gamma$.
Note that we require that the cell polygon $P'$ does not contain holes (Lemma~\ref{lem:cell-polygon-simple}), as otherwise there be a cycle after connecting neighboring portals.

We first define the set of portals as follows:

\begin{definition}[Portals]
	Let $\Gamma = (V_{\Gamma}, E_{\Gamma})$ be the grid graph as constructed in the last section.
	The \emph{set of portals} are the connected components of $(V_\Gamma, E_{vert})$, where $E_{vert} \subset E_{\Gamma}$ are the vertical edges of the grid graph.
\end{definition}

\begin{figure}
    \begin{subfigure}{.23\textwidth}
        \includegraphics[page=1, width = \textwidth]{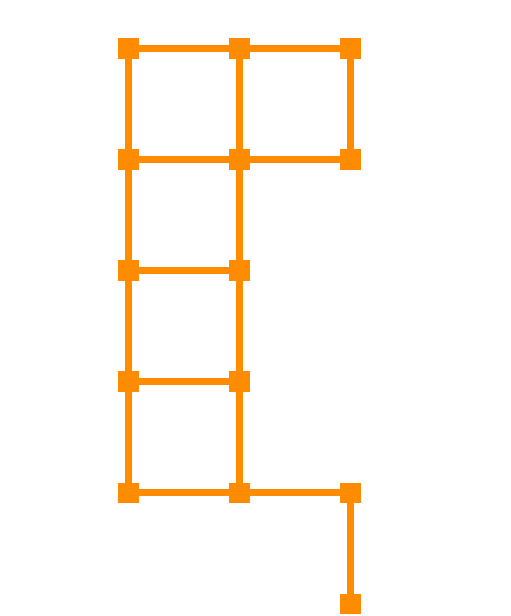}
        \caption{Initial grid graph $\Gamma$}
        \label{fig:grid_preprocessing:a}
    \end{subfigure}
    \begin{subfigure}{.23\textwidth}
        \includegraphics[page=2, width = \textwidth]{figures/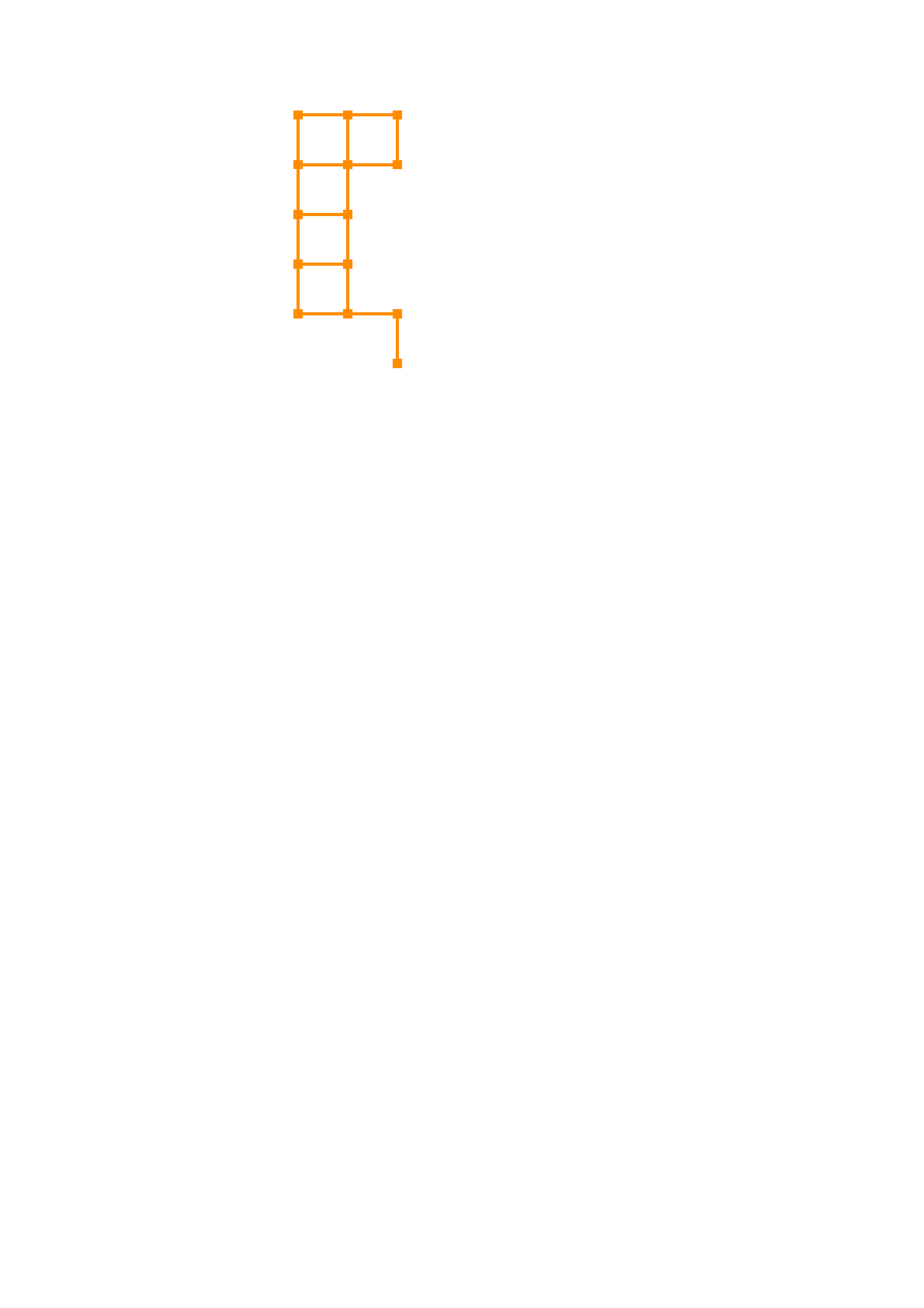}
        \caption{Portal tree $T_\Gamma$ (blue)}
        \label{fig:grid_preprocessing:b}
    \end{subfigure}
    \begin{subfigure}{.23\textwidth}
        \includegraphics[page=3, width = \textwidth]{figures/grid_preprocessing-v2.pdf}
        \caption{Rooted portal tree, values $\ell_v$}
        \label{fig:grid_preprocessing:c}
    \end{subfigure}
    \begin{subfigure}{.23\textwidth}
        \includegraphics[page=4, width = \textwidth]{figures/grid_preprocessing-v2.pdf}
        \caption{Label $[\!l_v,\!r_v\!]$, portal label red}
        \label{fig:grid_preprocessing:d}
    \end{subfigure}
    \caption{Example for creation and labelling of the portal tree.}
    \label{fig:grid_preprocessing}
\end{figure}

For convenience, assume that the grid nodes $v_1,\ldots,v_k$ within a portal $\mathcal P$ are sorted by their $y$-coordinates in descending order, i.e., $v_1$ is the northernmost node.

To construct the portal tree $T_\Gamma$ of the grid graph $\Gamma$ we connect neighboring portals via a single edge.
Each grid node $v$ checks whether it has an edge to the left and communicates this to its northern and southern neighbors $v_N$ and $v_S$.
Assume that $v$ has an edge $\{v, v_W\}$ to the left.
Then $v$ checks if $v_S$ also has a horizontal edge to the left.
If that is not the case, $v$ adds the edge $\{v, v_W\}$ to the portal tree.
We refer to Figures~\ref{fig:grid_preprocessing:a} and \ref{fig:grid_preprocessing:b} for an example.

\begin{lemma} \label{lem:portal_tree_construction}
    The portal tree $T_{\Gamma}$ of a grid graph $\Gamma$ can be computed in $\bigO(1)$ rounds.
\end{lemma}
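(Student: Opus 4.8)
The plan is to show that the portal tree $T_\Gamma$ can be constructed by a simple local procedure on $\Gamma$, and then invoke Theorem~\ref{thm:hybrid_on_grid} to conclude that this runs in $\bigO(1)$ rounds of the \HYBRID model on $G$. The construction itself uses only a constant number of rounds of local communication between adjacent grid nodes, so no global communication is needed here at all.

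First I would argue correctness of the rule described just before the lemma statement: each grid node $v$ that has a horizontal edge $\{v,v_W\}$ to the left adds this edge to $T_\Gamma$ exactly when its southern neighbor $v_S$ does not have a horizontal edge to its left (where $v_S$ not existing, i.e.\ $v$ being the southernmost node of its portal, also counts as ``no such edge''). The key observation is that in each maximal ``vertical strip'' of horizontally adjacent pairs of portals, exactly the bottommost such pair contributes an edge, so between any two horizontally-adjacent portals $\mathcal P,\mathcal Q$ (with $\mathcal Q$ to the left) precisely one edge is selected per maximal vertical run of cells where both portals are present. Combined with the fact that $P'$ is hole-free (Lemma~\ref{lem:cell-polygon-simple}), one shows that the selected horizontal edges together with all vertical edges $E_{vert}$ form a spanning tree of $\Gamma$: it is connected because $\Gamma$ is connected and every portal is internally connected via $E_{vert}$, and it is acyclic because a cycle would have to alternate between vertical runs inside portals and selected horizontal edges, which (using that only one edge is picked per vertical run of a portal pair and that there are no holes) would force a corresponding cycle of active cells enclosing an inactive region, contradicting simplicity of $P'$. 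Strictly speaking, the lemma as stated only asserts the running time, so for the proof I would focus on implementing the rule; the tree property is used later and can be cited or re-derived where needed.

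For the running time: each node $v$ first checks locally (in $0$ rounds, from its known incident edges in $\Gamma$) whether it has a left edge. Then, in one round of local communication, $v$ sends a bit indicating ``I have a left edge'' to its northern and southern neighbors $v_N,v_S$ in $\Gamma$. After this round, $v$ knows whether $v_S$ has a left edge, so $v$ can decide locally whether to add $\{v,v_W\}$ to $T_\Gamma$. In one more round $v$ informs $v_W$ that the edge $\{v,v_W\}$ is a tree edge, so both endpoints mark it. This is $\bigO(1)$ rounds of local communication on $\Gamma$. By Theorem~\ref{thm:hybrid_on_grid}, each such round is simulated by $\bigO(1)$ rounds in $G$, and since the representation $R$ is already computed in $\bigO(1)$ rounds (Lemma~\ref{lem:compute_representation}), the total is $\bigO(1)$ rounds, as claimed.

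I do not expect a genuine obstacle here — this is an easy warm-up lemma. The only point requiring a little care is handling boundary cases cleanly: a node with no southern neighbor, a portal of length one, or two portals that are horizontally adjacent along several disjoint vertical runs (each such run must contribute exactly one tree edge, so that the eventual structure is a tree rather than a forest or a graph with cycles). If a fully rigorous argument for the tree property is wanted inside this proof, the mild subtlety is the acyclicity argument, which mirrors the cell-cycle reasoning in the proof of Lemma~\ref{lem:cell-polygon-simple}: a cycle in $(V_\Gamma, E_{vert}\cup T\text{-edges})$ would project to a closed walk of active cells bounding an inactive region, contradicting that $P'$ is simple.
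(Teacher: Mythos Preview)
Your running-time argument is correct and essentially identical to the paper's (one local round to learn whether $v_S$ has a left edge, then decide locally). The paper's proof also includes the tree property, so it is not optional.

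For the tree property the paper takes a cleaner route than your cycle-chasing sketch: since $P'$ is simple (Lemma~\ref{lem:cell-polygon-simple}), the cells of any portal connect two points on the single boundary of $P'$, so removing that portal disconnects $\Gamma$. Hence every portal is a cut set, which immediately forces the portal adjacency structure to be acyclic; connectedness is inherited from $\Gamma$. This separator argument is shorter and avoids re-running the cell-cycle reasoning of Lemma~\ref{lem:cell-polygon-simple}.

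One small confusion in your write-up: the case ``two portals horizontally adjacent along several disjoint vertical runs'' cannot occur. A portal is a maximal contiguous vertical interval of grid nodes, so the set of $y$-coordinates at which two horizontally adjacent portals are both present is the intersection of two intervals, hence itself a single interval. Thus the rule always selects exactly one horizontal edge between any two adjacent portals. (Note that if multiple disjoint runs \emph{could} occur, your prescription of one edge per run would produce parallel edges and hence a cycle, not a tree --- so that caveat would not help anyway.) This does not break your argument, it just means the boundary case you flag is vacuous.
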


\begin{proof}
    The runtime of $\bigO(1)$ rounds is clear, as each node $v$ only needs to communicate for one round with its southern neighbor $v_S$ in the portal.
    We provide arguments on why the construction is a tree.
    Since the cell polygon $P'$ is simple (Lemma~\ref{lem:cell-polygon-simple}), the cells of vertices in a portal connect two points on the same polygonal boundary of $P'$.
    Thus, removing these cells disconnects $P'$ and therefore removing the vertices of a portal from $\Gamma$ disconnects $\Gamma$.
    This means the portal graph is acyclic, i.e., a tree.
    Since $\Gamma$ is connected, the portal graph is connected as well.
\end{proof}

Given the portal tree $T_{\Gamma}$, we want to compute a unique label for each grid node that reflects its structure as portal tree.
First, we root $T_{\Gamma}$ at the grid node $r$ whose representative is the UDG node $u$ with minimal identifier, using pointer jumping (Appendix~\ref{app:sec:pointer_jumping}) on the cycle of all grid nodes that corresponds to an Euler tour (Appendix~\ref{app:techniques:rooting}).

Now we compute the labelling for the (rooted) portal tree. For each grid node $v$ in $T_{\Gamma}$, we aim to assign an interval $I_v = [l_v, r_v] \in \mathbb{N}^2$ to $v$, such that $I_v \supset I_w$ for any child node $w$ of $v$ in $T_{\Gamma}$.
To obtain the left interval border $l_v$ for each grid node $v$ in the portal tree, we perform a depth-first traversal (DFS) on $T_{\Gamma}$ in $\bigO(\log n)$ rounds, using Lemma~\ref{app:dfs} (see Appendix~\ref{app:techniques:dfs}).
The value $l_v$ is then the preorder number of $v$ according to the DFS. 
Note that $l_v < l_u$ for any node $u$ lying in the subtree of $v$.
We then compute the number $r_v$, corresponding to the maximum left interval border among all nodes in $v$'s subtree.
In a nutshell, we first let all nodes compute some value $d \in \bigO(\log n), d \geq \log D(T_{\Gamma})$, where $D(T_{\Gamma})$ is the depth of the portal tree.
Then we generate additional edges in $T_{\Gamma}$ for $d$ iterations, by performing pointer-jumping on the paths from the leaf nodes of $T_{\Gamma}$ to the root.
We perform the pointer-jumping technique in a condensed way to ensure that the node degrees do not exceed $\bigO(\log n)$.
With the help of these additional edges, we let each node $v \in T_{\Gamma}$ compute the value $r_v$ as an aggregate of the $l_u$-values of all nodes $u$ that are contained in the subtree $T_{\Gamma}(v)$ of $T_{\Gamma}$ with root $v$.
We elaborate on this approach in Appendix~\ref{app:techniques:max_preorder} (see Lemma~\ref{lem:max_preorder_number}).

After the algorithm has terminated, each node $v$ knows the correct value $r_v$ and thus its interval $I_v=[l_v,r_v]$.
Observe that grid nodes which are in different branches of the portal tree have incomparable labels.
We obtain the following lemma:

\begin{lemma} \label{lem:labeling}
	Given a rooted portal tree $T_{\Gamma}$, each node $v \in T_{\Gamma}$ can compute an interval $I_v=[l_v,r_v]$ in $\bigO(\log n)$ rounds, such that $I_v \supset I_w$ for any child node $w$ of $v$ in $T_{\Gamma}$.
\end{lemma}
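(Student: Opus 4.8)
The plan is to establish Lemma~\ref{lem:labeling} by implementing a standard DFS-interval (parenthesis-style) labelling, but with the twist that each step must fit the $\bigO(\log n)$ round budget of the $\NCC_0$-based \HYBRID model on $\Gamma$. First I would run the distributed DFS traversal on the rooted portal tree $T_\Gamma$ guaranteed by Lemma~\ref{app:dfs}, obtaining in $\bigO(\log n)$ rounds a preorder number which we set as the left border $l_v$. The key structural fact here is that preorder numbering satisfies $l_v < l_u$ whenever $u$ lies in the subtree rooted at $v$, and that nodes in disjoint branches receive disjoint ranges of preorder numbers; this follows directly from how DFS visits a subtree contiguously before backtracking.

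The second and more delicate step is computing $r_v$, the maximum $l_u$ over all $u$ in the subtree $T_\Gamma(v)$. Since $T_\Gamma$ may have depth $\Theta(n)$, we cannot simply propagate values up the tree in a round-per-level fashion. The plan is to first have every node learn a common value $d \in \bigO(\log n)$ with $d \ge \log D(T_\Gamma)$ (e.g.\ via an aggregate of depth estimates), then perform $d$ rounds of pointer jumping along the root paths so that after iteration $i$ each node knows its ancestor at distance $2^i$; this lets each node reach the root in $\bigO(\log n)$ rounds. The subtlety flagged in the overview is controlling node degrees: naive pointer jumping can blow up the in-degree of a node to $\Theta(n)$, violating the $\bigO(\log n)$ message bound of $\NCC_0$. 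I would invoke Lemma~\ref{lem:max_preorder_number} (the ``condensed'' pointer jumping of Appendix~\ref{app:techniques:max_preorder}), where the aggregation of $l_u$-values over subtrees is done so that each node only ever handles $\bigO(\log n)$ messages per round, yielding the correct $r_v = \max_{u \in T_\Gamma(v)} l_u$ after $\bigO(\log n)$ rounds.

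Finally I would combine the two pieces: since every $u$ in the subtree of $v$ has $l_v < l_u \le r_v$, and every such $u$ also has $l_v \le l_u$ and $r_u \le r_v$ (because $T_\Gamma(u) \subseteq T_\Gamma(v)$, so the max over $u$'s subtree is at most the max over $v$'s subtree), we get $I_w = [l_w, r_w] \subset [l_v, r_v] = I_v$ for every child $w$ of $v$; the inclusion is strict because $l_v < l_w$. Disjointness for nodes in different branches follows from the contiguity of DFS subtree visits. The total round count is $\bigO(\log n)$ for DFS plus $\bigO(\log n)$ for the condensed pointer-jumping aggregation, giving the claimed bound.

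I expect the main obstacle to be the degree-control argument inside the $r_v$ computation: making precise that the condensed pointer jumping (i) still lets information reach the root in $\bigO(\log n)$ rounds despite deep trees, and (ii) never forces any node to send or receive more than $\bigO(\log n)$ messages of $\bigO(\log n)$ bits in a single round as required by $\NCC_0$. Everything else — the DFS, the preorder property, and the final interval-nesting bookkeeping — is routine once that aggregation primitive (Lemma~\ref{lem:max_preorder_number}) is in hand.
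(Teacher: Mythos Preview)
Your proposal is correct and follows essentially the same approach as the paper: obtain $l_v$ as the DFS preorder number via Lemma~\ref{app:dfs}, then compute $r_v$ as the subtree maximum via the condensed pointer-jumping aggregation of Lemma~\ref{lem:max_preorder_number}, with the degree-control concern you flag being exactly the reason the paper uses that condensed variant. Your final nesting argument ($l_v < l_w$, $r_w \le r_v$ from $T_\Gamma(w) \subseteq T_\Gamma(v)$) is slightly more explicit than the paper's, but the overall structure is identical.
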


Now that each grid node $v$ knows its interval in $T_{\Gamma}$ we need to perform one final step.
In addition to its own (unique) interval, a grid node $v$ needs to know the interval that has been assigned to the node $v_i$ which is closest to the root within its own portal.
We call this label the \emph{portal label} of $v$.
The node on a portal which is closest to the root can determine this locally.
Each portal label can then be broadcasted to all nodes within the respective portal in $\bigO(\log |\mathcal P|)$ rounds (see Lemma~\ref{lemma:portal_broadcasting}), so we obtain the following lemma (cf. Figures~\ref{fig:grid_preprocessing:c} and~\ref{fig:grid_preprocessing:d}).
\begin{lemma} \label{lem:portal_label}
	After $\bigO(\log n)$ rounds, each grid node $v$ in the portal $\mathcal P=(v_1,\ldots,v_k)$ knows the interval $I_{v_i}$ of the node $i \in P$ closest to the root of the portal tree.
\end{lemma}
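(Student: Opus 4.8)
The plan is to combine three ingredients, each of which is established or straightforward by the time we reach this lemma. First, by Lemma~\ref{lem:labeling}, after $\bigO(\log n)$ rounds each grid node $v$ in the (rooted) portal tree $T_\Gamma$ knows its own interval $I_v = [l_v, r_v]$; this is the data we need to propagate. Second, within a single portal $\mathcal P = (v_1,\ldots,v_k)$ the nodes are vertically connected in $\Gamma$, so they form a path in the local communication graph; moreover the node closest to the root of $T_\Gamma$ can be identified locally (it is the unique node of $\mathcal P$ whose tree-parent lies outside $\mathcal P$, or the root itself), since each node already knows its parent pointer in $T_\Gamma$ from the rooting step. Third, broadcasting a single $\bigO(\log n)$-bit message from one node of a path of length $\ell$ to all others can be done in $\bigO(\log \ell)$ rounds in the \HYBRID model on $\Gamma$ — this is exactly the content of Lemma~\ref{lemma:portal_broadcasting}, which uses pointer jumping along the portal together with the global ($\NCC_0$) edges to shortcut the path.

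First I would have the root-closest node $v_i$ of each portal $\mathcal P$ detect that it is root-closest (one round of local communication suffices: a node is root-closest iff its parent edge in $T_\Gamma$ is not a vertical edge inside $\mathcal P$). Then I invoke Lemma~\ref{lemma:portal_broadcasting} to broadcast $I_{v_i}$ to all $k$ nodes of $\mathcal P$ in $\bigO(\log |\mathcal P|) = \bigO(\log n)$ rounds; these broadcasts run in parallel across all portals because each node belongs to exactly one portal and the portals are vertex-disjoint, so there is no congestion in either the local or the global mode. Adding the $\bigO(\log n)$ rounds already spent to obtain the intervals $I_v$ themselves (Lemma~\ref{lem:labeling}) and the $\bigO(\log n)$ rounds for rooting $T_\Gamma$, the total is still $\bigO(\log n)$ rounds, and afterwards every node $v \in \mathcal P$ stores the portal label $I_{v_i}$ as claimed.

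The only real obstacle is ensuring the per-portal broadcast genuinely runs in $\bigO(\log |\mathcal P|)$ rounds rather than $\bigO(|\mathcal P|)$: a naive relay along the vertical path would cost $\Theta(|\mathcal P|)$ rounds, which may be $\Theta(n)$. This is resolved by pointer jumping combined with the global edges of the $\NCC_0$ model — the mechanism packaged in Lemma~\ref{lemma:portal_broadcasting} — so that after $\bigO(\log |\mathcal P|)$ rounds every node of the portal has a direct global link to $v_i$ (or to a node already holding the message) and can receive $I_{v_i}$ directly; care is needed only to keep node degrees (and hence the number of global messages per round) bounded by $\bigO(\log n)$, which is handled by performing the pointer jumping in the condensed fashion already used for the preceding lemmas. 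Granting Lemma~\ref{lemma:portal_broadcasting}, the remaining argument is the bookkeeping of round counts and the parallelism observation above, and the lemma follows.
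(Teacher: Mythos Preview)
Your proposal is correct and matches the paper's own argument essentially step for step: identify the root-closest node of each portal locally via its parent pointer, then invoke Lemma~\ref{lemma:portal_broadcasting} to broadcast its interval $I_{v_i}$ to all nodes of the portal in $\bigO(\log |\mathcal P|)$ rounds, with all portals handled in parallel. The paper's justification is terser but relies on exactly the same two ingredients.
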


Observe that, the way we defined the portal labels we obtain the property that for portal labels of two neighboring portals, one portal's label is always a subset of the other.
Combining Lemma~\ref{lem:labeling} and Lemma~\ref{lem:portal_label} yields the main result of this section.

\begin{theorem}
    Computing the labelling $\ell_\Gamma: V_\Gamma \to \{0,1\}^+$ for the grid graph $\Gamma$ as part of the routing scheme $\mathcal R_{\Gamma}$ can be done within $\bigO(\log n)$ rounds.
\end{theorem}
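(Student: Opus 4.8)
The plan is to assemble the theorem directly from the three lemmas of this section together with the simulation result of Theorem~\ref{thm:hybrid_on_grid}, so the argument is essentially bookkeeping of round complexities and label sizes. First I would invoke Lemma~\ref{lem:portal_tree_construction} to build the portal tree $T_\Gamma$ of $\Gamma$ in $\bigO(1)$ rounds; correctness of this step --- that the result is genuinely a spanning \emph{tree} and not a graph with cycles --- rests on Lemma~\ref{lem:cell-polygon-simple}, and hence on the hole-freeness assumption. Next I would root $T_\Gamma$ at the grid node whose representative has minimum UDG identifier, by forming the Euler-tour cycle of $T_\Gamma$ and running pointer jumping along it, at a cost of $\bigO(\log n)$ rounds. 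The delicate point here, which I would simply state is taken care of by the cited techniques, is that pointer jumping is performed in a ``condensed'' fashion so that no grid node ever accumulates more than $\bigO(\log n)$ incident edges, keeping every round implementable in $\NCC_0$.

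Then I would apply Lemma~\ref{lem:labeling} to compute, in $\bigO(\log n)$ rounds, an interval $I_v = [l_v, r_v]$ for every $v \in T_\Gamma$ with $I_v \supset I_w$ for every child $w$ of $v$: here $l_v$ is the DFS preorder number (so the intervals satisfy the parenthesis property, and grid nodes in different branches receive incomparable labels) and $r_v$ is the maximum preorder number in the subtree of $v$, obtained via the bounded-degree pointer-jumping aggregation. Finally I would invoke Lemma~\ref{lem:portal_label}: the node of each portal that is closest to the root can identify its own interval locally and broadcast it within the portal in $\bigO(\log|\mathcal{P}|) \subseteq \bigO(\log n)$ rounds, so that every grid node learns its portal label. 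Setting $\ell_\Gamma(v)$ to be the concatenation of $I_v$ and the portal label of $v$ --- two pairs of integers bounded by $|V_\Gamma| = \bigO(n)$, hence $\bigO(\log n)$ bits --- completes the construction, and summing the four phases gives $\bigO(\log n)$ rounds on $\Gamma$. Composing with Theorem~\ref{thm:hybrid_on_grid}, which simulates one \HYBRID round on $\Gamma$ by $\bigO(1)$ rounds in $G$, preserves the $\bigO(\log n)$ bound when everything is run on $\UDG(V)$.

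The main obstacle --- and the reason this section spends its effort on the supporting lemmas rather than on the theorem itself --- is the degree control in the rooting and subtree-aggregation steps: naive pointer jumping would create stars of size $\Theta(|V_\Gamma|)$, which $\NCC_0$ cannot sustain, so all the real work is hidden in the $\bigO(\log n)$-round, bounded-degree implementations behind Lemmas~\ref{lem:labeling} and~\ref{lem:portal_label}. Given those, the theorem is immediate: it is simply the concatenation of an $\bigO(1)$-round construction with three $\bigO(\log n)$-round subroutines.
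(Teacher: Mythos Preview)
Your proposal is correct and follows essentially the same approach as the paper: the theorem is stated immediately after Lemma~\ref{lem:portal_label} with the one-line justification ``Combining Lemma~\ref{lem:labeling} and Lemma~\ref{lem:portal_label} yields the main result of this section,'' so it is indeed pure bookkeeping over the preceding lemmas. Your write-up is actually more thorough than the paper's, explicitly tracking the rooting step, the label size, and the simulation via Theorem~\ref{thm:hybrid_on_grid}, all of which the paper leaves implicit.
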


\section{Compact Routing Scheme for the Grid Graph} \label{sec:grid_routing}
Finally, we explain our routing strategy for transmitting a packet between two nodes $s, t \in V_{\Gamma}$ in the grid graph, leading to the routing function $\rho_\Gamma: V_\Gamma \times \{0,1\}^+ \to V_{\Gamma}$.
At the start of the routing protocol, the node $s$ generates a message $m$ that contains the identifier of the target node $t$, as well as $t$'s label and portal label.
The goal of our routing strategy is to route $m$ to $t$ along grid edges via an optimal path in the grid graph.
To do so, each grid node receiving the message $m$ has to decide which of its grid neighbors to forward $m$ to, using only the information stored in $m$, and the information stored in its own local memory.
Briefly, the strategy works as follows.
While we are not at the portal containing $t$, we always try going left (west) or right (east) first by going to a portal whose label is closest to the portal label of the target node $t$.
If going east or west is not possible, we go up (north) or down (south) instead by comparing $g$'s own label with the \emph{actual} label of the target node $t$.
Once we are at the portal that contains the target node, we only consider going up or down until we reach $t$.

\subparagraph{Detailed Description.}
We describe the routing strategy in more detail now (see Algorithm~\ref{algo:routing} in Appendix~\ref{app:grid_routing}
for pseudocode).
Assume we are at a grid node $g$ and want to route a message $m$ to a grid node $t$.
We introduce the following notation for the information known to $g$. Note that grid nodes obtain this information in one communication round with their neighbors.
\begin{definition} \label{def:grid_scheme:local_information}
    The information required to be stored by a grid node $g \in V_{\Gamma}$ are denoted by the following variables.
    \begin{enumerate}[(i)]
        \item $g.L \in \mathbb{N}^2$: $g$'s own interval given to it by labelling of the portal tree.
        \item $g.P \in \mathbb{N}^2$: The portal label of the portal containing $g$.
        \item $g_N, g_S, g_E, g_W \in V_{\Gamma} \cup \{\perp\}$: $g$'s grid neighbors in north, south, east and west direction ($\perp$ denotes that there is no such neighbor). For each of these grid neighbors $g$ also knows the label of the grid node 
        and the portal label of the grid node.
    \end{enumerate}
\end{definition}

Additionally, we store the label $t.L$ of $t$ and the portal label $t.P$ of $t$ in the message $m$, so $g$ knows these as well upon receipt of $m$.
Note that storing this information at $g$ requires only $\bigO(\log n)$ bits.
Assuming that $g \neq t$, $g$ must decide which of its grid neighbors $g_W, g_E, g_N, g_S$ to forward $m$ to.
Node $g$ first checks if it is in the same portal as $t$ by comparing $g.P$ and $t.P$.
Assume that this is not the case.
Then $g$ has to consider the following cases. We use the notation $a \incomp b$ to denote that label $a$ is \emph{incomparable} to label $b$, i.e., $a \not \subseteq b \wedge b \not \subseteq a$.

We start by explaining how a message $m$ is routed in horizontal direction.
\begin{enumerate}[(i)]
    \item $g.P \subset t.P$ or $g.P \supset t.P$. In case $g.P \subset t.P$ then $g$ checks if either $g.P \subset g_W.P \subset t.P$ or $g.P \subset g_E.P \subset t.P$ holds (only one of these conditions can be true). In the first case, $g$ forwards $m$ to $g_W$, in the second case $g$ forwards $m$ to $g_E$. 
    If none of the conditions hold (for example, if $g_W = \perp$ or $g_E = \perp$), then $g$ routes $m$ vertically (see the description below). 
    The case $g.P \supset t.P$ works analogously.

    \item $g.P \incomp t.P$. 
    In this case $g$ tries to forward $m$ horizontally to a node, whose portal label is a superset of $g.P$. By doing so, $m$ eventually reaches a node $g'$ whose portal label is also a superset of $t.P$ (at the closest ``common ancestor portal''), and case (i) is considered.
    If neither $g_W$ nor $g_E$ satisfies this condition or does not exist, $g$ routes vertically.
\end{enumerate}

We now explain how $m$ is routed in vertical direction.
We do this if $g$ has not been able to route $m$ horizontally (either because its horizontal neighbors are not appropriate, or because they do not exist) or if it is already contained in the same portal as the target node $t$.
Again, $g$ considers the following cases, this time for its own label $g.L$ instead for $g.P$ and for the actual label $t.L$ instead of the portal label $t.P$.
\begin{enumerate}[(i)]
    \item $g.L \subset t.L$ or $g.L \supset t.L$. In the case $g.L \subset t.L$ node $g$ checks if either $g.L \subset g_N.L \subset t.L$ or $g.L \subset g_S.L \subset t.L$ holds. In the first case, $g$ forwards $m$ to $g_N$, in the second case $g$ forwards $m$ to $g_S$. The case $g.L \supset t.L$ works analogously.
    \item $g.L \incomp t.L$.
    If the labels $g.L$ and $t.L$ are incomparable, $g$ tries to forward $m$ vertically to a node, whose label is a superset of $g.L$.
    This is the case for either $g_N$ or $g_S$, depending on the location of the root of the labeled tree.
\end{enumerate}

\subparagraph{Analysis of the Routing Strategy.}
We show that our routing strategy is local, efficient, and correct, so it fulfills all requirements for a routing scheme.
Our routing strategy is local, as each node $v$ can determine the next node to forward the message $m$ to based solely on the $\bigO(\log n)$ bits of local information, and the labels $t.L$ and $t.P$ given to $v$ upon receipt of $m$.

Regarding efficiency of our routing strategy, we prove with arguments in Appendix~\ref{app:grid_routing} 
that it is optimal. The idea is to show that in case the message is routed in a specific direction, there exists at least one optimal path that moves in the same direction.
We conclude the following theorem.

\begin{theorem}
    \label{thm:routing_scheme_grid}
    A local, correct and exact routing scheme $\mathcal R_{\Gamma}$ for $\Gamma$ using node labels and local space of $\bigO(\log n)$ bits 
    can be computed in $\bigO(\log n)$ rounds in the \HYBRID model.
\end{theorem}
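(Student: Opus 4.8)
The plan is to assemble the three ingredients developed in Sections~\ref{sec:portals_and_labeling} and~\ref{sec:grid_routing}: (a) the construction of the portal tree $T_\Gamma$, (b) the interval labelling $\ell_\Gamma$ including portal labels, and (c) the routing function $\rho_\Gamma$ described by the case analysis above. First I would verify the \emph{space bounds}: each grid node stores its own interval $g.L \in \mathbb{N}^2$, its portal label $g.P \in \mathbb{N}^2$, and, for each of its up to four grid neighbors, the neighbor's identifier, interval, and portal label. Since preorder numbers and subtree maxima are integers in $[1,|V_\Gamma|]$ and $|V_\Gamma| = O(n)$, each such value fits in $O(\log n)$ bits, so the total local state and the label (which is the concatenation $I_v \circ \text{ID}$ augmented with the portal label) are $O(\log n)$ bits. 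The message $m$ carries only $t.L$, $t.P$ and $\text{ID}(t)$, again $O(\log n)$ bits, so the scheme is compact and \emph{local} in the required sense: $\rho_\Gamma(g,\ell_t)$ is computed purely from $g$'s stored data and the label in $m$.

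Next I would establish the \emph{running time}. Rooting $T_\Gamma$ at the min-ID node via pointer jumping on the Euler-tour cycle takes $O(\log n)$ rounds (Appendix~\ref{app:techniques:rooting}); the DFS preorder numbering takes $O(\log n)$ rounds (Lemma~\ref{app:dfs}); computing $r_v$ as a subtree aggregate of the $l_u$-values via condensed pointer jumping takes $O(\log n)$ rounds while keeping degrees $O(\log n)$ (Lemma~\ref{lem:max_preorder_number}); and broadcasting portal labels within each portal takes $O(\log|\mathcal P|) = O(\log n)$ rounds (Lemma~\ref{lemma:portal_broadcasting}). Combined with the $O(1)$-round portal-tree construction (Lemma~\ref{lem:portal_tree_construction}), the preprocessing is $O(\log n)$ rounds, as claimed. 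All of this runs in the \HYBRID model on $\Gamma$, which by Theorem~\ref{thm:hybrid_on_grid} is simulated with $O(1)$ overhead on $\UDG(V)$, so the bound transfers.

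The substantive part is \emph{correctness and exactness} of $\rho_\Gamma$, which I would prove by an invariant argument: whenever the message sits at a grid node $g \neq t$, there is a shortest $g$--$t$ path in $\Gamma$ whose first edge is the one $\rho_\Gamma$ selects. I would split on whether $g$ is in $t$'s portal. If it is, only vertical moves are used, and here exactness follows because a portal is a path in $\Gamma$, intervals along it are nested by Lemma~\ref{lem:labeling}, and comparing $g.L$ with $t.L$ identifies the unique direction along the portal toward $t$; monotone progress along this path is a shortest route. If $g$ is not in $t$'s portal, the horizontal rules (i)--(ii) exploit the key structural fact, noted after Lemma~\ref{lem:portal_label}, that the portal labels of two neighboring portals are always nested: moving to a neighbor whose portal label lies strictly between $g.P$ and $t.P$ (case (i)) walks down $T_\Gamma$ toward the portal of $t$, while moving to a neighbor whose portal label is a proper superset of $g.P$ (case (ii)) walks up toward the common-ancestor portal; in either case the portal tree guarantees this makes progress toward $t$'s portal without revisiting a portal. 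The delicate point — and the step I expect to be the main obstacle — is arguing that the ``jump horizontally whenever possible, otherwise move vertically'' priority never costs optimality: one must show that if the algorithm is forced to move vertically (no suitable $g_W$ or $g_E$ exists) then some shortest $g$--$t$ path also begins with a vertical edge, and symmetrically that taking an available horizontal jump is never worse than staying in the current portal. This requires reasoning about the geometry of monotone staircase paths between cells in $\Gamma$ together with the tree structure of $T_\Gamma$, precisely the argument deferred to Appendix~\ref{app:grid_routing}; granting it, concatenating the per-step optimal choices yields a shortest $s$--$t$ path, so $\mathcal R_\Gamma$ is exact, and termination (hence correctness) follows since each step strictly decreases a lexicographic potential (portal-tree distance to $t$'s portal, then vertical distance to $t$).
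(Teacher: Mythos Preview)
Your proposal is correct and tracks the paper's own argument closely: the space and running-time accounting match the paper almost line for line, and your core invariant --- that every step taken by $\rho_\Gamma$ is the first edge of some shortest $g$--$t$ path --- is exactly Lemma~\ref{lem:grid_routing:optimality}, which the paper establishes via Lemmas~\ref{lem:routing:horizontal} and~\ref{lem:routing:vertical}. Two small remarks. First, your phrase ``geometry of monotone staircase paths'' misidentifies the mechanism of the deferred argument: the paper does not reason about staircases but instead uses the structural fact that, because the portal-adjacency graph is a tree (a consequence of $P'$ being simple, Lemma~\ref{lem:cell-polygon-simple}), \emph{every} optimal $s$--$t$ path traverses the \emph{same} sequence of portals (Lemmas~\ref{obs:portal_sequence:opt} and~\ref{obs:portal_sequence}); once that is in hand, the horizontal case follows by showing the algorithm never revisits a portal, and the vertical case follows because the horizontal-move condition depends only on portal labels, so if the algorithm cannot go horizontally at $g$ then no optimal path does either (Lemma~\ref{lem:routing:vertical:aux}). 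Second, your lexicographic potential ``(portal-tree distance, then vertical distance to $t$)'' is shaky as stated, since ``vertical distance to $t$'' is not meaningful while $g$ is in a different portal; the paper instead shows directly that the algorithm always moves (Lemma~\ref{lem:grid_routing:move}) and combines this with per-step optimality to get correctness --- which you already sketch as well, so the potential is superfluous.
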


\section{Conclusion}

We showed that for any \HYBRID network with a hole-free $\UDG(V)$, a compact routing scheme can be computed for $\UDG(V)$ in just $\bigO(\log n)$ rounds. There are various interesting directions for follow-up research. For example, we suspect that our approach can be generalized to 3 dimensions (potentially more) where the corresponding ``unit ball graph'' implies a polyhedron of genus 0. In particular, some approach akin to multidimensional range trees might work: define $\Gamma$ analogously in a three dimensional grid; dissect $\Gamma$ along 2d-hyperplanes to obtain 2d-portals in $\Gamma$ --- if one then comes up with a routing scheme to find the correct 2d-portal, then this can be applied alongside the 2d-routing algorithm presented here to find the correct node in that 2d-portal. There are unresolved issues, however. Another interesting direction is to efficiently compute compact routing schemes for \emph{arbitrary} connected UDGs, or ideally, to find efficient solutions for arbitrary planar graphs. This seems to be a daunting task; a simpler setting might be to consider UDGs with a small number of holes where our grid construction could be of help. Finally, it would be interesting to think about adaptations of our routing scheme to also minimize congestion, which should be possible in the special case of hole-free UDGs (see for example the case where the contour polygon is a square~\cite{CKK12}).

\phantomsection
\addcontentsline{toc}{section}{References}
\bibliography{literature}

\newpage\appendix

\section{Lower Bound Without Global Communication} \label{sec:lowerbound}

The counterexample in Figure~\ref{fig:wheel} (which follows the arguments of \cite{KuhnWZ02}) demonstrates that it is impossible to set up a compact routing scheme with constant stretch in polylogarithmic time when just relying on the unit-disk graph, even if it does not have radio holes and the geometric location of the destination is known: Suppose the destination is in the center. With a wheel of $\Theta(\sqrt{n})$ spikes of $\Theta(\sqrt{n})$ length each, no node on the wheel can guess the right spike with probability better than $\Theta(1/\sqrt{n})$, so routing information is required for a constant stretch, but in order to compute the information needed for a constant stretch, the starting point $w$ of the spike leading to the destination in the center needs to be identified, which requires $\Omega(\sqrt{n})$ communication rounds.

\begin{theorem}
    There is no deterministic (randomized) distributed algorithm that can, within $o(\sqrt{n})$ rounds, compute a compact routing scheme that achieves (expected) $o(\sqrt{n})$ stretch when only communicating over the unit-disk graph. This claim holds even when the algorithm can use geometric information\footnote{i.e., each node knows and can communicate its own location, the location of its neighbours, and each source will be given the location of its destination before routing.} and the unit-disk graph has no radio-holes.
\end{theorem}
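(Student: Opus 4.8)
The plan is to build the explicit ``wheel'' (or ``spider'') gadget sketched in the paragraph preceding the statement, and then run an adversary/indistinguishability argument against any algorithm that uses $o(\sqrt n)$ rounds. First I would describe the hard instance precisely: take a central hub node $c$ and attach to it $k = \Theta(\sqrt n)$ disjoint paths (``spikes''), each of length $\Theta(\sqrt n)$, realized as a geometric unit-disk graph in the plane (the spikes fan out radially around $c$; one has to check that consecutive spikes are far enough apart that the only edges are along the spikes and at the hub, which is where the constant $\Theta(\sqrt n)$ on both parameters is used, and one also has to verify the resulting contour polygon is simple so that the instance really is hole-free). The endpoints of the spikes lie on a large circle; label the far endpoint of spike $i$ by $w_i$. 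The routing task I would consider is: source $s = w_i$ for some spike $i$, destination $t = c$; each source is handed the geometric location of $c$ (which is the same for every instance), and the network topology is the same in every instance. The only thing that differs between instances is which spike the source sits on, but that is a property of the source's own position, not something it can change; what matters is that the routing \emph{tables/labels} must be computed in $o(\sqrt n)$ rounds.

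The core of the argument is a locality/information bound. Consider the moment the algorithm finishes preprocessing after $T = o(\sqrt n)$ rounds. A node at distance (in hops) more than $T$ from the hub $c$ has, over the course of the whole preprocessing, received no information that depends on anything outside its $T$-hop ball --- in particular nothing about the structure of the hub or of the other spikes, because all of that is more than $T$ hops away along the unique spike path. Since each spike has length $\Theta(\sqrt n) \gg T$, the outer half of every spike looks locally identical: a node $v$ in the outer portion of spike $i$ and the corresponding node $v'$ in the outer portion of spike $j$ have isomorphic $T$-hop neighborhoods (as geometric graphs, up to the rigid motion taking one spike to the other), and crucially the ID distributions can be coupled so that the executions are indistinguishable. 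Therefore the label $\ell(v)$, the routing function $\rho(v,\cdot)$, and in particular the forwarding decision $\rho(v, \ell(c))$ must be ``the same'' at $v$ and $v'$ --- but at most one of the two local choices (roughly: ``move toward the hub along this spike'') can actually be the correct direction relative to the embedding, since the two spikes point in different directions and the node cannot tell which spike it is on. Making this precise requires being slightly careful about what ``the same'' means once we allow each node to know its own coordinates: I would handle this either by translating/rotating copies of the gadget so that the outer portions of different spikes occupy \emph{identical} coordinates in different instances (then indistinguishability is literal), or by noting that the coordinate of a far node, while distinct, carries no usable information about which spike leads to the hub because the adversary is free to permute which radial direction is ``spike $i$''. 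For the randomized version one replaces ``indistinguishable'' by ``identically distributed executions'', so the probability that $v$'s forwarding decision points along its own spike toward $c$ is at most $1/k = \Theta(1/\sqrt n)$ for the worst spike; hence with probability $1 - o(1)$ the packet starting at $w_i$ is, at its first step, sent the ``wrong way'' or in a way uncorrelated with the true spike, and since a compact stateless scheme cannot accumulate history, either it loops/fails or it is forced to traverse an amount of the structure proportional to $\sqrt n$ before it can possibly reach $c$, giving stretch $\Omega(\sqrt n)$ against the optimal path of length $\Theta(\sqrt n)$ (any correct path from $w_i$ to $c$ has length $\Theta(\sqrt n)$, but a scheme that guesses wrong must first walk out to the hub-adjacent region of a wrong spike and back, or cannot be correct at all).

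The cleanest way to finish, avoiding a delicate analysis of what a mis-routed stateless packet does, is a counting contradiction: there are $k = \Theta(\sqrt n)$ spikes; the preprocessing produces, for the single fixed destination $c$, one routing table; by the indistinguishability coupling above, the per-node forwarding decisions in the outer portions of the spikes cannot depend on which spike a node belongs to, so across the $k$ instances obtained by designating different spikes as ``the one the source is on'' (which are all the \emph{same} network with the same tables, only the source differs), the algorithm commits to the same local behavior for all of them, yet a constant-stretch correct scheme would need mutually inconsistent behavior for at least $k-1$ of them. Formally: if the scheme had stretch $o(\sqrt n)$ it would in particular route $w_i \to c$ along a path of length $o(n)$; such a path must stay on spike $i$ except possibly for a short detour near $c$, so it forces, for every far node $v$ on spike $i$, the decision $\rho(v,\ell(c))$ to be ``toward $c$ along spike $i$''; applying this for all $i$ simultaneously contradicts the fact that the decision at $v$ is a function only of $v$'s $T$-hop view, which does not determine $i$. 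The main obstacle, and the step I would spend the most care on, is precisely this last reconciliation between ``the algorithm may use geometric coordinates'' and ``far nodes on different spikes are indistinguishable'': the right move is to make the hard family consist of rigid copies of one gadget in which the \emph{only} change is a relabeling/reorientation that leaves the outer spike regions setwise identical, so that indistinguishability is exact and the coordinate information is provably useless for picking the correct spike; everything else (the UDG realizability of the wheel, hole-freeness of its contour polygon, and the $T$-hop information-propagation bound) is routine given the standard $\mathsf{LOCAL}$-style arguments and the geometry already set up in Section~\ref{sec:gridgraph}.
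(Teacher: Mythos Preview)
Your gadget fails to create the needed ambiguity. In your spider, every spike is attached to the hub $c$, so a node $v$ on spike $i$ can route to $c$ by the trivial rule ``forward to the neighbor whose Euclidean distance to $c$ is smaller''; this is locally computable from the coordinates you grant, correct on every spike simultaneously, and achieves stretch $1$ with zero preprocessing. Your indistinguishability claim (``$v$ cannot tell which spike it is on'') is true for the abstract graph but false once coordinates are available: the direction of the vector $c-v$ distinguishes the spikes and always singles out the correct neighbor. Your attempted fix via rigid motions does not help either, because rotating the gadget also rotates $c$, and the source is handed $c$'s location, so the relative geometry---and hence the correct next hop---is invariant. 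In short, there is no hidden information in your instance family, hence nothing the preprocessing would need $\Omega(\sqrt n)$ rounds to discover, and your counting contradiction (``a constant-stretch scheme would need mutually inconsistent behavior on different spikes'') is simply false: ``walk toward $c$'' is one consistent behavior that works for all spikes at once.

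The paper's wheel (adapted from \cite{KuhnWZ02}) differs in exactly the crucial way: the potential sources sit on the \emph{wheel}, the $\Theta(\sqrt n)$ spikes hang off the wheel and all point toward the central destination, but only \emph{one} spike is actually connected to the center; the others are dead ends of length $\Theta(\sqrt n)$. Now the destination's coordinates are genuinely useless, because every spike points in the right geometric direction; the hidden bit is \emph{which} spike is live, and that bit resides at a single vertex $w$ of the wheel and cannot propagate more than $o(\sqrt n)$ hops along the wheel in $o(\sqrt n)$ rounds. A source far from $w$ therefore cannot distinguish the live spike from the dead ones, so any choice is wrong with probability $1-\Theta(1/\sqrt n)$, forcing (expected) stretch $\Omega(\sqrt n)$. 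The difference from your construction is not cosmetic: the paper plants a topological asymmetry (one connected spike among many look-alikes) that is invisible to geometry, whereas your instance is geometrically transparent.
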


\vspace{-4mm}
\begin{figure}[H]
    \centering
    \includegraphics[width=0.37\textwidth]{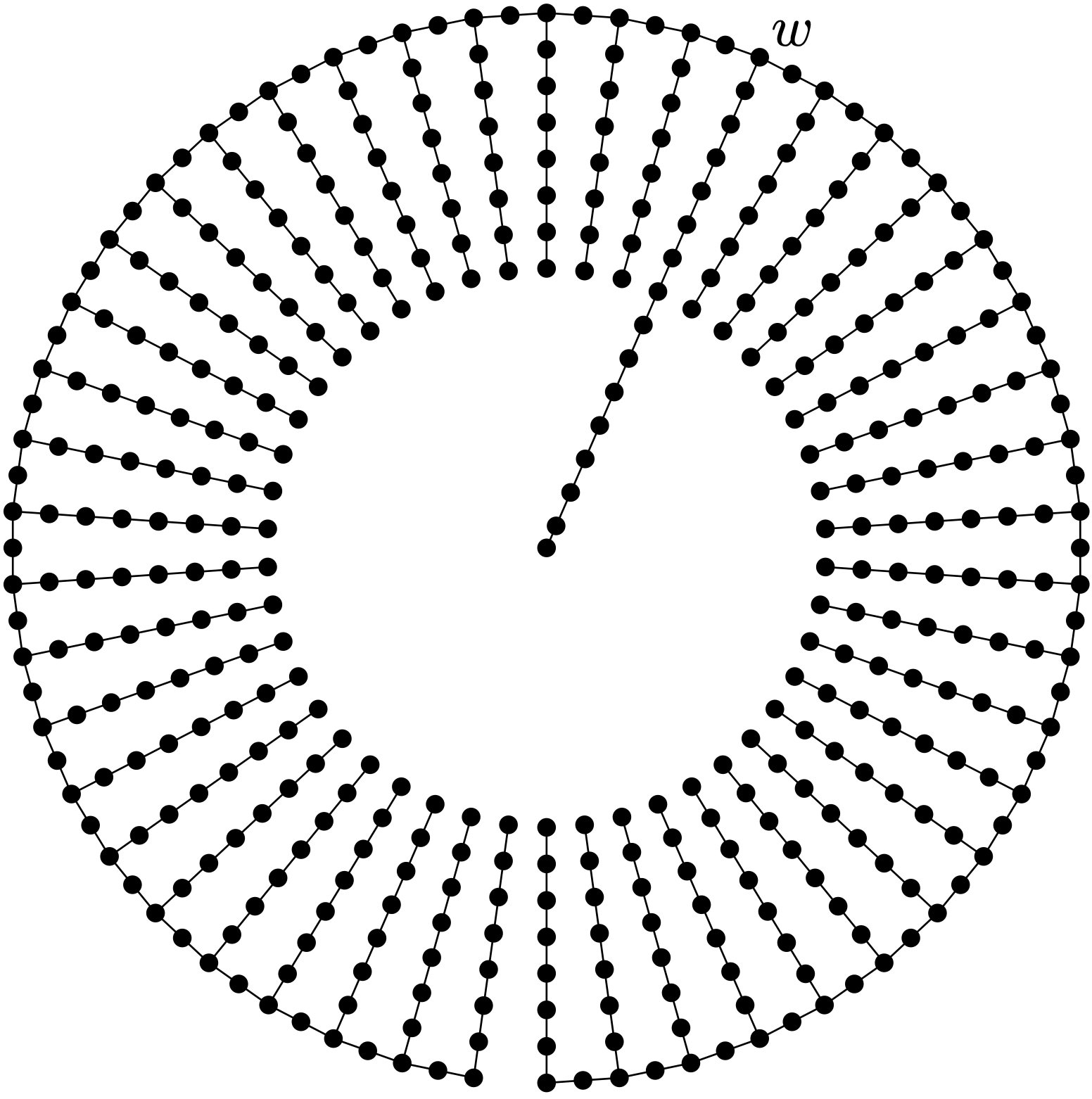}
    \caption{Lower bound graph (slightly adapted from Figure 8 in \cite{KuhnWZ02}).}
    \label{fig:wheel}
\end{figure}

\section{Additional Technical Details for Section~\ref{sec:active-nodes}}\label{app:geometry-lemmas}

We provide the full proofs for the lemmas in Section~\ref{sec:active-nodes}, as well as a few auxiliary lemmas for some geometric properties that we require when we work with unit disc graphs.

The following is the proof of the lemma that claimed that all nodes inside a grid cell $C$ are neighbours of the representative of $C$.

The following two lemmas concern an important geometric property which is required for the remaining proofs in this section.
By definition, a pair of points in an UDG is connected with an edge if the distance between them is small enough. The same is true for a pair of edges in a UDG. We formalize this idea in Lemma~\ref{lem:close-segments-connect} below.
Specifically, we show that if two edges $e_1$ and $e_2$ are within distance $\frac{1}{2} \sqrt{3}$ of each other, then at least one endpoint from $e_1$ is connected to at least one endpoint of $e_2$.

\begin{lemma}
\label{lem:points_on_segments}
    Let $s_1$, $s_2$ be two disjoint line segments. Then at least one of the endpoints of $s_1$ or $s_2$ has minimum distance exactly $\dist(s_1,s_2)$ to the respective other segment.
\end{lemma}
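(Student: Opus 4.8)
The plan is to argue by considering the point(s) realizing the minimum distance between the two segments. Let $(p_1, p_2) \in s_1 \times s_2$ be a pair achieving $\dist(s_1, s_2) = \|p_1 - p_2\|$; such a pair exists by compactness of the segments. I want to show that either $p_1$ can be taken to be an endpoint of $s_1$, or $p_2$ can be taken to be an endpoint of $s_2$ (or both).

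The key observation is a standard variational/convexity argument. Suppose for contradiction that $p_1$ lies in the relative interior of $s_1$ and $p_2$ lies in the relative interior of $s_2$. By the first-order optimality condition for the convex function $(x,y) \mapsto \|x-y\|^2$ on the product of the two segments, the vector $p_1 - p_2$ must be orthogonal to the direction of $s_1$ (otherwise, moving $p_1$ slightly along $s_1$ strictly decreases the distance — this is where interiority of $p_1$ is used) and likewise orthogonal to the direction of $s_2$. Hence $s_1$ and $s_2$ are both orthogonal to the same nonzero vector $p_1 - p_2$, so $s_1 \parallel s_2$. But then the segment $p_1 p_2$ is a common perpendicular, and sliding $p_1$ to an endpoint of $s_1$ while sliding $p_2$ correspondingly keeps the distance equal to $\|p_1 - p_2\|$ — concretely, the endpoint of $s_1$ nearest to $p_2$ (projecting onto the line through $s_2$) still realizes the distance, since parallel segments have constant perpendicular distance along their overlap. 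This lets us replace $p_1$ by an endpoint of $s_1$, giving the claim. The remaining case, that $p_1 - p_2 = 0$, would mean the segments intersect, contradicting disjointness.

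The main obstacle — or rather the only subtlety — is handling the degenerate parallel case cleanly: when $p_1 - p_2$ is a common perpendicular, the minimizer is not unique, and one must verify that among all minimizing pairs there is one with an endpoint. I would handle this by noting that the set of minimizing pairs is itself a segment (or point) in $s_1 \times s_2$, and at its extreme the first coordinate reaches an endpoint of $s_1$. Alternatively, and perhaps more simply, I can organize the whole proof as a direct case analysis: fix any $p_2 \in s_2$ minimizing $\dist(p_2, s_1)$ over $s_2$; the nearest point of $s_1$ to $p_2$ is either an endpoint of $s_1$ (done), or the foot of the perpendicular from $p_2$ to the line containing $s_1$, in which case $p_2 - p_1 \perp s_1$; then re-examine whether $p_2$ is an endpoint of $s_2$ or an interior point where the perpendicularity to $s_2$ also forces parallelism, and conclude as above. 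Either route reduces everything to elementary planar geometry, so I expect the write-up to be short.
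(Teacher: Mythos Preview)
Your proposal is correct and follows essentially the same line as the paper's proof: pick a minimizing pair, observe that if both points are interior then the connecting segment is perpendicular to both lines (hence the segments are parallel), and in the parallel case slide to an endpoint that still realizes the distance. The paper phrases the orthogonality step via ``nearest point on a segment is either the orthogonal projection or an endpoint'' rather than a variational argument, but the content is the same.
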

\begin{proof}
    The distance between segments is the shortest distance between a pair of points on the segment. If all closest pairs of points from $s_1$ and $s_2$ contain an endpoint, we are done, so assume there exists a pair of points $p_1,p_2$ that realize the shortest distance between $s_1,s_2$ and are not endpoints.
    The nearest point on a segment $s$ to a point $p$ is either the orthogonal projection $p'$ of $p$ onto the line through $s$ (if $p'$ lies on $s$) or the nearest endpoint of $s$ to $p'$. So, $p_1$ is the orthogonal projection of $p_2$ on $s_2$ and vice versa. As the segments are disjoint, we have $p_1\neq p_2$, so the line through $p_1$ and $p_2$ is orthogonal to both $s_1$ and $s_2$.

    This means $s_1$ and $s_2$ are parallel and the nearest endpoint of $s_1$ and $s_2$ to the right of $p_1,p_2$ has a projection on the other segment with the same distance.
\end{proof}

\begin{lemma}\label{lem:close-segments-connect}
Let $s_1,s_2$ be two segments of length at most $1$. If the distance between $s_1$ and $s_2$ is at most $\frac{1}{2}\sqrt{3}$, then there is an endpoint $p_1$ of $s_1$ and an endpoint $p_2$ of $s_2$ such that $\|p_1-p_2\|\leq 1$.
\end{lemma}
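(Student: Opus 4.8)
The plan is to use Lemma~\ref{lem:points_on_segments} to collapse the problem to a single endpoint. That lemma (applied to disjoint $s_1,s_2$) gives an endpoint of one of the two segments whose distance to the other segment equals $\dist(s_1,s_2)$; by the symmetry of the statement I would assume this endpoint is $p_1 \in s_1$, so that $\dist(p_1,s_2)=\dist(s_1,s_2)=:d\le\frac12\sqrt3$. Let $m$ be the point of $s_2$ realizing this distance, i.e.\ $\|p_1-m\|=d$. Since $m$ lies on $s_2$ and $|s_2|\le 1$, the endpoint $p_2$ of $s_2$ nearer to $m$ satisfies $\|m-p_2\|\le\frac12|s_2|\le\frac12$; this $p_2$ is the endpoint we will produce.

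Next I would split on the location of $m$. If $m$ is an endpoint of $s_2$, take $p_2:=m$ and conclude $\|p_1-p_2\|=d\le\frac12\sqrt3\le 1$. Otherwise $m$ is an interior point of $s_2$, and then the standard fact about nearest points on a segment (also invoked in the proof of Lemma~\ref{lem:points_on_segments}) says $m$ is the orthogonal projection of $p_1$ onto the line through $s_2$, so $p_1m\perp s_2$. The Pythagorean theorem in the right triangle $p_1 m p_2$ then gives
\[
\|p_1-p_2\|^2=\|p_1-m\|^2+\|m-p_2\|^2\le\Bigl(\tfrac12\sqrt3\Bigr)^2+\Bigl(\tfrac12\Bigr)^2=\tfrac34+\tfrac14=1,
\]
so $\|p_1-p_2\|\le 1$, as required.

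One technical point I would address up front: Lemma~\ref{lem:points_on_segments} is stated for \emph{disjoint} segments, so I would first dispose of the case where $s_1$ and $s_2$ share a point $x$. There, since $x\in s_1$ and $|s_1|\le 1$, some endpoint $p_1$ of $s_1$ has $\|x-p_1\|\le\frac12$, and likewise some endpoint $p_2$ of $s_2$ has $\|x-p_2\|\le\frac12$; the triangle inequality yields $\|p_1-p_2\|\le 1$. I do not expect any real obstacle here — the argument is a short case analysis — the only thing to be careful about is that the hypothesis constant $\frac12\sqrt3$ is exactly calibrated so that the Pythagorean bound $\frac34+\frac14$ closes at $1$, and that the ``interior point $\Rightarrow$ orthogonal projection'' structure is what makes that computation legitimate.
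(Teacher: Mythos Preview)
Your proposal is correct and follows essentially the same approach as the paper: first handle the intersecting case via the triangle inequality through the common point, then apply Lemma~\ref{lem:points_on_segments} to obtain an endpoint $p$ realizing the segment distance, and split on whether the foot of the perpendicular lies in the interior of the other segment (Pythagoras with $\frac34+\frac14=1$) or coincides with an endpoint. The only cosmetic difference is that the paper phrases the dichotomy as ``projection lies on $s$ or not'' while you phrase it as ``$m$ is interior or an endpoint,'' which is the same split.
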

\begin{proof}
    If $s_1$ intersects $s_2$ in some point $x$, then at least one endpoint of both segments has distance at most $1/2$ to $x$. By the triangle inequality, these endpoints have distance at most $1$ to each other.

    Otherwise, $s_1$ and $s_2$ are disjoint. By Lemma~\ref{lem:points_on_segments}, there is an endpoint $p$ with distance at most $\frac{1}{2}\sqrt{3}$ to the other segment $s$. If the orthogonal projection of $p$ on the line segment extending $s$ does not lie on $s$, the closest point to $p$ on $s$ is an endpoint of $s$ and we have a pair of endpoints of distance at most $\frac{1}{2}\sqrt{3} \leq 1$.
    Otherwise, let $p'$ be the orthogonal projection of $p$ on $s$. Since the length of $s$ is at most $1$, there is an endpoint $q$ on $s$ s.t. $\|p'-q\|\leq 1/2$. As $\triangle p'pq$ is a right-angled triangle, Pythagoras implies $\|p-q\| = \sqrt{\|p'-q\|^2 + \|p'-p\|^2} \leq \sqrt{(1/2)^2 + (\frac{1}{2}\sqrt{3})^2} = 1$.
\end{proof}

Next we show that all candidate representatives are close to each other: a property which allows us to easily choose a representative locally in few rounds.

\begin{lemma}\label{lem:rep-candidates-3-hops}
	Let $u,v\in \mathcal C(g)$ be candidates for representing a grid node $g$. Then $\hop_G(u,v)\leq 3$.
\end{lemma}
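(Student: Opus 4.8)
The plan is to case-split on which of the two defining sets $\mathcal C_1(g)$, $\mathcal C_2(g)$ the candidates $u$ and $v$ come from, and in each case exhibit a short path in $G$ between them using the geometric tools already established (Lemma~\ref{lem:close-segments-connect} and Lemma~\ref{lem:points_on_segments}). Throughout, recall that the grid cell $C$ of $g$ has side length $c = \tfrac{1}{10}\sqrt{15}$, so its diagonal has length $c\sqrt{2} = \tfrac{1}{10}\sqrt{30} \le 1$; in particular any two points lying in $C$ are at distance at most $1$ and hence (if they are nodes of $V$) adjacent in $G$.

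First I would handle the easy sub-cases. If $u, v \in \mathcal C_1(g)$, then both $u$ and $v$ are vertices of triangles of $G$ containing the point $g$. Two such triangles share the interior point $g$, so there is a node $w$ that is a vertex of one of the triangles and lies within distance $1$ of $g$ along that triangle... actually the cleanest argument: any triangle containing $g$ has all three vertices within distance $1$ of each other (it is a triangle of a UDG), and $g$ lies inside it; I would argue that two triangles sharing the point $g$ have vertices that are pairwise at hop-distance at most... Let me instead take the uniform approach below, which also covers this case. If $u, v \in \mathcal C_2(g)$, then $u$ is an endpoint of an edge $e_u$ intersecting $C$ and $v$ is an endpoint of an edge $e_v$ intersecting $C$. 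Since $e_u$ and $e_v$ both meet the cell $C$, which has diameter $c\sqrt 2 = \tfrac{1}{2}\sqrt{30}/\sqrt{... }$— more simply, $\dist(e_u, e_v) \le c\sqrt 2 = \tfrac{1}{10}\sqrt{30} \le \tfrac{1}{2}\sqrt 3$, so by Lemma~\ref{lem:close-segments-connect} some endpoint $p$ of $e_u$ is adjacent to some endpoint $q$ of $e_v$. Then $u - p - q - v$ is a path of at most $3$ hops (with the convention that if $u = p$ or $q = v$ the path is shorter), giving $\hop_G(u,v) \le 3$.

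The remaining mixed case is $u \in \mathcal C_1(g)$, $v \in \mathcal C_2(g)$ (the symmetric case is identical). Here $u$ is a vertex of a triangle $\triangle$ of $G$ containing $g$, and $v$ is an endpoint of an edge $e_v$ intersecting $C$. The key observation is that every edge of $\triangle$ is a segment of length at most $1$ whose distance to $e_v$ is small: since $g \in \triangle \cap C$ and $e_v \cap C \ne \emptyset$, at least one edge $e_u$ of $\triangle$ comes within distance $c\sqrt 2 \le \tfrac12\sqrt 3$ of $e_v$ (indeed, $g$ itself lies in $C$ and is at distance at most $c\sqrt2$ from any point of $e_v\cap C$, and $g$ lies on the boundary of $\triangle$ or in its interior, so some boundary edge of $\triangle$ is within that distance of $e_v$). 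Then Lemma~\ref{lem:close-segments-connect} gives an endpoint $p$ of $e_u$ adjacent to an endpoint $q$ of $e_v = e_v$, and since $u$ and $p$ are both vertices of the triangle $\triangle$ they are adjacent (or equal) in $G$, while $q$ and $v$ are both endpoints of $e_v$ hence adjacent (or equal). The path $u - p - q - v$ then has at most $3$ hops. Uniformly: in every case we produce a path $u - p - q - v$ where $\{u,p\}$, $\{p,q\}$, $\{q,v\}$ are edges of $G$ (or collapse), so $\hop_G(u,v) \le 3$.

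The main obstacle is the mixed case, specifically justifying that some edge of the triangle containing $g$ actually gets within distance $\tfrac12\sqrt 3$ of the edge $e_v$ — one has to be a little careful because $g$ is an interior point of the triangle, not on its boundary, so a short computation bounding the distance from the triangle's boundary to a point of $C$ is needed (the worst case being when $g$ is deep inside the triangle, but then the triangle, having all sides $\le 1$, cannot be large, so its boundary is still close to $g$ and hence to $e_v$). This is the place where the precise value $c = \tfrac1{10}\sqrt{15}$ is used, exactly as in the proofs of Lemma~\ref{lem:hops_node_rep} and Lemma~\ref{lem:rep_grid_3_hops}.
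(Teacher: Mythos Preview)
Your case split and the $\mathcal C_2\times\mathcal C_2$ argument match the paper exactly. The other two cases, however, are not finished in your proposal, and the paper closes them by clean topological dichotomies rather than the metric computation you allude to.

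For $u,v\in\mathcal C_1(g)$ you start an argument, abandon it, and claim the ``uniform approach'' covers it, but you never actually say how. The paper's argument here is short and worth knowing: two UDG triangles $T_1,T_2$ that share the interior point $g$ either have non-intersecting boundaries, in which case one triangle is contained in the other and \emph{all} six vertices are pairwise adjacent in $G$, or some edge of $T_1$ intersects some edge of $T_2$, and then Lemma~\ref{lem:close-segments-connect} (with distance $0$) yields an adjacent pair of vertices. Either way the induced subgraph on the six vertices has diameter at most $3$.

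For the mixed case you correctly isolate the obstacle (the point $g$ may lie in the interior of the triangle $\triangle$, so its boundary need not meet $C$), but you leave the ``short computation'' undone. The paper sidesteps any inradius estimate with another dichotomy: since $g\in \triangle\cap C$, either $C\subseteq\triangle$, or some edge $e_T$ of $\triangle$ intersects $C$. In the first sub-case the edge $e_v$ (which meets $C$) meets $\triangle$, so either $e_v\subseteq\triangle$ (both endpoints adjacent to every vertex of $\triangle$) or $e_v$ crosses an edge of $\triangle$ and Lemma~\ref{lem:close-segments-connect} applies. In the second sub-case $e_T$ and $e_v$ both meet $C$, so $\dist(e_T,e_v)\le c\sqrt 2\le\tfrac12\sqrt3$ and Lemma~\ref{lem:close-segments-connect} applies directly. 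This avoids the delicate constant-chasing you anticipated; the value $c=\tfrac1{10}\sqrt{15}$ enters only through the innocuous bound $c\sqrt2\le\tfrac12\sqrt3$, exactly as in your $\mathcal C_2\times\mathcal C_2$ case.
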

\begin{proof}
	Firstly, we show that $\forall u, v \in \mathcal{C}_1(g), \hop_G(u, v) \leq 3$. Assume (w.l.o.g) that $u$ is a vertex of triangle $T_1$ and that $v$ is a vertex of triangle $T_2$, and that both $T_1$ and $T_2$ contain $g$.
	If the edges of these triangles do not intersect each-other, one triangle is contained in the other, so all vertices of the triangles are adjacent. Otherwise, at least two sides of the triangles intersect, so by Lemma~\ref{lem:close-segments-connect}, there are exist two vertices of $T_1$ and $T_2$ that are joined by an edge.
	Therefore, the induced subgraph on this pair of triangles has diameter at most $3$, so $\hop_G(u,v)\leq 3$.
	
	Next, we show that $\forall u, v \in \mathcal{C}_2(g), \hop_G(u, v) \leq 3$. All nodes in $\mathcal{C}_2(g)$ are incident to an edge of $G$ that intersects the cell $C$ which $g$ is in. Let $e_1,e_2$ be two such edges. Since they both intersect $C$, their distance is at most $\sqrt{2}\cdot c = \frac{1}{10}\sqrt{30} \leq \frac{1}{2}\sqrt{3}$. By Lemma~\ref{lem:close-segments-connect},
	$e_1$ and $e_2$ have two endpoints that are adjacent in $G$. Hence the subgraph induced by the endpoints of $e_1,e_2$ has diameter at most $3$ and $\hop_G(u,v)\leq 3$.
	
	Finally, we show that $\forall u \in \mathcal{C}_1(g), v \in \mathcal{C}_2(g), \hop_G(u, v) \leq 3$.
    Let $e$ be an edge incident to $v$ which intersects $C$, and let $T$ be a triangle containing $g$, of which $u$ is a vertex. There are two cases:
    \begin{itemize}
        \item Case $C \subseteq T$. In this case, $e$ also intersects $T$ in the interior or the boundary of $T$. If $e$ lies entirely within $T$ then both endpoints are adjacent to all vertices of $T$; if $e$ intersects an edge of $T$ then Lemma~\ref{lem:close-segments-connect} applied to $e$ and an edge of $T$ which $e$ intersects shows that an endpoint of $e$ is adjacent to a vertex of $T$.
        \item An edge $e_T$ of $T$ intersects $C$ (includes the case $T \subseteq C$). Since $e$ intersects $C$ as well we have that $\dist(e,e_T) \leq \sqrt 2 c \leq \frac{1}{2}\sqrt{3}$. Again, we apply Lemma~\ref{lem:close-segments-connect} on $e$ and $e_T$ to show that an endpoint of $e$ is adjacent to a vertex of $T$.
    \end{itemize}
    Since in both cases, a vertex of $e$ is adjacent to a vertex of $T$, we have $hop_G(u,v) \leq 3$.\qedhere
\end{proof}

\section{Additional Technical Details for Section~\ref{sec:gridrep}}
\label{app:additional_gridrep}

This section contains the technical details for the results in Section~\ref{sec:gridrep}. We show a variety of properties here which all relate to either the ease with which we can compute the representation $R$; or the simulation of it.
The first claim to prove is that each node in the UDG only has to represent grid nodes within a constant distance.

\begin{lemma}
\label{lem:candidates_are_close_to_grid_nodes}
    For all grid nodes $g \in V_\Gamma$ and for all nodes $v \in \mathcal{C}(g), \Vert v - g \Vert \leq (1 + \frac{c}{\sqrt{2}})$.
\end{lemma}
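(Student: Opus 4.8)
Recall that $\mathcal{C}(g) = \mathcal{C}_1(g) \cup \mathcal{C}_2(g)$, so the plan is a straightforward case distinction on which of the two sets $v$ belongs to, using in each case the fact that UDG edges have Euclidean length at most $1$ and that a grid cell has side length $c$ (hence half-diagonal $\tfrac{c}{\sqrt 2}$).

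First I would handle the case $v \in \mathcal{C}_1(g)$. By definition $v$ is a vertex of a triangle $T$ of $\UDG(V)$ that contains the point $g$. Since the three sides of $T$ are edges of $\UDG(V)$, each has length at most $1$, so the diameter of $T$ (as a subset of $\mathbb{R}^2$) is at most $1$; as both $v$ and $g$ lie in $T$, this gives $\|v-g\| \le 1 \le 1 + \tfrac{c}{\sqrt 2}$.

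Next I would handle the case $v \in \mathcal{C}_2(g)$. Here $v$ is an endpoint of some edge $e$ of $\UDG(V)$ that intersects the cell $C$ of which $g$ is the center. Pick any point $x \in e \cap C$. Since $x$ and $g$ both lie in $C$, we have $\|x-g\| \le \tfrac{c}{2}\sqrt{2} = \tfrac{c}{\sqrt 2}$ (the half-diagonal of $C$). Since $x$ and $v$ both lie on the segment $e$, which has length at most $1$, we have $\|v-x\| \le 1$. By the triangle inequality, $\|v-g\| \le \|v-x\| + \|x-g\| \le 1 + \tfrac{c}{\sqrt 2}$. Combining the two cases proves the claim.

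There is essentially no obstacle here; the only point requiring a moment of care is making sure the bound in the first case ($\le 1$) is indeed dominated by $1+\tfrac{c}{\sqrt 2}$, which is immediate since $c>0$, so that a single uniform bound holds over all of $\mathcal{C}(g)$.
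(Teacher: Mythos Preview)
Your proof is correct and follows essentially the same approach as the paper: a case split on $\mathcal{C}_1(g)$ versus $\mathcal{C}_2(g)$, using that a UDG triangle has diameter at most $1$ in the first case, and in the second case bounding the distance by the half-diagonal of the cell plus the edge length via the triangle inequality. Your version is in fact slightly more explicit than the paper's (you name the intermediate point $x\in e\cap C$ and invoke the triangle inequality directly), but the argument is the same.
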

\begin{proof}
    If $v \in \mathcal{C}_1(g)$, then $\Vert v - g \Vert \leq 1$, because $g$ is inside a triangle with sides at most $1$ of which $v$ is a vertex.
    If $v \in \mathcal{C}_2(g)$, then $\Vert v - g \Vert \leq (1 + \frac{c}{\sqrt{2}})$. Let $C$ be the grid cell containing $g$. There must be an edge $\{u, v\} \in E$ which intersects $C$. The furthest that $v$ can be therefore is the largest distance away from $g$ in $C$ ($\frac{c}{\sqrt{2}}$) plus the maximum length of the edge ($1$).
\end{proof}

Because the grid granularity is constant, this implies the following observation:

\begin{corollary}
\label{cor:constant_responsibility}
    All nodes in the UDG represent at most a constant number of grid nodes.
\end{corollary}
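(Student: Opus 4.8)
The statement is that every UDG node $v$ is the representative of only $O(1)$ grid nodes. The plan is to combine Lemma~\ref{lem:candidates_are_close_to_grid_nodes} with a simple volume/packing argument in the plane. First I would recall that, by definition, $v$ can only be the representative of a grid node $g$ if $v \in \mathcal C(g)$; hence by Lemma~\ref{lem:candidates_are_close_to_grid_nodes} every grid node $g$ represented by $v$ satisfies $\|v-g\|\le 1+\frac{c}{\sqrt 2}$. So all such grid nodes lie inside a disc $D$ of radius $1+\frac{c}{\sqrt 2}$ centred at $v$.

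The second step is a counting argument: the grid nodes are the centres of cells of an axis-parallel square grid of side length $c=\frac{1}{10}\sqrt{15}$, so distinct grid nodes are at distance at least $c$ from one another, and each occupies a disjoint $c\times c$ cell. The number of grid-cell centres lying in $D$ is therefore at most the number of $c\times c$ cells meeting a disc of radius $1+\frac{c}{\sqrt 2}$, which is bounded by $\big(\lceil \frac{2(1+\frac{c}{\sqrt 2})}{c}\rceil + 1\big)^2$, a fixed constant since $c$ is a fixed constant. (One can also phrase this via areas: the cells whose centres lie in $D$ are contained in a slightly larger disc of radius $1+\frac{c}{\sqrt 2}+\frac{c}{\sqrt 2}$, so their number is at most $\pi\big(1+\sqrt2\,c\big)^2/c^2$.) Plugging in $c=\frac{1}{10}\sqrt{15}$ gives an explicit constant bound.

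There is essentially no obstacle here — the only thing to be a little careful about is making sure the bound on $\|v-g\|$ from Lemma~\ref{lem:candidates_are_close_to_grid_nodes} is applied to \emph{all} grid nodes $v$ might represent (which is immediate, since "representative of $g$" implies "member of $\mathcal C(g)$"), and that the grid granularity $c$ is indeed a universal constant independent of $n$ and of the instance, which it is by the definition in Section~\ref{sec:gridgraphdef}. Hence the count is $O(1)$, completing the proof.
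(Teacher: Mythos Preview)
Your proposal is correct and follows exactly the argument the paper intends: the paper states the corollary immediately after Lemma~\ref{lem:candidates_are_close_to_grid_nodes} with the one-line justification ``Because the grid granularity is constant, this implies the following observation,'' and your packing/area argument is precisely the way to flesh that out. There is nothing to add.
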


Next we aim to compute the representatives of grid nodes efficiently. Let $\mathcal C(g)$ be the candidate nodes from Definition \ref{def:representatives}, i.e., all nodes that come into question as representative for some grid node $g \in V_\Gamma$. Then $\mathcal C(g)$ induces a constant diameter sub-graph by Lemma \ref{lem:rep-candidates-3-hops}. Hence, all nodes in $\mathcal C(g)$ can quickly agree who will represent $g$ using broadcast-based aggregation to constant depth.

For this task we define an algorithm $\mathcal A_g$ that computes the representative of $g \in V_\Gamma$. Further below, we will show that we can run all algorithms $\mathcal A_g, g \in V_\Gamma$ in parallel without causing too much congestion. Let us look at $\mathcal A_g$ in more detail.

\section{Computing and Simulating the Grid Graph}

This section focuses on computing a grid graph given a local network that forms a unit-disk graph with a given embedding.

\subsection{Algorithm $\mathcal A_g$ for Computing Representatives of $V_\Gamma$}

For 3 rounds we do the following: In the first round, all nodes $u \in V$ which are a member of the set of candidates $\mathcal C(g)$ for some grid node $g$ prepare a message $M_g(u)$ containing the coordinates of $g$, the coordinates of $u$, $ID(u)$, and a \emph{priority bit} $b(u)$, which is set to $1$ if $u \in \mathcal{C}_1(g)$, else $0$.
Then they send their respective message $M_g(u)$ to all their neighbours.
In the two remaining rounds, each node $u$ broadcasts to all its neighbours the set of messages $M_g(v)$, where $g$ is a grid node, and $v$ is the currently known ``best'' node to represent $g$ which $u$ has heard about so far. Note that the ``best'' node $v$ is defined as the node with the highest priority bit $b(u)$; ties are broken first by proximity to $g$ (closest wins), and then by ID.

\begin{lemma}
\label{lem:parallelize-algorithms}
    Let $\mathcal A_1, \ldots , \mathcal A_k$ be distributed \CONGEST algorithms in a UDG $G$ with the following property. There is a set of discs $D_1, \dots , D_k \subseteq \mathbb R^2$ such that in any point of $\mathbb R^2$ only a \emph{constant} number of the $D_i$ overlap and each $\mathcal A_i$ is restricted to $D_i$, i.e., it causes only nodes within $D_i$ to send or receive messages. Then we can run $\mathcal A_1, \ldots , \mathcal A_k$ in parallel in \CONGEST.
\end{lemma}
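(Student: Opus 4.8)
The plan is to show that a single communication round of each $\mathcal{A}_i$ can be executed simultaneously with all the others, incurring only a constant-factor blowup in message size per edge, which is absorbed by the $\bigO(\log n)$-bit bandwidth of \CONGEST.

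First I would establish the key geometric fact: because the local communication graph is a UDG, every edge of $G$ has Euclidean length at most $1$, so for any edge $\{u,w\}$ both endpoints lie within distance $1$ of each other and hence within a common disc of radius $1$. More to the point, if some algorithm $\mathcal{A}_i$ restricted to disc $D_i$ causes a message to be sent across edge $\{u,w\}$, then $\mathcal{A}_i$ must cause both $u$ and $w$ to participate, so both $u$ and $w$ lie in $D_i$; in particular the midpoint of the segment $uw$ lies in $D_i$ as well (discs are convex). Thus the edge $\{u,w\}$ is ``used'' only by those algorithms $\mathcal{A}_i$ for which $D_i$ contains the midpoint of $uw$ (or, more crudely, both endpoints). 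By the hypothesis that no point of $\mathbb{R}^2$ is covered by more than a constant number $\kappa$ of the discs $D_i$, at most $\kappa$ algorithms ever send a message across any fixed edge in a given round.

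Next I would spell out the parallel simulation: we simulate one round of all of $\mathcal{A}_1,\dots,\mathcal{A}_k$ in $\bigO(1)$ rounds of \CONGEST on $G$ (hence $T$ rounds of each in $\bigO(T)$ rounds total). In a single simulated round, each node $u$ knows, for every $i$ with $u \in D_i$, its current state in $\mathcal{A}_i$; it assembles the message that $\mathcal{A}_i$ would send on each incident edge $\{u,w\}$ (this is nonempty only when $w \in D_i$ too). Across a fixed edge $\{u,w\}$, the set of indices $i$ contributing a message is $\{i : u,w \in D_i\}$, which has size at most $\kappa = \bigO(1)$; each such message is $\bigO(\log n)$ bits by assumption that the $\mathcal{A}_i$ are \CONGEST algorithms. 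So the concatenation of all these messages — tagged with the index $i$ — is still $\bigO(\log n)$ bits and fits in a single \CONGEST message (or a constant number of them). Each recipient $w$ then demultiplexes by the tags and feeds the $i$-th component into its local simulation of $\mathcal{A}_i$, updating state. This completes one simulated round. Since the simulation preserves the behavior of each $\mathcal{A}_i$ exactly and uses only $\bigO(1)$ real rounds per simulated round, running all the $\mathcal{A}_i$ in parallel costs only a constant factor overhead.

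The main subtlety — and the only place the geometry is genuinely used — is the argument that the number of algorithms contending for any single edge is constant; this rests on (i) the edge being short (UDG property), so both endpoints sit in any disc using that edge, combined with (ii) the bounded-overlap hypothesis on the $D_i$. One should also note that a node $u$ may belong to $\bigO(1)$ of the discs (again by bounded overlap applied to the point $u$), so $u$ only needs to locally simulate a constant number of the $\mathcal{A}_i$, keeping its computation and state bounded as well — though since \CONGEST imposes no computational restriction this is only needed to make the message bookkeeping clean. I would remark that the same argument goes through verbatim in \BCONGEST, since each node broadcasts a single $\bigO(\log n)$-bit message that is the (constant-size, tagged) concatenation of what it would broadcast in each $\mathcal{A}_i$.
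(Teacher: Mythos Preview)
Your proposal is correct and follows the same approach as the paper: both argue that only a constant number of the $\mathcal{A}_i$ contend for any given edge, so their messages can be combined (or time-multiplexed) with only constant overhead. Your write-up is simply more explicit about why the edge contention is constant; note, though, that the UDG/short-edge observation you invoke is superfluous here --- the restriction hypothesis already forces both endpoints $u,w$ of a used edge to lie in $D_i$, so bounded overlap at (say) the point $u$ alone suffices.
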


\begin{proof}
    The lemma formalizes the intuitive observation that if only a constant number of algorithms interfere with each other in a given region of $G$, then we can run these in parallel by combining messages that are send concurrently over an edge into a single one. The message size increases only by a constant factor. Note that one can also utilize time multiplexing to dilate the run-time by a constant factor instead of the message size.
\end{proof}

\begin{lemma}
\label{lem:decide_representatives}
    Let $g \in V_\Gamma$. We can compute the  %unique\ps{really unique?}
    representative $r$ of $g$ characterized in Definition \ref{def:representatives} in $\bigO(1)$ rounds.
\end{lemma}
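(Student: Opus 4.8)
The plan is to show that the three-round algorithm $\mathcal{A}_g$ described above, executed in parallel for all grid nodes $g \in V_\Gamma$, computes every representative. The whole procedure is just a combination of purely local decisions and $\bigO(1)$ rounds of bounded-depth flooding, so the only substantive points are (a) that a single $\mathcal{A}_g$ outputs exactly the node from Definition~\ref{def:representatives}, and (b) that running all the $\mathcal{A}_g$ simultaneously does not violate the \CONGEST bandwidth bound.

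For (a) I would first observe that no communication is needed for a node $u \in V$ to decide, for each grid node $g$, whether $u \in \mathcal{C}_1(g)$ or $u \in \mathcal{C}_2(g)$: $u$ knows its own coordinates and those of all its neighbours, hence it can test $\|v-w\| \le 1$ for each pair $v,w$ of neighbours, enumerate all triangles of $\UDG(V)$ of which it is a vertex, decide which grid-cell centres lie inside them (yielding $\mathcal{C}_1$-memberships), and decide which grid cells are crossed by an edge incident to $u$ (yielding $\mathcal{C}_2$-memberships). By Lemma~\ref{lem:candidates_are_close_to_grid_nodes} every relevant $g$ lies within Euclidean distance $1 + \tfrac{c}{\sqrt2}$ of $u$, so by the constant grid granularity there are only $\bigO(1)$ of them; in particular a cell with centre $g$ is active exactly when $\mathcal{C}(g) \neq \emptyset$, so $\mathcal{A}_g$ correctly does nothing for inactive cells. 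Next, since by Lemma~\ref{lem:rep-candidates-3-hops} the subgraph of $G$ induced by $\mathcal{C}(g)$ has diameter at most $3$, any two candidates of $g$ are joined by a path of at most $3$ edges inside $\mathcal{C}(g)$. Because in each of the three rounds every node keeps and forwards, for each grid node it has heard of, the lexicographically ``best'' candidate message seen so far — highest priority bit $b(\cdot)$, then smallest distance to $g$, then smallest ID — and this aggregation operator is commutative and idempotent, after three rounds every candidate of $g$ (in particular the would-be representative itself) has received the message of the globally best candidate. Matching these tie-breakers to Definition~\ref{def:representatives} shows the selected node is exactly the representative $r$; a single message carries the coordinates of $g$ and of one candidate, an ID and a bit, i.e. $\bigO(\log n)$ bits and a constant number of node locations.

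Part (b) is the step I expect to be the main obstacle. Here I would argue that $\mathcal{A}_g$ can only involve nodes within hop-distance $3$ of some candidate of $g$, hence within Euclidean distance $3 + 1 + \tfrac{c}{\sqrt2} = 4 + \tfrac{c}{\sqrt2}$ of $g$; thus $\mathcal{A}_g$ is confined to the disc $D_g$ of this constant radius centred at $g$. Since the grid nodes form a grid of constant spacing $c$, every point of $\mathbb{R}^2$ lies in only $\bigO(1)$ of the discs $D_g$. Lemma~\ref{lem:parallelize-algorithms} then lets us run all $\mathcal{A}_g$, $g \in V_\Gamma$, simultaneously in $\bigO(1)$ rounds, at the cost of only a constant-factor increase in message size (or, equivalently, in run-time); concretely each node forwards per round the best-candidate message of only the $\bigO(1)$ grid nodes it has heard of, which is $\bigO(\log n)$ bits per edge.

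Putting the three parts together gives the statement: in $\bigO(1)$ rounds every representative (and in fact every candidate of every active cell) learns which node is the representative of each grid node it is a candidate for. The only care left is routine bookkeeping — verifying the "active iff $\mathcal{C}(g)\neq\emptyset$" equivalence used above, and that the per-node message load stays $\bigO(\log n)$ — neither of which requires more than the locality bounds already invoked.
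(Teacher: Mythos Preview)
Your proof is correct and follows essentially the same approach as the paper: use Lemma~\ref{lem:rep-candidates-3-hops} to bound the diameter of the candidate set (hence three flooding rounds suffice for the correct ranking to propagate), then invoke Lemma~\ref{lem:candidates_are_close_to_grid_nodes} together with Lemma~\ref{lem:parallelize-algorithms} and the constant grid spacing to run all instances $\mathcal{A}_g$ in parallel within the \CONGEST budget. You supply more detail than the paper does (the local $\mathcal{C}_1/\mathcal{C}_2$ membership test, the active-iff-$\mathcal{C}(g)\neq\emptyset$ observation, the explicit radius $4+\tfrac{c}{\sqrt2}$), but the skeleton and the lemmas used are identical.
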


\begin{proof}
    Firstly, we claim that after $3$ rounds the broadcast of algorithm $\mathcal A_g$ will send the correct representative of $g$ to all nodes in $\mathcal{C}(g)$. This follows from Lemma~\ref{lem:rep-candidates-3-hops} which states that the diameter of the subgraph induced on $\mathcal{C}(g)$ is at most $3$. As algorithm $\mathcal A_g$ ranks nodes in the same manner as Definition~\ref{def:representatives} the correct representative will propagate throughout $\mathcal{C}(g)$ in this number of rounds.

    We argue that Lemma \ref{lem:parallelize-algorithms} applies for the algorithm instances $\mathcal A_g, g \in V_\Gamma$. All nodes in $\mathcal{C}(g)$ are within distance $(1 + \frac{c}{\sqrt{2}})$ of $g$ (Lemma~\ref{lem:candidates_are_close_to_grid_nodes}). As defined above, the broadcast of algorithm $\mathcal A_g$ runs in $3$ rounds. Therefore a node can only participate in the broadcast of $\mathcal A_g$ if it is within distance $(4 + \frac{c}{\sqrt{2}})$ of $g$. Lemma \ref{lem:parallelize-algorithms} applies due to the minimum distance of $c$ between grid nodes. %clearly there are only constantly many grid nodes which satisfy this for any given node. Therefore in the worst case nodes must send a constant number of messages concerning a constant number of grid nodes over each edge, and this can be done in one round in \CONGEST.
\end{proof}

\subsection{Algorithm $\mathcal B_r$ for Computing Representations of $E_\Gamma$}

To complete the computation of a representation $R$ of $\Gamma$ we will show how to compute the representations of edges $\{g_1,g_2\} \in E_\Gamma$ which correspond to 3-hop paths $\Pi_{r_1,r_2}$ in $G$. Again, we start by describing an algorithm for that task.

Let $r \in V$ be the representative of some $g \in V_\Gamma$. In the first phase $\mathcal B_r$ does the following. It first runs a distributed breadth-first search with root node $r$ for 3 rounds in the local network to construct BFS-tree $T_r$ with a depth of 3 hops. Then each node $v \in V$ with $\hop_G(v,r) \leq 3$ learns ID($r$) and its parent $p_r(v)$ in the tree $T_r$. Assume that we run $\mathcal B_{r'}$ for the representatives $r'$ of \textit{all} grid nodes to construct all BFS-trees $T_{r'}$ in parallel.

The goal of the second phase of $\mathcal B_r$ is to add a path to all neighbor representatives $r'$ with higher ID.
Note that, in particular, $r$ learns ID($r'$) and $p_{r'}(r)$ of all representatives $r'$ of the grid nodes $g'$ that are adjacent to $g$ in $\Gamma$ (due to Lemma \ref{lem:rep_grid_3_hops}). If ID($r$) $<$ ID($r'$), then $r$ adds the edge $\{r, p_{r'}(r)\}$ to $E_R$. Then $r$ sends a message towards $p_{r'}(r)$ that instructs all nodes $v \neq r'$ in the branch of $T_{r'}$ from $r$ to $r'$ (including itself) to join $V_R$ and add the edge $\{v, p_{r'}(v)\}$ to $E_R$. After at most 3 rounds this message reaches $r'$ and phase 2 terminates.

\medskip

\begin{lemma}
    \label{lem:forward_message_in_R}
    Let $\{g,g'\} \in E_\Gamma$ and let $r,r' \in V_R$ be their representatives. One message from $r$ to $r'$ can be delivered in at most $\bigO(1)$ rounds for all such pairs $r,r'$ in parallel.
\end{lemma}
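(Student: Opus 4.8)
The plan is to route the message from $r$ to $r'$ along the branch of the BFS-tree $T_{r'}$ that connects $r$ to $r'$, and argue that (i) this path has at most $3$ hops, and (ii) all such routing tasks, taken over every adjacent pair $(g,g')$ of grid nodes, can be executed simultaneously in $\CONGEST$ with only constant congestion. The main work is in establishing the congestion bound; the hop bound is essentially immediate from earlier results.

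First I would recall from the construction of $\mathcal B_{r'}$ (and from Lemma~\ref{lem:rep_grid_3_hops}) that every representative $r$ of a grid node $g$ adjacent to $g'$ satisfies $\hop_G(r,r') \le 3$, so $r$ lies in $T_{r'}$ and knows its parent pointer $p_{r'}(r)$; hence the $r$--$r'$ branch of $T_{r'}$ has length at most $3$. The message is forwarded hop-by-hop along the parent pointers $p_{r'}(\cdot)$ from $r$ up to $r'$, which takes at most $3$ rounds. (Symmetrically, if $\mathrm{ID}(r) > \mathrm{ID}(r')$, one instead uses the branch of $T_r$ containing $r'$; the argument is identical.) This gives the $\bigO(1)$-round bound for a single pair.

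Next I would bound the congestion when all pairs route in parallel. For each unordered adjacent pair $\{g,g'\}$ in $\Gamma$, let the "active" representative be the one with smaller ID, say $r$; the message then travels inside $T_{r'}$, whose nodes all lie within hop-distance $3$ of $r'$, hence within Euclidean distance $3 \cdot 1 = 3$ of $r'$, and by Lemma~\ref{lem:candidates_are_close_to_grid_nodes} within distance $3 + (1+\tfrac{c}{\sqrt 2})$ of $g'$. Thus each such routing task is confined to a disc of constant radius centered at the grid node $g'$. By the minimum spacing $c$ between grid nodes, only a constant number of these discs overlap at any point of $\mathbb{R}^2$, and each grid node $g'$ is the center of at most a constant number of tasks (one per neighbor in $\Gamma$, and $\Gamma$ has constant degree). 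Applying Lemma~\ref{lem:parallelize-algorithms} to this collection of constant-locality tasks lets us run them all concurrently in $\CONGEST$ by combining the constantly many messages crossing any edge in a round into a single $\bigO(\log n)$-bit message. Therefore all messages $r \to r'$ over adjacent pairs are delivered in $\bigO(1)$ rounds in parallel.

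The main obstacle is the congestion argument rather than the hop bound: one must verify both that each routing task has constant spatial locality (so the discs barely overlap) and that each node/edge is involved in only constantly many tasks (so the messages to be merged on any edge are constantly many). Both follow from the constant grid granularity $c$, the constant degree of $\Gamma$, and Lemma~\ref{lem:candidates_are_close_to_grid_nodes}, after which Lemma~\ref{lem:parallelize-algorithms} finishes the proof.
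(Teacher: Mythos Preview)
Your proposal is correct and follows essentially the same approach as the paper: route along the precomputed $\le 3$-hop path in the BFS tree (equivalently, the representation $\Pi_{r,r'}$), then invoke Lemma~\ref{lem:parallelize-algorithms} for the parallel congestion bound. The only minor difference is that you justify the constant-overlap hypothesis of Lemma~\ref{lem:parallelize-algorithms} geometrically (bounded-radius discs and grid spacing~$c$), whereas the paper leans on Corollary~\ref{cor:constant_responsibility}; both arguments are valid and amount to the same thing.
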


\begin{proof}
    This lemma encapsulates the intuitive observation that we can use the representation $\Pi_{r,r'}$ of $\{g,g'\} \in E_\Gamma$ for communication between $r$ and $r'$. There are a few details that need mentioning. First, from running the algorithms $\mathcal B_{r}, \mathcal B_{r'}$ the node $r$ and all nodes on $\Pi_{r,r'}$ are aware in which direction a message from $r$ to $r'$ has to be forwarded in order to reach $r'$. Second, although a node in $V_R$ might have to represent multiple grid nodes, there are only a constant number of those due to Corollary~\ref{cor:constant_responsibility}. So if $r$ has to send multiple messages in its role of being representative to multiple grid nodes, by Lemma \ref{lem:parallelize-algorithms} this can be done in parallel in $\bigO(1)$ rounds.
\end{proof}

\subsection{Extension to the \BCONGEST model}
\label{sec:works-in-broadcast}

We conclude this section by showing that the algorithms for constructing and simulating the grid graph that were presented above work even if the local mode of communication is in the more restrictive \BCONGEST model. In the \BCONGEST model, each node must send the \emph{same} message to all of its neighbours in any given round, rather than sending each neighbour a distinct message (as is allowed in \CONGEST). This message, like the messages in the \CONGEST model, may only be $O(\log n)$ bits in size. Note that if we can construct and simulate the grid graph \BCONGEST, allows us to also simulate any \CONGEST algorithm on the grid graph under the same restriction, as it has constant degree (of 4) anyway.

%\sco{Remark or lemma?}
\begin{lemma}
\label{rem:works-in-broadcast}
Lemma~\ref{lem:compute_representation} and Theorem~\ref{thm:hybrid_on_grid} hold even if the local communication mode of the \HYBRID model is modified to use the more restrictive \BCONGEST model.
\end{lemma}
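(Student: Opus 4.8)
The plan is to revisit each communication step in the proofs of Lemma~\ref{lem:compute_representation} and Theorem~\ref{thm:hybrid_on_grid} and check that every message sent by a node can be made \emph{uniform}, i.e., identical across all of its outgoing edges, at the cost of only a constant blow-up in round complexity (and an $O(\log n)$-bit payload). The key structural observation to exploit is that all the local-mode subroutines involved --- the algorithm $\mathcal A_g$ for electing representatives, the BFS construction to depth $3$ in $\mathcal B_r$, and the forwarding of a single message along a $3$-hop representative path in Lemma~\ref{lem:forward_message_in_R} --- only ever operate within balls of constant radius and only ever need to \emph{flood} or \emph{aggregate} information, rather than route distinct messages to distinct neighbours. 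Flooding and bounded-depth aggregation are exactly the primitives that are insensitive to the \CONGEST/\BCONGEST distinction.

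First I would handle $\mathcal A_g$: in each of its $3$ rounds a node $u$ broadcasts, for every grid node $g$ with $u \in \mathcal C(g)$, the current best candidate message $M_g(\cdot)$; this is already the \emph{same} message for all neighbours, so $\mathcal A_g$ is verbatim a \BCONGEST algorithm. The parallel composition (Lemma~\ref{lem:parallelize-algorithms}) still applies because combining a constant number of broadcast messages into one is still a broadcast message, and by Lemma~\ref{lem:candidates_are_close_to_grid_nodes} each node participates in only $O(1)$ of the $\mathcal A_g$ instances. Next I would treat $\mathcal B_r$: constructing a BFS tree of depth $3$ only requires each node, in round $i\le 3$, to announce ``I am at level $i$ with root $\id(r)$''; this announcement is the same for all neighbours, so it is a broadcast. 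A node then picks any neighbour that announced level $i-1$ as its parent $p_r(v)$, purely locally --- no directed message is needed. Again only a constant number of roots $r$ reach any given node (they lie within distance $4+c/\sqrt2$), so Lemma~\ref{lem:parallelize-algorithms} composes the instances.

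The one place that genuinely uses the point-to-point power of \CONGEST is Lemma~\ref{lem:forward_message_in_R}, where $r$ sends a message ``down'' a specific branch of $T_{r'}$ towards $r'$. To make this a broadcast, I would have the message carry the identifier of its intended next hop (and, say, the identifier $\id(r')$ of the ultimate destination branch): $r$ broadcasts the packet tagged with $p_{r'}(r)$; every neighbour hears it but only the node whose ID matches the tag acts on it, re-broadcasting it tagged with its own parent $p_{r'}(\cdot)$ towards $r'$. Since the path has length $\le 3$, this terminates in $O(1)$ rounds; since each node must relay only $O(1)$ such packets (one per grid node it represents, Corollary~\ref{cor:constant_responsibility}, and $O(1)$ incident grid edges), the packets can be bundled into a single $O(\log n)$-bit broadcast per round, or time-multiplexed with $O(1)$ slowdown. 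This ``address-tagged flooding'' trick is the main --- and really the only --- obstacle, and it is mild: it works precisely because the paths are of constant length, so a packet can never be ``lost'' by fanning out, and the total number of distinct packets crossing any edge is constant. Finally, with $R$ constructed in \BCONGEST and Lemma~\ref{lem:forward_message_in_R} re-established, Theorem~\ref{thm:hybrid_on_grid} follows as before: a \CONGEST round on $\Gamma$ is simulated by having each representative flood its (at most $4$, one per grid-neighbour) outgoing messages along the corresponding constant-length representative paths using the tagged-broadcast scheme, which --- as noted in the remark preceding the lemma --- also suffices to simulate arbitrary \CONGEST algorithms on $\Gamma$ since $\Gamma$ has degree $4$.
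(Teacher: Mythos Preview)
Your proposal is correct and follows essentially the same approach as the paper's proof: both revisit $\mathcal A_g$, $\mathcal B_r$, and Lemma~\ref{lem:forward_message_in_R} in turn, observe that the first two are already pure floods, and argue that the directed forwarding in the third incurs only constant overhead because each node lies on $O(1)$ representative paths (Corollary~\ref{cor:constant_responsibility} plus Lemma~\ref{lem:parallelize-algorithms}). Your write-up is in fact more explicit than the paper's, which simply invokes time-multiplexing over the $O(1)$ edge-representations a node belongs to, whereas you spell out the address-tagged flooding mechanism; this is a harmless elaboration, not a different route.
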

\begin{proof}
    Firstly, note that, in the \BCONGEST model, if a node would like to send constantly many distinct messages to neighbours, it can achieve this via time multiplexing and suffer only a constant factor slowdown (or with a constant-factor increase in message size). Also recall that each node is the representative for at most a constant number of grid nodes (Corollary~\ref{cor:constant_responsibility}).

    The algorithms $\mathcal A_g$ to compute representatives of cells consist only of a series of broadcasts of the current ``best'' representative encountered so far and therefore Lemma~\ref{lem:decide_representatives} still holds in \BCONGEST.
    Algorithm $\mathcal B_r$ (which establishes edges to representatives of cells neighboring the cell of $r$) can also be run in the \BCONGEST model. The required observation is that the construction of a single constant-depth BFS tree requires only local broadcasts. Recall that, since every node is representative for only a constant number of grid cells, each node is the root of at most a constant number of these broadcasts. Using Lemma~\ref{lem:parallelize-algorithms}, each node will only have to participate in a constant number of these broadcasts. %Finally, when computing $E_R$, each node will only participate in a constant number of these operations and so this is also easy to implement.

    Lemma~\ref{lem:forward_message_in_R} remains true in \BCONGEST, since each node will be part of the underlying representation of at most a constant number of edges in $E_\Gamma$, thus each node can participate in delivering messages between adjacent grid nodes with only a constant factor slowdown. Theorem~\ref{thm:hybrid_on_grid} follows immediately.
\end{proof}

\section{Additional Technical Details for Section~\ref{sec:portals_and_labeling}}
 \label{app:techniques}
We give an overview on some techniques for hybrid networks that are used by our tree labelling algorithm in Section~\ref{sec:portals_and_labeling}.
A more detailed description of these techniques can be found in~\cite{FHS20,GHSW20,GmyrHSS17}.

\subsection{Pointer Jumping}\label{app:sec:pointer_jumping}
We show how to construct a network with diameter $\bigO(\log n)$ in time $\bigO(\log n)$ out of a simple line graph $L$ with $\bigO(n)$ nodes.
Assume each node $v \in L$ knows its left and right neighbor in the line (except for the left- and rightmost node, who only know one neighbor).
We let the nodes of $L$ generate \emph{shortcut edges} via \emph{pointer jumping}: In the first round, each node $v_i \in L$ that has two neighbors $v_{i-1},v_{i+1} \in L$ establishes the edge $\{v_{i-1},v_{i+1}\}$.
Whenever in each subsequent round, a node $v \in L$ receives two new shortcut edges $\{u,v\}, \{v,w\}$ in the previous round, $v$ generates another shortcut edge $\{u,w\}$.
It is easy to see that after $\bigO(\log n)$ rounds, no further shortcut edges are created and the resulting structure has diameter $\bigO(\log n)$, thus implying the following lemma.

\begin{lemma} \label{app:pointer_jumping}
	Given a line $L$ of $\bigO(n)$ nodes, setting up additional edges to obtain a structure $L^+$ with diameter $\bigO(\log n)$ and degree $\bigO(\log n)$ takes $\bigO(\log n)$ rounds.
\end{lemma}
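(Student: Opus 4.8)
The plan is to analyze the pointer-jumping process level by level, showing that the shortcut edges created are exactly the ``power-of-two'' chords of the line. Write the line as $L=(v_1,\dots,v_m)$ with $m=\bigO(n)$, and call an edge $\{v_i,v_j\}$ a \emph{level-$k$ edge} if $|i-j|=2^k$; the level-$0$ edges are precisely the original line edges. The key claim, proved by induction on $k\ge 1$, is: in round $k$ the process creates exactly the level-$k$ edges whose endpoints both exist, i.e.\ $\{v_i,v_{i+2^k}\}$ for all $1\le i\le m-2^k$, and creates no other edges. Since edges are never removed, this gives that after round $k$ the edge set of $L^+$ equals $\bigcup_{j=0}^{k}\{\{v_i,v_{i+2^j}\}:1\le i\le m-2^j\}$.

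For the base case $k=1$, every $v_i$ with two line neighbours creates $\{v_{i-1},v_{i+1}\}$, and these are exactly the level-$1$ edges $\{v_i,v_{i+2}\}$ with $1\le i\le m-2$. For the step $k\ge 2$: by the induction hypothesis the \emph{new} edges present after round $k-1$ are precisely the level-$(k-1)$ edges, and each node $v_i$ is incident to at most two of them, namely $\{v_{i-2^{k-1}},v_i\}$ and $\{v_i,v_{i+2^{k-1}}\}$ (those whose endpoints are in range). Hence in round $k$, $v_i$ generates a new shortcut edge iff both of these are present, in which case it generates $\{v_{i-2^{k-1}},v_{i+2^{k-1}}\}$, a level-$k$ edge; in particular no ``asymmetric'' or non-power-of-two edge can arise. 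Conversely, a prospective level-$k$ edge $\{v_a,v_{a+2^k}\}$ is generated in round $k$ by $v_{a+2^{k-1}}$ exactly when $v_a,v_{a+2^{k-1}},v_{a+2^k}$ all exist, i.e.\ $1\le a\le m-2^k$, since then the two required level-$(k-1)$ edges are present after round $k-1$ by the induction hypothesis. This proves the claim.

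Three consequences then follow immediately. \textbf{Round count:} a new level-$k$ edge can exist only if $2^k\le m-1$, so no new edges are created after round $K:=\lfloor\log_2(m-1)\rfloor=\bigO(\log n)$, and from then on $L^+$ is fixed. \textbf{Degree:} every node is incident to at most two edges of each level $0,1,\dots,K$, hence has degree at most $2(K+1)=\bigO(\log n)$. \textbf{Diameter:} given $v_i,v_j$ with $i<j$, write $j-i=\sum_{\ell\in S}2^\ell$ in binary with $|S|\le K+1$; performing successive rightward hops of sizes $2^\ell$ for $\ell\in S$ yields a $v_i$–$v_j$ path in $L^+$ with $|S|=\bigO(\log n)$ edges, where every intermediate vertex lies strictly between $v_i$ and $v_j$ (so it exists) and every hop is an edge guaranteed by the claim. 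Hence $L^+$ is connected with diameter $\bigO(\log n)$, and since the whole process runs in $K=\bigO(\log n)$ rounds, the lemma follows.

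The only real obstacle is the bookkeeping in the induction: one must check carefully that the generation rule can never produce an edge that is not a power-of-two chord, and that boundary effects near $v_1$ and $v_m$ only ever \emph{omit} edges that are not needed — the inward hops used in the diameter argument are always available. Everything else is a direct consequence of the level structure.
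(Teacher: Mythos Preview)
Your proof is correct and follows exactly the approach the paper intends: the paper describes the same pointer-jumping rule and then simply asserts that ``it is easy to see that after $\bigO(\log n)$ rounds, no further shortcut edges are created and the resulting structure has diameter $\bigO(\log n)$,'' without giving a formal argument. You have supplied the details the paper omits---the induction showing that round $k$ produces precisely the distance-$2^k$ chords, and the standard binary-decomposition path for the diameter bound---so there is nothing to add or correct.
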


Performing pointer jumping on each of our portals in the portal tree, the grid nodes within each portal $\mathcal P$ are able to set up a structure on which they can quickly broadcast information to all grid nodes within $\mathcal P$.
By doing so, we immediately obtain the following lemma.

\begin{lemma} \label{lemma:portal_broadcasting}
	Any $\bigO(\log n)$-bit message can be broadcast among all grid nodes within a single portal $\mathcal P$ in $\bigO(\log |\mathcal P|)$ rounds.
\end{lemma}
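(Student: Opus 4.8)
The plan is to observe that this is essentially a direct corollary of the pointer-jumping primitive (Lemma~\ref{app:pointer_jumping}), applied separately and in parallel to each portal. By definition a portal $\mathcal P=(v_1,\dots,v_k)$ is a connected component of $(V_\Gamma,E_{vert})$, hence a path; after one round of local communication in $\Gamma$ every grid node of $\mathcal P$ knows its northern and southern neighbours within $\mathcal P$, so $\mathcal P$ is precisely the line graph $L$ required by Lemma~\ref{app:pointer_jumping}.

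First I would run pointer jumping on each portal, using the global ($\NCC_0$) mode of the \HYBRID model to create the shortcut edges. By Lemma~\ref{app:pointer_jumping} this takes $\bigO(\log k)=\bigO(\log|\mathcal P|)$ rounds and yields a structure $L^+$ on the vertices of $\mathcal P$ whose diameter and degree are both $\bigO(\log|\mathcal P|)$. Since $|\mathcal P|\le n$, every node of $L^+$ sends and receives only $\bigO(\log n)$ messages per round, respecting the $\NCC_0$ budget. Any grid node $v_i$ of $\mathcal P$ that holds a $\bigO(\log n)$-bit message (in the intended application, the node closest to the root of $T_\Gamma$, holding the portal label) then floods it through $L^+$; because $L^+$ has diameter $\bigO(\log|\mathcal P|)$, all of $v_1,\dots,v_k$ have received the message after $\bigO(\log|\mathcal P|)$ further rounds, and the degree bound guarantees that relaying the message costs each node only $\bigO(\log n)$ messages per round. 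Summing the two phases gives the claimed $\bigO(\log|\mathcal P|)$ bound.

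Two points need checking rather than any deep argument. First, distinct portals can be processed simultaneously: each grid node lies in exactly one portal, so the pointer-jumping constructions (and the subsequent floods) for different portals are vertex-disjoint and do not interfere. Second, since everything is ultimately executed on $\UDG(V)$ via the simulation of Theorem~\ref{thm:hybrid_on_grid}, and by Corollary~\ref{cor:constant_responsibility} each UDG node simulates only $\bigO(1)$ grid nodes, the per-UDG-node message load stays $\bigO(\log n)$ and the $\bigO(1)$ simulation overhead does not affect the asymptotics. The only genuine subtlety here is this message-complexity bookkeeping for the $\NCC_0$ model; I do not expect a real obstacle beyond it.
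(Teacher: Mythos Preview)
Your proposal is correct and follows essentially the same approach as the paper, which simply states that performing pointer jumping on each portal (a line) yields a low-diameter structure on which one can quickly broadcast. Your additional remarks about parallel execution across portals and the $\NCC_0$ message budget are sound sanity checks that the paper leaves implicit.
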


\subsection{Rooting Trees of Arbitrary Depth} \label{app:techniques:rooting}
Given a tree $T$ of $n$ nodes with arbitrary depth and constant node degree, we show how to root $T$ at the node $s$ with minimum identifier, such that every node in $T$ is aware of its parent node.
To do so, we adapt the well-known \emph{Euler tour} technique to a distributed setting.
Every node $v \in T$ with neighbors $v(0),\ldots,v(\deg(v)-1)$ (sorted in ascending order by their identifiers) simulates a virtual node $v_i$ for each of its neighbor $v(i)$.
We now connect all virtual nodes to a simple cycle $C$ as follows.
For every node  $v_i \in C$, there is an edge $(v_i,u_j) \in C$ such that $u = v((i+1) \mod \deg(v))$ and $v = u(j)$.
Therefore, each virtual node $v_i$ that belongs to the node $v$ with identifier $id(v)$ is able to introduce itself to its predecessor in $C$ by sending its \emph{virtual identifier} $\widetilde{id}(v_i) := id(v) \circ i$ for all $i \in [\deg(v)]$, where $\circ$ denotes the concatenation of two binary strings and $[k] = \{0,\ldots,k-1\}$.
Since each node simulates only a constant number of virtual nodes, the number of virtual nodes in the cycle $C$ is $\bigO(n)$.

We first describe how to determine the virtual node $s_i$ with minimal virtual identifier in $\bigO(\log n)$ rounds.\footnote{The virtual node with minimal virtual identifier $\widetilde{id}(s_i) = id(s) \circ i$ is the node $s_i$ with $id(s) \leq id(v)$ for all $v \in T$ and $i = 0$.}
Note that the node $s$ simulating $s_0$ is then the node with minimal identifier.
Consider the cycle $C$ of virtual nodes and denote the edges of the cycle as \emph{level-$1$} edges.
Our algorithm works in multiple iterations.
Initially, each virtual node $v$ stores its own virtual identifier $\widetilde{id}(v)$ in some variable $v.I$.
In the first iteration, each virtual node $v$ does the following.
In the first step, $v$ sends $v.I$ to its left neighbor in the cycle\footnote{The nodes may have different perceptions on which direction is left, but this is of no concern for our algorithm.}.
Upon receipt of a virtual identifier $v.I$, each node $u$ updates its variable $u.I$ to $v.I$ in case that $v.I < u.I$.
In the next step, each virtual node $v$ introduces its left neighbor $v_l$ to its right neighbor $v_r$ to create the edge $\{v_l,v_r\}$, a level-$2$ edge.
In each subsequent iteration, say the $i$-th iteration, each node $v$ first sends $v.I$ along with its own identifier via \emph{all} of its level-$j$ edges (for all $j \in \{1,\ldots,i\}$) and then creates level-$(i+1)$ edges, using its level-$i$ edges created in the previous iteration.
Note that after the $i$-th iteration, $2^i$ nodes are aware of $v$'s virtual identifier and thus have stored $v$'s virtual identifier in their variables $u.I$, in case $v$'s virtual identifier is the minimal virtual identifier among all of these nodes.
We proceed in this manner until a virtual node $v$ has received its own virtual identifier from its right neighbor in some iteration, as in this case all nodes have received $v$'s virtual identifier.
This happens at the node $s_0$ with minimum virtual identifier after $\bigO(\log n)$ rounds, because $C$ contains $\bigO(n)$ virtual nodes.
Thus, all that is left to do is to let $s_0$ announce itself as the root of the tree by broadcasting a message on the cycle with the generated shortcuts, indicating the termination of the algorithm and announcing itself as the node with minimum virtual identifier.
This takes another $\bigO(\log n)$ rounds.
Then, each virtual node is now aware of the node $s_0$ with minimum virtual identifier and therefore also of the node $s$ with minimum identifier.

Now we want to root the tree $T$ at $s$.
The virtual node $s_0$ starts broadcasting its virtual identifier via all of its outgoing edges to the left (including all of the generated shortcuts from before).
During this broadcast, we keep track of the traversal distance of the message to be broadcasted, such that each virtual node is able to determine how many hops it is away from $s_0$ in the cycle.
A real node $v$ can now determine its parent in the tree $T$ by looking at its virtual node with minimum traversal distance to $s_0$.
Let this node be the node $v_i$ and let $u_i$ be the predecessor of $v_i$ in the cycle $C$.
Then it is easy to see that $u$ is the parent node of $v$ in $T$, resulting in $T$ getting rooted at $s$ and implying the following lemma.

\begin{lemma} \label{app:tree_rooting}
    Let $T$ be a tree of $n$ nodes with constant node degree.
    $T$ can be rooted at the node $s$ with minimal identifier within $\bigO(\log n)$ rounds.
\end{lemma}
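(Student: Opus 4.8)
The plan is to reduce the problem to pointer jumping on a cycle via a distributed Euler tour. First I would let every node $v$, with neighbours $v(0),\ldots,v(\deg(v)-1)$ sorted by identifier, create one virtual node $v_i$ for each incident tree edge, with virtual identifier $\widetilde{id}(v_i) := id(v)\circ i$, and splice all virtual nodes into a single cycle $C$ that realises an Euler tour of $T$: the successor of $v_i$ on $C$ is the copy $u_j$ with $u = v((i+1)\bmod\deg(v))$ and $v = u(j)$. Since $\deg$ is bounded by a constant, each real node owns $\bigO(1)$ virtual nodes and $|C| = \bigO(n)$, and the edges of $C$ are set up in $\bigO(1)$ rounds by each virtual node announcing its virtual identifier to the relevant neighbour.

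Next I would run pointer jumping on $C$ (Lemma~\ref{app:pointer_jumping}) to create shortcut edges of levels $1,2,4,\ldots$, so that after $\bigO(\log n)$ rounds the cycle has diameter $\bigO(\log n)$ while node degrees stay $\bigO(\log n)$, which is exactly what the $\NCC_0$ bandwidth bound allows. During this process each virtual node keeps the smallest virtual identifier it has ever received and forwards it over all of its current shortcut edges, so that after iteration $i$ the $2^i$ virtual nodes reachable from a given node know its identifier whenever that identifier is minimum among them. A virtual node detects termination once it receives its own virtual identifier back from the opposite side of the cycle; this first happens at the node $s_0$ holding the global minimum, after $\bigO(\log n)$ rounds, and the real node $s$ owning $s_0$ is thereby identified as the node of minimum identifier.

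Finally, $s_0$ broadcasts a ``root'' message along $C$ using the shortcuts, tagging it with the hop-distance traversed in $C$, so in $\bigO(\log n)$ further rounds every virtual node learns its distance to $s_0$ along the Euler tour. Each real node $v$ then picks the virtual copy $v_i$ of smallest such distance and declares the real node owning the cycle-predecessor $u_j$ of $v_i$ to be its parent in $T$; summing the three phases gives the claimed $\bigO(\log n)$ bound. The step I expect to need the most care is the correctness of this parent rule: one must argue from the Euler-tour structure that the first virtual copy of $v$ met when walking $C$ away from $s_0$ is precisely the one entered from $v$'s true parent, because the tour visits each subtree of $T$ contiguously, and then check that the resulting pointers are globally consistent with a single rooting at $s$.
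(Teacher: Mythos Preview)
Your proposal is correct and essentially identical to the paper's argument: both build the Euler-tour cycle on $\bigO(1)$ virtual copies per node, combine pointer jumping with min-identifier propagation (with the same termination criterion of $s_0$ receiving its own identifier back), and then broadcast from $s_0$ with hop-distance tracking so that each node declares as parent the owner of the cycle-predecessor of its closest virtual copy. The only point you flag as needing care, the correctness of the parent rule, is exactly the one the paper also leaves to the reader.
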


\subsection{Depth-First Search on Trees} \label{app:techniques:dfs}
Given a rooted tree $T$ of $n$ nodes with arbitrary depth and constant node degree, we compute for each node $v \in T$ the preorder number $l_v \in \mathbb{N}$ according to a depth-first search (DFS) of $T$.
Let $s$ be the root of $T$.
As the first step, we perform the distributed Euler-Tour technique described in the previous section with the exception that the virtual node $s_0$ refrains from introducing itself to its predecessor.
It is easy to see that this results in the virtual nodes being arranged in a simple line $L$ instead of a cycle.

Next, we apply the pointer jumping technique from Lemma~\ref{app:pointer_jumping} to transform $L$ into a structure $L^+$ with diameter $\bigO(\log n)$.
Through a single broadcast from the leftmost node $s \in L^+$, we are now able to compute a number $l'_u$ for a virtual node $u$ indicating the number of (real) nodes $v$ for which at least one of $v$'s virtual nodes is left of $u$ on the line $L$.
Each real node $u$ then sets $l_u$ to the minimum value out of all $l'_u$ values of its virtual nodes, which corresponds to $u$'s position in the DFS.

\begin{lemma} \label{app:dfs}
    Let $T$ be a rooted tree of $n$ nodes with constant node degree.
    A DFS on $T$ where each node $v \in T$ is assigned its number $l_v \in \mathbb{N}$ in the DFS can be computed in $\bigO(\log n)$ rounds.
\end{lemma}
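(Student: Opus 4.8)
The plan is to reduce the DFS-labeling problem to the Euler-tour plus pointer-jumping machinery already developed in Sections~\ref{app:techniques:rooting} and~\ref{app:sec:pointer_jumping}. First I would invoke the distributed Euler-tour construction from Section~\ref{app:techniques:rooting}, but with the single modification that the virtual node $s_0$ belonging to the root $s$ does \emph{not} close the cycle by introducing itself to its predecessor. The effect is that the $\bigO(n)$ virtual nodes are laid out along a simple line $L$ rather than a cycle $C$, and — this is the key structural observation — traversing $L$ from $s_0$ to the other end visits the virtual copies of the real nodes in exactly the order of a depth-first traversal of $T$ rooted at $s$. More precisely, for each real node $v$ the \emph{leftmost} virtual copy of $v$ on $L$ appears at the position where the DFS first discovers $v$, so if we can compute for each virtual node its rank on $L$ (counting only ``first visits''), we recover the preorder number.

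Second I would apply Lemma~\ref{app:pointer_jumping} to the line $L$ to obtain in $\bigO(\log n)$ rounds an overlay $L^+$ on the same vertex set with diameter $\bigO(\log n)$ and degree $\bigO(\log n)$. Using $L^+$, a single broadcast originating from the leftmost node $s_0 \in L^+$ can carry a running counter: as the broadcast wave propagates, each virtual node $u$ learns the number $l'_u$ of \emph{real} nodes that have at least one virtual copy strictly to the left of $u$ on $L$. Because $L^+$ has logarithmic diameter, this broadcast/aggregation completes in $\bigO(\log n)$ rounds, and because every real node simulates only $\deg(v) = \bigO(1)$ virtual nodes, the per-round communication load at each real node is $\bigO(\log n)$ bits, consistent with the model.

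Finally, each real node $u$ sets $l_u := \min\{ l'_{u_0}, \dots, l'_{u_{\deg(u)-1}} \}$ over its own virtual copies; the minimum picks out the leftmost virtual occurrence of $u$, whose $l'$-value is precisely the number of distinct real nodes discovered before $u$ in the DFS, i.e.\ the preorder index of $u$. Correctness then follows from the two claims above: (i) the modified Euler tour linearizes the virtual nodes in DFS order, and (ii) $l'_u$ correctly counts distinct real predecessors. The runtime is $\bigO(\log n)$ for the Euler-tour and rooting step (Lemma~\ref{app:tree_rooting}, reused verbatim except for the one omitted edge), $\bigO(\log n)$ for pointer jumping (Lemma~\ref{app:pointer_jumping}), and $\bigO(\log n)$ for the single broadcast over $L^+$, giving $\bigO(\log n)$ rounds overall.

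I expect the main obstacle to be making claim (i) fully rigorous — namely, arguing carefully that the cyclic Euler tour, once ``cut'' at $s_0$, traverses the first virtual copies of the real nodes in exactly DFS preorder, and that counting distinct real nodes (rather than virtual nodes) to the left is the right notion. This requires a clean bijection between ``first visits'' along $L$ and ``discovery events'' of a DFS, together with a check that the sorting of neighbors by identifier used in the Euler-tour construction yields a consistent, well-defined DFS order. The remaining pieces are routine invocations of already-established lemmas.
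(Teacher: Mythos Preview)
Your proposal is correct and follows essentially the same approach as the paper: cut the Euler-tour cycle at $s_0$ to obtain a line $L$, apply pointer jumping (Lemma~\ref{app:pointer_jumping}) to get $L^+$, broadcast from $s_0$ so that each virtual node $u$ learns the count $l'_u$ of distinct real nodes with a virtual copy to its left, and set $l_v$ to the minimum $l'$-value over $v$'s virtual copies. The paper's own write-up is terser and does not spell out the correctness argument for claim~(i) that you flag as the main obstacle, so your more careful discussion of the bijection between first visits on $L$ and DFS discovery events is, if anything, an improvement in rigor.
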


\subsection{Computing the Maximum Preorder Number in a Rooted Tree} \label{app:techniques:max_preorder}
Assume we are given a rooted tree $T$ of $n$ nodes with arbitrary depth and constant node degree in which every node $v \in T$ possesses a preorder number $l_v \in \mathbb{N}$ according to a DFS.
We compute for each node $v \in T$ the maximum preorder number possessed by a node in $v$'s subtree, i.e., we compute $r_v = \max\{l_u \in \mathbb{N} \mid u \in T(v)\}$, where $T(v)$ is the subtree of $T$ with $v$ as the root.\footnote{A more general approach to this problem is presented in~\cite[Lemma 4.12]{GHSW20}, where the goal is to compute the value of a distributive aggregate function for each node $v$'s own subtree. An aggregate function $f$ is called \emph{distributive} if there is an aggregate function $g$ such that for any multiset $S$ and any partition $S_1,\ldots,S_\ell$ of $S$, $f(S)=g(f(S_1),\ldots,f(S_\ell))$. Classical examples are MAX, MIN, and SUM. However, due to the generality of $f$, the authors had to make use of randomization, which results in a runtime of $\bigO(\log n)$, w.h.p. for their algorithm. We present a deterministic $\bigO(\log n)$-algorithm that is specifically tailored to the MAX function in this section.}

Before we describe our algorithm, we let the nodes compute an upper bound of $\log n$, i.e., some value $d = \bigO(\log n), d \geq \log n$ as follows.
We compute the line $L$ via the Euler-tour described earlier on the rooted tree $T$ and apply Lemma~\ref{app:pointer_jumping} on $L$ to obtain the structure $L^+$.
Then we perform a broadcast from the rightmost node $u$ in $L^+$ to the leftmost node $s_0$ in $L^+$, where each message generated by the broadcast contains a counter that is incremented by $1$ once the message is forwarded.
The node $s_0$ then maintains a variable $d$ that contains the maximum counter received by $s_0$.
Since $L^+$ has diameter $\bigO(\log n)$, the broadcast finishes after $\bigO(\log n)$ rounds.
Once the broadcast is finished, it is easy to see that $d = \bigO(\log n)$.
The node $s_0$ then broadcasts $d$ to all nodes in $L^+$, such that after another $\bigO(\log n)$ rounds, each node knows $d$.
Observe that $d \geq \log D(T)$, where $D(T)$ is the depth of the tree $T$.

We are now ready to describe the algorithm for computing the values $r_v$ for each node $v \in T$.
Initially, each node $v$ sets $r_v$ to  $l_v$.
Denote the edges of $T$ as \emph{level-$0$} edges.
The algorithms performs $i = 1,\ldots,d$ iterations, each iteration needing $\bigO(1)$ rounds.
Iteration $i$ works as follows at each node $v \in T$.
First, if $v$ has a level-$(i-1)$ edge going up in the tree to some node $u$, then $v$ sends $r_v$ to $u$.
Upon receipt of a value $r_w$ from node $w$ in the previous step, $v$ updates $r_v$ by setting $r_v \gets r_w$ and \emph{marks} the edge $\{v,w\}$.\footnote{Note that in the first iteration ($i=1$), a node $v$ receives a value $r_w$ from each of its child nodes $w$. It then just sets $r_v$ to be the maximum value out of all received values $r_w$. It is easy to see that $v$ receives at most one message in any subsequent iterations in this step.}
As the final step of the iteration, $v$ checks whether it has a marked edge $\{v,u\}$ going up the tree and a marked edge $\{v,w\}$ going down the tree.
If that is the case, $v$ creates a level-$i$ edge $\{u,w\}$ by introducing $u$ to $w$ and vice versa.
If not, then $v$ marks itself as \emph{ready}.

Let $T(v)$ be the subtree of $T$ with $v$ as the root and let $w \in T(v)$ be the leaf node with maximum preorder number.
Consider the unique path $P$ up the tree from $w$ to $v$ in $T(v)$.
It is easy to see that our algorithm transfers the preorder number $l_w$ to all nodes on this path within $\lceil \log k \rceil$ iterations, where $k$ is the length of $P$, because in each iteration $i$, new level-$i$ shortcuts are added to the nodes on the path in a manner similar to the pointer-jumping approach from Section~\ref{app:sec:pointer_jumping}.
Therefore, once a node $v$ has marked itself as ready in iteration $i$, $v$ has received the desired value for $r_v$ in iteration $i$.
As each node $v \in T$ performs the algorithm in parallel, each node $v$ has determined $r_v$ after at most $d=\bigO(\log n)$ iterations (recall that $d \geq \log D(T)$).
Note that the node degree for each node $v$ does not exceed $\bigO(\log n)$ throughout the algorithm, as in each iteration, $v$'s degree increases by at most $2$.

We obtain the following lemma.

\begin{lemma} \label{lem:max_preorder_number}
    Let $T$ be a rooted tree of $n$ nodes with constant node degree in which every node $v \in T$ possesses a preorder number $l_v \in \mathbb{N}$ according to a DFS on $T$ starting at its root.
    Each node $v \in T$ can compute the value $r_v = \max\{l_u \in \mathbb{N} \mid u \in T(v)\}$, where $T(v)$ is the subtree of $T$ with $v$ as the root, within $\bigO(\log n)$ rounds.
\end{lemma}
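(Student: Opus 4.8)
The goal is to show that in a rooted tree $T$ of $n$ nodes with constant degree, where each node $v$ has a preorder number $l_v$, every node can compute $r_v = \max\{l_u : u \in T(v)\}$ in $\bigO(\log n)$ rounds. I will describe the algorithm already sketched above and then argue correctness and runtime. First, I would have all nodes agree on a common value $d = \bigO(\log n)$ with $d \ge \log D(T)$, where $D(T)$ is the depth of $T$: build the Euler-tour line $L$ of $T$ (each node simulates $\bigO(1)$ virtual nodes since the degree is constant), apply pointer jumping (Lemma~\ref{app:pointer_jumping}) to get $L^+$ with diameter $\bigO(\log n)$, then propagate a hop-counter from one end of $L^+$ to the other so the far endpoint learns an upper bound on $|L|$, and broadcast that value back; since $D(T) \le |L| = \bigO(n)$, we get $d \ge \log D(T)$ as required.

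For the main computation I would run $d$ iterations, maintaining at each node $v$ a current estimate $r_v$ (initialized to $l_v$) together with a hierarchy of ``shortcut'' edges on the tree. Think of each root-to-leaf branch as a path; the value we ultimately want at $v$ is the maximum of $l_u$ over all $u$ in $v$'s subtree, which equals the max over all descendants, and in particular is attained at some leaf. In iteration $i$, each node with a level-$(i-1)$ up-edge forwards its current $r_v$ up along it; a node receiving such values updates $r_v$ to the max (in the first iteration it aggregates over all children; thereafter it receives at most one such message per iteration because the shortcut structure is built along paths), marks the edges it received along, and if it now has both a marked up-edge and a marked down-edge it splices them into a level-$i$ edge — exactly the pointer-jumping doubling step, but performed ``in the subtree direction''. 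A node that cannot form a new level-$i$ edge declares itself \emph{ready}. The degree of any node grows by at most $2$ per iteration, so it stays $\bigO(\log n)$.

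The correctness argument is the following: fix $v$ and let $w$ be a node in $T(v)$ with $l_w = r_v$ (take a leaf if there are several). Let $P$ be the path from $w$ up to $v$, of length $k \le D(T)$. By the same induction as in the pointer-jumping analysis, after $j$ iterations the value $l_w$ has propagated up $P$ by $2^j$ hops (the marked edges along $P$ are doubled each round exactly as shortcut edges are in Lemma~\ref{app:pointer_jumping}), so after $\lceil \log k \rceil \le \lceil \log D(T)\rceil \le d$ iterations $l_w$ has reached $v$, and since $r_v$ only ever increases by taking maxima over values of the form $l_u$ with $u \in T(v)$, at that point $r_v$ is correct and $v$ marks itself ready. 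Running all nodes in parallel, every node is correct after $d = \bigO(\log n)$ iterations, each of $\bigO(1)$ rounds.

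The step I expect to be the most delicate is verifying that the ``subtree'' shortcut edges are genuinely built along root-to-leaf paths and that consequently, after the first iteration, each node receives at most one forwarded $r_w$ value per iteration in the up-direction — i.e., that the marked down-edge at each node is unique. This is what keeps the scheme a clean path-doubling rather than a branching aggregation that could blow up either the message count or the degree; one has to check that once a node has marked a child edge it never marks a second one (because after the first iteration it has at most one active child in the shortcut hierarchy), so the level-$i$ splices really do mimic pointer jumping on each branch independently. The remaining runtime bookkeeping — that the preliminary computation of $d$, the Euler tour, and the pointer jumping all cost $\bigO(\log n)$ — follows directly from Lemma~\ref{app:pointer_jumping} and the constant-degree assumption.
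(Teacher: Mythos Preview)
Your proposal is correct and follows essentially the same approach as the paper: compute a common bound $d$ via the Euler-tour line plus pointer jumping and a counter broadcast, then run $d$ iterations of path-doubling along root-to-leaf branches, with each node forwarding its current $r_v$ up its level-$(i-1)$ edge, taking maxima, marking edges, and splicing marked up/down edges into level-$i$ shortcuts. Your identification of the delicate point---that after the first iteration each node has at most one marked down-edge so the splicing really is pointer jumping on paths---matches exactly what the paper glosses over in a footnote, and the remaining correctness and degree bounds are argued the same way.
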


\section{Additional Technical Details for Section~\ref{sec:grid_routing}}
\label{app:grid_routing}

\newcommand{\pushcode}[1][1]{\hskip\dimexpr#1\algorithmicindent\relax}
\begin{algorithm}[!ht]
\caption{Routing Strategy for a Grid Node $g \in V_{\Gamma}$}
\label{algo:routing}
\begin{algorithmic}[1]
% 	\State \textbf{Variables maintained by $g$:}
% 	\State \pushcode[0] $g.L \in \mathbb{N}^2$: $g$'s own interval given to it by the tree labeling
% 	\State \pushcode[0] $g.P \in \mathbb{N}^2$: The label of the root node of the portal containing $g$
% 	\State \pushcode[0] $g.N, g.S, g.E, g.W \in V_A \cup \perp$: $g$'s grid neighbors in north, south, east and west direction
% 	\Statex
% 	\State \textbf{Notation:}
% 	\State \pushcode[0] $g.P \times t.P := g.P \not \subseteq t.P \wedge t.P \not \subseteq g.P$
% 	\Statex
	\State \textbf{On receipt of a message $m$ with target $t$ at $g$:}
	\State \pushcode[0] \textbf{if}($g.L = t.L$)
	\State \pushcode[0] \ \ \ \ \Return ``Arrived at $t$''
	\State \pushcode[0] $f \gets $\textsf{TryHorizontal}($m$, $g$, $t$)
	\State \pushcode[0] \textbf{if}($f = false$)
	\State \pushcode[0] \ \ \ \ \textsf{GoVertical}($m$, $g$, $t$)
	\Statex
	\State \textsf{TryHorizontal}($m$, $g$, $t$)
	\State \pushcode[0] \textbf{if}($g.P = t.P$)
	\State \pushcode[0] \ \ \ \ \textbf{return} $false$ \Comment {Already at the correct portal}
	\State \pushcode[0] \textbf{if}($(g.P \subset g_W.P \subseteq t.P) \vee (t.P \subseteq g_W.P \subset g.P) \vee (g.P \incomp t.P \wedge g.P \subset g_W.P)$)
	\State \pushcode[0] \ \ \ \ Send $m$ to $g_W$
	\State \pushcode[0] \ \ \ \ \textbf{return $true$}
	\State \pushcode[0] \textbf{if}($(g.P \subset g_E.P \subseteq t.P) \vee (t.P \subseteq g_E.P \subset g.P) \vee (g.P \incomp t.P \wedge g.P \subset g_E.P)$)
	\State \pushcode[0] \ \ \ \ Send $m$ to $g_E$
	\State \pushcode[0] \ \ \ \ \textbf{return $true$}
	\State \pushcode[0] \textbf{return} $false$ \Comment{Did not manage to go west or east}
	\Statex
	\State \textsf{GoVertical}($m$, $g$, $t$)
	\State \pushcode[0] \textbf{if}($(g.L \subset g_N.L \subseteq t.L) \vee (t.L \subseteq g_N.L \subset g.L) \vee (g.L \incomp t.L \wedge g.L \subset g_N.L)$)
	\State \pushcode[0] \ \ \ \ Send $m$ to $g_N$
	\State \pushcode[0] \textbf{if}($(g.L \subset g_S.L \subseteq t.L) \vee (t.L \subseteq g_S.L \subset g.L) \vee (g.L \incomp t.L \wedge g.L \subset g_S.L)$)
	\State \pushcode[0] \ \ \ \ Send $m$ to $g_S$
\end{algorithmic}
\end{algorithm}

In this section, we show that our grid routing strategy is optimal and correct.
First, we show optimality and, by adding some simple arguments, conclude correctness afterwards.

To show optimality, we first make the following observations, which simply follow from the fact that the cell polygon $P'$ is simple and contains no holes (Lemma~\ref{lem:cell-polygon-simple}).
Therefore, the adjacency graph of the portals forms a tree, in which there is a unique simple path from $s$ to $t$, which is also a shortest path.

\begin{lemma} \label{obs:portal_sequence:opt}
    Let $s,t \in V_{\Gamma}$ and let $P_1(s,t), P_2(s,t)$ be any two \emph{optimal} paths from $s$ to $t$.
    Then the sequence of portals traversed by $P_1(s,t)$ and $P_2(s,t)$ is the same.
\end{lemma}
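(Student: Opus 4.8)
The plan is to reduce the statement to the tree structure of the portal adjacency graph. Recall from Lemma~\ref{lem:cell-polygon-simple} and Lemma~\ref{lem:portal_tree_construction} that the portals of $\Gamma$ form a tree $T_\Gamma$, where two portals are adjacent exactly when some horizontal edge of $\Gamma$ joins them. Every grid node of $\Gamma$ lies in exactly one portal, and every edge of $\Gamma$ is either a \emph{vertical} edge (lying inside a single portal) or a \emph{horizontal} edge joining two portals adjacent in $T_\Gamma$. Consequently, any $s$-$t$ path $\Pi$ in $\Gamma$ induces, after collapsing maximal runs of consecutive nodes lying in the same portal, a walk $W(\Pi)$ in $T_\Gamma$ from the portal $\mathcal P_s$ containing $s$ to the portal $\mathcal P_t$ containing $t$. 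It therefore suffices to prove that for every \emph{optimal} path $\Pi$ the walk $W(\Pi)$ equals the unique simple path from $\mathcal P_s$ to $\mathcal P_t$ in $T_\Gamma$; this path is the same for $P_1(s,t)$ and $P_2(s,t)$.

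First I would record the elementary fact that, since $T_\Gamma$ has no cycles, a walk in $T_\Gamma$ that never immediately reverses an edge is a simple path; hence any $\mathcal P_s$-$\mathcal P_t$ walk longer than $\dist_{T_\Gamma}(\mathcal P_s,\mathcal P_t)$ must contain three consecutive portals of the form $\mathcal A,\mathcal B,\mathcal A$. Suppose, for contradiction, that $\Pi$ is optimal but $W(\Pi)$ contains such a pattern. Since a portal is a maximal vertically connected set of grid nodes, $\mathcal A$ occupies a single column $x_{\mathcal A}$ and $\mathcal B$ the adjacent column $x_{\mathcal B}$. The pattern $\mathcal A,\mathcal B,\mathcal A$ means that $\Pi$ contains a subpath that leaves $\mathcal A$ along a horizontal edge $u=(x_{\mathcal A},y_1)\to v=(x_{\mathcal B},y_1)$, stays inside $\mathcal B$, and returns to $\mathcal A$ along a horizontal edge $v'=(x_{\mathcal B},y_2)\to w=(x_{\mathcal A},y_2)$. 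The portion of $\Pi$ between $v$ and $v'$ stays in the single column $x_{\mathcal B}$, so it uses at least $|y_1-y_2|$ edges, whence this whole $u$-to-$w$ subpath uses at least $|y_1-y_2|+2$ edges.

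The key step is to shortcut this detour inside $\mathcal A$: since $u$ and $w$ both lie in $\mathcal A$, which is a contiguous vertical segment of column $x_{\mathcal A}$, every grid node of column $x_{\mathcal A}$ between rows $y_1$ and $y_2$ belongs to $\mathcal A$, so there is a vertical path inside $\mathcal A$ from $u$ to $w$ using exactly $|y_1-y_2|$ edges. Replacing the $u$-to-$w$ subpath of $\Pi$ by this vertical path yields an $s$-$t$ walk with strictly fewer edges, hence a strictly shorter $s$-$t$ path, contradicting optimality. Therefore $W(\Pi)$ has no such backtrack and equals the unique simple path from $\mathcal P_s$ to $\mathcal P_t$ in $T_\Gamma$, which proves the claim. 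I expect the main obstacle to be purely the bookkeeping in the second and third paragraphs — isolating the horizontal entry and exit edges of the backtracked portal and verifying that the replacement path stays inside $\mathcal A$ — while the reduction to $T_\Gamma$ and the no-cycle observation are immediate from the structure established earlier.
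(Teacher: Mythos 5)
Your proof is correct and follows essentially the same route as the paper, which treats the lemma as an observation that the portal adjacency graph is a tree (by Lemma~\ref{lem:cell-polygon-simple} and Lemma~\ref{lem:portal_tree_construction}) and hence admits a unique simple portal path. Your added detail --- that an optimal path cannot backtrack in the portal tree because the pattern $\mathcal A,\mathcal B,\mathcal A$ can be shortcut by a strictly shorter vertical path inside the contiguous portal $\mathcal A$ --- merely makes explicit the step the paper leaves implicit, and it is sound.
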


We say that a path $P(s,t)$ contains a portal $p$ multiple times, if the sequence of grid-nodes from $s$ to $t$ contains a node from some other portal $p'$ in between two nodes from $p$.

\begin{lemma} \label{obs:portal_sequence}
    Let $s,t \in V_{\Gamma}$ and let $P(s,t)$ be a path from $s$ to $t$.
    Assume the sequence of portals $\mathcal P(s,t)$ traversed by $P(s,t)$ does not contain a portal twice.
    Then $\mathcal P(s,t)$ is equal to the sequence of portals traversed by any optimal path from $s$ to $t$.
\end{lemma}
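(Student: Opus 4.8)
The plan is to reduce everything to a statement about walks in a tree, using the structure of portals. Since the cell polygon $P'$ is simple (Lemma~\ref{lem:cell-polygon-simple}), the \emph{portal graph} $H$ --- with one vertex per portal and an edge between two portals whenever some grid edge of $\Gamma$ joins them --- is a tree: this is exactly the argument in the proof of Lemma~\ref{lem:portal_tree_construction} (removing the cells of a portal disconnects $P'$, so $H$ is acyclic, and $H$ is connected because $\Gamma$ is), so in fact $H=T_\Gamma$. Any $s$--$t$ path $Q$ in $\Gamma$ induces a walk $W(Q)$ in $H$ from the portal $p_s$ of $s$ to the portal $p_t$ of $t$: list the portals of the grid nodes of $Q$ in order and collapse consecutive repetitions; consecutive \emph{distinct} entries are then adjacent in $H$, since the corresponding grid nodes are joined by a grid edge. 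This $W(Q)$ is precisely ``the sequence of portals traversed by $Q$''. Because in a tree a walk between two vertices visits no vertex twice \emph{iff} it is the unique simple path between them, the hypothesis on $P(s,t)$ gives $\mathcal P(s,t)=W(P(s,t))=\pi$, where $\pi$ is the unique simple $p_s$--$p_t$ path in $H$. It therefore suffices to show that the portal sequence of \emph{every} optimal $s$--$t$ path is also $\pi$; equivalently --- and this is the heart of the argument --- that an optimal path never enters a portal, leaves it, and then returns to it.

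For the latter, I would take a shortest $s$--$t$ path $P^*$ and suppose $W(P^*)$ is not a simple path in the tree $H$. Since $W(P^*)$ has no two equal consecutive entries, a short tree-walk argument (look at a closest repeated pair of entries) shows $W(P^*)$ must contain an \emph{immediate backtrack}: three consecutive portals of the form $p,q,p$. Translating this back to $\Gamma$, there are grid nodes $a,b$ of $p$ such that the sub-path of $P^*$ between $a$ and $b$ visits, strictly between them, only grid nodes of $q$, entering $q$ via an edge $\{a,a'\}$ and leaving it via an edge $\{b',b\}$ with $a',b'\in q$. Two geometric facts make this a detour: (1) a portal, being a maximal vertically connected set of active cells, occupies a single column over a contiguous range of rows, so any two of its grid nodes are joined by a vertical path \emph{inside the portal} whose length equals their row difference; and (2) any grid edge joining two \emph{different} portals is horizontal (a vertical edge would place both endpoints in the same portal), so $a$ and $a'$ lie in the same row, and likewise $b$ and $b'$. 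Writing $h$ for the row difference of $a$ and $b$, the sub-path of $P^*$ from $a'$ to $b'$ stays in the single column of $q$ and thus has length at least $h$, so the sub-path of $P^*$ from $a$ to $b$ has length at least $h+2$; replacing it by the vertical path of length $h$ inside $p$ from $a$ to $b$ produces a strictly shorter $s$--$t$ walk, contradicting optimality of $P^*$. (The degenerate case $a=b$ is the same: $P^*$ makes a positive-length excursion out of $p$ and back to the same node, which is simply deleted.)

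Putting this together: every shortest $s$--$t$ path has a repetition-free portal sequence, which by the tree argument of the first paragraph equals $\pi$, and since $\mathcal P(s,t)=\pi$ as well, we get the claim (this also re-proves Lemma~\ref{obs:portal_sequence:opt}, or one could instead invoke it once a single optimal path is known to realize $\pi$). I expect the only genuine obstacle to be the shortcut step: carefully justifying the ``immediate backtrack'' reduction and pinning down the two structural facts --- portals are contiguous vertical strips, and inter-portal grid edges are horizontal (hence row-preserving) --- after which the length comparison is routine; everything else is bookkeeping about walks in trees.
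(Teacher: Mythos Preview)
Your proposal is correct and follows the same approach as the paper: both reduce the claim to the fact that the portal adjacency graph is a tree (Lemma~\ref{lem:cell-polygon-simple} and the argument in Lemma~\ref{lem:portal_tree_construction}), so that a repetition-free portal sequence must coincide with the unique simple $p_s$--$p_t$ path. The paper treats both Lemma~\ref{obs:portal_sequence:opt} and Lemma~\ref{obs:portal_sequence} as immediate ``observations'' from this tree structure and does not spell out why an optimal path in $\Gamma$ cannot revisit a portal; your explicit shortcut argument (immediate backtrack $p,q,p$, horizontal inter-portal edges preserve rows, vertical replacement inside $p$ saves at least two steps) is the natural way to fill that gap and is sound.
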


We can now prove that we always move optimally when going in horizontal direction.

\begin{lemma} \label{lem:routing:horizontal}
    If at a grid node $g$ the routing algorithm (Algorithm~\ref{algo:routing}) forwards a message $m$ horizontally (west or east direction), then there exists at least one optimal path from $g$ to $t$ that moves in the same direction.
\end{lemma}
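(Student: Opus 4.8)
The plan is to split the claim into two ingredients. The first is a \emph{direction analysis}: if \textsf{TryHorizontal} forwards $m$ from $g$ to $g_W$ (the case of $g_E$ is symmetric), then the portal $p(g_W)$ of $g_W$ is precisely the portal immediately following the portal $p(g)$ of $g$ on the unique path in the portal tree $T_\Gamma$ from $p(g)$ to the portal $p(t)$ of $t$. The second is an \emph{exchange argument}: given this, any optimal $g$-$t$ path in $\Gamma$ can be rerouted into an equally long one whose first edge is $\{g, g_W\}$. Together these prove the lemma, and I will carry them out for the west case.

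For the direction analysis I first record how the portal labels of Lemma~\ref{lem:portal_label} encode $T_\Gamma$. For a portal $p$, let $r_p$ be its \emph{root-most} node, the node of $p$ of minimum depth in $T_\Gamma$; every node of $p$ carries the portal label $I_{r_p}$. Each portal is a connected subpath of $T_\Gamma$ containing all of its vertical edges, and neighbouring portals are joined by exactly one tree edge; hence, for two neighbouring portals, the whole of one of them lies in the subtree of $T_\Gamma$ below the root-most node of the other, and by the parenthesis property of Lemma~\ref{lem:labeling} the portal labels satisfy $q.P \supset p.P$ whenever $q$ is the ancestor side and $p$ the descendant side of their joining edge (distinct portals receive distinct labels since distinct nodes get distinct preorder numbers). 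Write $q_0 := p(g)$ and $q_W := p(g_W)$; then $g_W.P \supset g.P$ holds exactly when $q_0$ lies below $q_W$ across their joining edge, and $g_W.P \subset g.P$ exactly when $q_W$ lies below $q_0$. Now I split on the three clauses under which \textsf{TryHorizontal} sends $m$ west. If $g.P \subset g_W.P \subseteq t.P$, then $q_0$ lies below $q_W$ and $q_0$ lies strictly in the subtree of $p(t)$, so moving up from $q_0$ one first reaches $q_W$ and eventually $p(t)$. If $t.P \subseteq g_W.P \subset g.P$, then $q_W$ lies below $q_0$ and $p(t)$ lies in the subtree of $q_W$, so the $q_0$-$p(t)$ path descends through $q_W$ first. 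If $g.P \incomp t.P$ and $g.P \subset g_W.P$, then $q_0$ and $p(t)$ are incomparable in $T_\Gamma$, their lowest common ancestor is a strict ancestor of $q_0$, and $q_W$ (which is above $q_0$) is therefore the first portal on the $q_0$-$p(t)$ path. In every case $q_W$ is the portal immediately after $q_0$ on that path.

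For the exchange argument, write the $q_0$-$p(t)$ path in $T_\Gamma$ as $q_0 = Q_0, Q_1, \dots, Q_m = p(t)$ with $m \ge 1$; by the above $Q_1 = q_W$. Fix an optimal $g$-$t$ path $P^*$. Since the cell polygon $P'$ is simple (Lemma~\ref{lem:cell-polygon-simple}), the portal adjacency graph is a tree, so by Lemma~\ref{obs:portal_sequence:opt} $P^*$ visits $Q_0, Q_1, \dots, Q_m$ in this order, each exactly once; and because the rows occupied by any portal form a contiguous interval, $P^*$ moves monotonically in the row index while inside a portal and crosses between consecutive portals along a single horizontal edge. Let $a \in Q_0$ be the node at which $P^*$ leaves $Q_0$, crossing to its neighbour $a' \in Q_1$ in the same row $y(a)$; the part of $P^*$ inside $Q_0$ is a monotone vertical path of $|y(g)-y(a)|$ hops, and let $\sigma$ be the hop-length of the part of $P^*$ from $a'$ to $t$, so that $|P^*| = |y(g)-y(a)| + 1 + \sigma$. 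Now build $\widehat{P}$ by going $g \to g_W$ (which lies in $Q_1$ in row $y(g)$), then moving monotonically inside $Q_1$ from row $y(g)$ to row $y(a)$ --- legal since both rows occur in $Q_1$ and its rows are contiguous --- thereby reaching $a'$, and then following $P^*$ from $a'$ to $t$. Then $|\widehat{P}| = 1 + |y(g)-y(a)| + \sigma = |P^*|$, so $\widehat{P}$ is optimal; if $\widehat{P}$ is not simple we shortcut it to a simple path of no greater length, which does not alter its first edge because $\widehat{P}$ never re-enters $Q_0$. Hence there is an optimal $g$-$t$ path that moves west, as claimed.

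I expect the direction analysis to be the main obstacle: faithfully translating the three syntactic label comparisons in \textsf{TryHorizontal} into ancestor/descendant/incomparability relations in $T_\Gamma$, and in particular verifying that a portal's label pins down its position relative to its neighbours in $T_\Gamma$. Once this correspondence is in place the exchange argument is routine, using only the triangle inequality for row indices and the contiguity of the rows occupied by each portal.
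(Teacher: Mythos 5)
Your proof is correct, but it is organized quite differently from the paper's. The paper argues \emph{globally} along the routed walk: it shows (by contradiction, using the monotonicity of the label comparisons at every step) that the sequence of portals visited by the routing algorithm never repeats a portal, and then invokes Lemmas~\ref{obs:portal_sequence:opt} and~\ref{obs:portal_sequence} to identify that sequence with the portal sequence of any optimal path, leaving the final step --- from ``same portal sequence'' to ``some optimal path moves in the same direction at $g$'' --- implicit. You instead argue \emph{locally at $g$}: the three clauses of \textsf{TryHorizontal} are translated, via the parenthesis property of the labels and the fact that adjacent portals are joined by exactly one tree edge with the descendant side labelled by a strict subinterval, into the statement that $p(g_W)$ is precisely the next portal on the unique portal-tree path from $p(g)$ to $p(t)$; you then make the missing step explicit with an exchange argument that reroutes an arbitrary optimal path through the edge $\{g,g_W\}$ at no extra cost (using that portals occupy single columns with contiguous rows, so the detour through $Q_1$ costs exactly the vertical hops saved in $Q_0$). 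What each buys: your version yields the stronger, induction-ready conclusion that some optimal path has $\{g,g_W\}$ as its \emph{first} edge, and it fills in the step the paper glosses over; the paper's ``no portal revisited'' argument, on the other hand, directly gives the global Corollary~\ref{cor:same_portals}, which it reuses for the vertical case, whereas from your pointwise lemma that corollary would need a short induction. Note that your appeal to Lemma~\ref{obs:portal_sequence:opt} for ``$P^*$ visits $Q_0,\dots,Q_m$ in order, each exactly once'' implicitly also uses the paper's preceding (unproven) observation that the optimal portal sequence is the simple tree path; this is the same level of informality as the paper's own observation lemmas, so I do not regard it as a gap.
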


\begin{proof}
    We prove the lemma by showing that the sequence of portals traversed via Algorithm~\ref{algo:routing} on the path $P(s,t)$ between two nodes $s,t \in V_{\Gamma}$ does not visit a portal twice, i.e., once $P(s,t)$ has left some portal $\mathcal P$, $\mathcal P$ is never visited again in $P(s,t)$.
    The lemma then follows from Lemmas~\ref{obs:portal_sequence:opt} and \ref{obs:portal_sequence}.

    Assume to the contrary that $P(s,t)$ visits grid nodes $p, v_1,\ldots,v_k, q$, where $p,q$ belong to the same portal $\mathcal P$ and all $v_i$ belong to different portals, i.e., $p.P = q.P$ and either $v_i.P \subset p.P$, $v_i.P \supset p.P$, or $v_i.P \incomp p.P$.
    W.l.o.g. assume that $p_W = v_1$. %\kh{also, in between a different portal must be visited (not all nodes can be of $P$)}
    We distinguish between the following three cases:
    \begin{itemize}
        \item[(i)] Assume that $p.P \subset t.P$.
        Since we route $m$ from $p$ to $p_W$, it follows that $p.P \subset p_W.P$.
        The only possibility to visit the portal $\mathcal P$ again is by going east at $v_k$, i.e., $v_{k_E} = q$.
        Note that since $p.P \subset t.P$, the routing never goes to a portal whose portal label is a superset of $t.P$.
        Therefore, at $v_k$ it holds that $v_k.P \subset t.P$ and, consequently, $v_k.P \subset v_{k_E}.P$.
        Putting everything together, we obtain $$p.P \subset v_1.P \subseteq \ldots \subseteq v_k.P \subset q.P \subseteq t.P.$$%\kh{technically, some of the labels could be the same, since you could stay in the same portal for some time}
        Since it holds that $p.P = q.P$, we arrive at a contradiction.
        \item[(ii)] The case where $p.P \supset t.P$ works analogously to case (i).
        \item[(iii)] Assume that $p.P \incomp t.P$, i.e., $p.P \not \subseteq t.P \wedge p.P \not \supseteq t.P$.
        By definition of Algorithm~\ref{algo:routing} $m$ only visits portals whose labels are a superset of $p.P$, until we arrive at a portal whose label is a superset of $t.P$.
        From this point on, $m$ never visits a portal whose portal label is incomparable to $t.P$, i.e., $m$ would never visit the grid node $q$.
        Note that this follows from the fact that the adjacency graph of portals is a tree, and the portal labels satisfy the tree labelling property of the portal tree, i.e., labels of children are subintervals of the labels of their parents and labels of parents are superintervals of their children's labels.
    \end{itemize}
\end{proof}

Combining Lemma~\ref{obs:portal_sequence} and Lemma~\ref{lem:routing:horizontal} implies the following corollary.

\begin{corollary}
\label{cor:same_portals}
    The sequence of portals traversed by a message $m$ via our routing strategy is the same as for any optimal path.
\end{corollary}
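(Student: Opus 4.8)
The plan is to derive the corollary directly from Lemma~\ref{lem:routing:horizontal} and Lemma~\ref{obs:portal_sequence}, with one structural observation linking them. That observation is that the routing strategy of Algorithm~\ref{algo:routing} can change the current portal \emph{only} via a horizontal step: the \textsf{GoVertical} routine moves along vertical grid edges, and by the definition of portals (the connected components of $(V_\Gamma, E_{vert})$) such edges never leave a portal. Consequently the portal sequence $\mathcal{P}(s,t)$ induced by a routing path $P(s,t)$ is determined entirely by its horizontal steps, which are precisely the steps analyzed in Lemma~\ref{lem:routing:horizontal}.

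First I would record that the proof of Lemma~\ref{lem:routing:horizontal} in fact establishes the stronger statement that $\mathcal{P}(s,t)$ never revisits a portal: its three-case argument ($g.P \subset t.P$, $g.P \supset t.P$, and $g.P \incomp t.P$), which uses that the portal adjacency graph is a tree and that the portal labels satisfy the parenthesis/interval property from Section~\ref{sec:portals_and_labeling}, derives from the assumption that some portal $\mathcal{P}$ is left and later re-entered a chain $p.P \subset \ldots \subset q.P$ between two grid nodes $p, q$ of $\mathcal{P}$, contradicting $p.P = q.P$. Hence the portal sequence traversed by a routing path is simple.

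Then I would invoke Lemma~\ref{obs:portal_sequence}: for any $s$-$t$-path whose traversed portal sequence contains no portal twice, this sequence equals the portal sequence traversed by any optimal $s$-$t$-path, which by Lemma~\ref{obs:portal_sequence:opt} (applicable since the portal graph is a tree by Lemma~\ref{lem:cell-polygon-simple}) is itself unambiguous. Applying this to $P(s,t)$ yields the corollary. The only point needing a little care --- and the closest thing to an obstacle --- is that Lemma~\ref{obs:portal_sequence} speaks about a genuine path \emph{from $s$ to $t$}, so I would first note that the routing trajectory indeed terminates at $t$: since the portal tree is finite and no portal repeats, the trajectory eventually stays in $t$'s portal, where the vertical rules (comparing $g.L$ with $t.L$ along the labelling) push the packet to $t$. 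Everything else is bookkeeping over already-proven lemmas, so no genuinely new difficulty arises.
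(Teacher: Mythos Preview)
Your proposal is correct and takes essentially the same approach as the paper, which derives the corollary simply by ``Combining Lemma~\ref{obs:portal_sequence} and Lemma~\ref{lem:routing:horizontal}.'' Your elaboration makes explicit what the paper leaves implicit: that only horizontal steps change the portal, and that the proof of Lemma~\ref{lem:routing:horizontal} already establishes the no-revisit property needed to invoke Lemma~\ref{obs:portal_sequence}. Your added remark about termination at $t$ is a fair concern (Lemma~\ref{obs:portal_sequence} is phrased for a genuine $s$-$t$-path), but note that the paper itself glosses over this and only proves termination later via Lemma~\ref{lem:grid_routing:move}; your sketch there is adequate and non-circular, though the step ``eventually stays in $t$'s portal'' could use one more sentence arguing that the simple portal walk in the portal tree cannot stall at a non-$t$ portal.
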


To also prove optimality in vertical direction, we need the following auxiliary lemma, which states the reverse direction of Lemma~\ref{lem:routing:horizontal}.

\begin{lemma} \label{lem:routing:vertical:aux}
	If at a grid node $g$ there is an optimal path that routes a message $m$ horizontally (west or east direction), then the routing algorithm (Algorithm~\ref{algo:routing}) routes $m$ horizontally in the same direction.
\end{lemma}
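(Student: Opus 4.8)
The plan is to mirror the proof of Lemma~\ref{lem:routing:horizontal}, but reading the underlying equivalence in the other direction. Write $\mathcal{P}_x$ for the portal containing a grid node $x$, and recall from Lemma~\ref{lem:portal_tree_construction} that the portals form a rooted tree $T_\Gamma$ whose labels satisfy the tree-labelling property: a child's portal label is a strict subinterval of its parent's, and portal labels from different branches are incomparable (this is exactly how $g.P$ is produced via Lemmas~\ref{lem:labeling} and~\ref{lem:portal_label}). By Lemmas~\ref{obs:portal_sequence:opt} and~\ref{obs:portal_sequence}, the sequence of portals visited by any optimal $g$-$t$-path in $\Gamma$ is precisely the unique simple path from $\mathcal{P}_g$ to $\mathcal{P}_t$ in $T_\Gamma$.

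The first observation is that a horizontal step always changes portals: vertical edges join only cells in the same column, so each portal lies within a single column, and therefore $g_W$ (resp.\ $g_E$) always lies in a portal different from $\mathcal{P}_g$ and adjacent to $\mathcal{P}_g$ in $T_\Gamma$. Hence, if some optimal path from $g$ to $t$ makes its first step to $g_W$, then $\mathcal{P}_{g_W}$ is the first portal after $\mathcal{P}_g$ on the $T_\Gamma$-path to $\mathcal{P}_t$; in particular $\mathcal{P}_g \neq \mathcal{P}_t$, so $g.P \neq t.P$ and \textsf{TryHorizontal} does not return on the same-portal check. The symmetric statement holds for a first step to $g_E$.

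The core of the argument is the claim that the west guard in \textsf{TryHorizontal}, namely
\[(g.P \subset g_W.P \subseteq t.P)\ \vee\ (t.P \subseteq g_W.P \subset g.P)\ \vee\ (g.P \incomp t.P \wedge g.P \subset g_W.P),\]
holds if and only if $\mathcal{P}_{g_W}$ is the first portal after $\mathcal{P}_g$ on the $T_\Gamma$-path from $\mathcal{P}_g$ to $\mathcal{P}_t$ (and symmetrically for the east guard). I would prove this by a short case distinction on whether $\mathcal{P}_{g_W}$ is the parent or a child of $\mathcal{P}_g$ in $T_\Gamma$, and on whether $\mathcal{P}_t$ is an ancestor of, a descendant of, or incomparable to $\mathcal{P}_g$. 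If $\mathcal{P}_{g_W}$ is the parent, then $g.P \subset g_W.P$, and the parent is the next portal towards $\mathcal{P}_t$ exactly when $t.P \not\subseteq g.P$; the ancestor subcase is then caught by the first disjunct (since $g.P \subset g_W.P \subseteq t.P$) and the incomparable subcase by the third, while the descendant subcase $t.P \subset g.P$ makes all three disjuncts fail. If $\mathcal{P}_{g_W}$ is a child, then $g_W.P \subset g.P$, and this child is the next portal towards $\mathcal{P}_t$ exactly when $\mathcal{P}_t$ lies in its subtree, i.e.\ when $t.P \subseteq g_W.P$, which is precisely the second disjunct; otherwise all three disjuncts fail. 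Each implication is immediate from the tree-labelling property.

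Finally, since at most one portal adjacent to $\mathcal{P}_g$ can be the next portal towards $\mathcal{P}_t$, the claim shows that at most one of the west/east guards can be satisfied. If the optimal path steps to $g_W$, the claim gives that the west guard holds, so Algorithm~\ref{algo:routing}, which tests the west guard first, forwards $m$ to $g_W$; if it steps to $g_E$, the claim gives that the east guard holds and the west guard fails, so Algorithm~\ref{algo:routing} forwards $m$ to $g_E$. Either way $m$ is routed horizontally in the same direction, as required. I expect the main obstacle to be bookkeeping in the case analysis of the central claim — in particular the boundary cases where $\mathcal{P}_t = \mathcal{P}_{g_W}$ (so $t.P = g_W.P$) and checking that the guard genuinely fails in every wrong-direction configuration — rather than the conceptual content, which is precisely the converse of Lemma~\ref{lem:routing:horizontal}.
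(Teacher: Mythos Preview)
Your argument is correct, and it proceeds along a genuinely different line from the paper's proof. The paper argues by contradiction: it assumes Algorithm~\ref{algo:routing} does \emph{not} step horizontally at $g$, invokes Corollary~\ref{cor:same_portals} to conclude that the algorithm must nevertheless eventually cross into the same next portal $\mathcal{P}'$ as the optimal path, and then observes that the horizontal guard at that later node uses exactly the same triple of portal labels $(g.P,\,\mathcal{P}'\text{'s label},\,t.P)$ as at $g$, so the guard should already have fired at $g$. Your proof instead establishes directly the equivalence ``the west (east) guard holds at $g$ iff $\mathcal{P}_{g_W}$ ($\mathcal{P}_{g_E}$) is the next portal on the $T_\Gamma$-path to $\mathcal{P}_t$'' via a parent/child case split, and reads off the lemma from the forward direction of that equivalence. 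This is more self-contained—you do not need Corollary~\ref{cor:same_portals}, and you sidestep the mild delicacy in the paper's proof of implicitly assuming the algorithm makes progress along the portal before that is formally established. The trade-off is that your case analysis is longer, and you are in effect re-proving the content that the paper packages into Lemma~\ref{lem:routing:horizontal} and its corollary; the paper's route is terser precisely because it reuses that earlier work. Your anticipated boundary case $\mathcal{P}_t=\mathcal{P}_{g_W}$ is handled correctly by the non-strict inequalities in the first two disjuncts, and the ``wrong-direction'' checks in Cases~1c and~2b go through as you describe.
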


\begin{proof}
	Assume to the contrary that our routing strategy does not send $m$ horizontally at $g$, even though there exists an optimal path that does so.
	Since our routing strategy does not send $m$ horizontally, it sends $m$ vertically.
	By Corollary \ref{cor:same_portals}, it is guaranteed that our routing strategy will eventually switch to the same portal as the optimal path.
	However, due to the construction of our algorithm, the conditions for switching portals are exactly the same at each node in the portal where $g$ is contained (recall that we only compare the portal label with $t$'s portal label).
	Therefore, our routing strategy must have already sent $m$ horizontally at $g$.
\end{proof}

\begin{lemma} \label{lem:routing:vertical}
    If at a grid node $g$ the routing algorithm (Algorithm~\ref{algo:routing}) routes a message $m$ vertically (north or south direction), then there exists at least one optimal path from $g$ to $t$ that routes $m$ in the same direction.
\end{lemma}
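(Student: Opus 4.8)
The plan is to combine the auxiliary results proved just before this lemma with a short combinatorial analysis of the \textsf{GoVertical} branch of Algorithm~\ref{algo:routing}. Suppose the algorithm forwards $m$ vertically at $g$, say to $g_N$ (the $g_S$ case is symmetric); we must produce an optimal $g$-$t$-path whose first edge goes from $g$ to $g_N$. The first observation is a reduction: by the contrapositive of Lemma~\ref{lem:routing:vertical:aux}, no optimal $g$-$t$-path leaves $g$ horizontally, so \emph{every} optimal $g$-$t$-path starts with a vertical edge, i.e.\ it stays (for at least one step) inside the portal $\mathcal P=(v_1,\dots,v_k)$ containing $g$. Hence it suffices to show that the \emph{side} of $\mathcal P$ towards which \textsf{GoVertical} moves (north vs.\ south of $g=v_j$) is the side from which some optimal path continues.

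The second ingredient is monotonicity of optimal paths inside a portal. Since $P'$ is simple (Lemma~\ref{lem:cell-polygon-simple}), the portal adjacency graph is a tree, and the unique simple path in it between the portals of any two nodes is a shortest path; for two nodes in the \emph{same} portal this path is just that portal, so by Lemma~\ref{obs:portal_sequence:opt} every optimal path between them stays inside the portal and is therefore the monotone path along it. Combining this with Corollary~\ref{cor:same_portals}: an optimal $g$-$t$-path traverses $\mathcal P$ monotonically until it leaves $\mathcal P$ at some node $v_e$ (with $v_e:=t$ if $t\in\mathcal P$) towards the next portal $\mathcal P'$ on the (unique, tree-determined) portal sequence; the side of $\mathcal P$ on which $v_e$ lies is the same for all optimal paths, so it is enough to check that \textsf{GoVertical} steps towards that side.

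The third step identifies that side with the label comparison. Let $v_c$ be the node of $\mathcal P$ closest to the root of $T_\Gamma$, so $I_{v_c}=g.P$; removing $v_c$ splits the portal path into a northern and a southern sub-path, each a descending branch of $T_\Gamma$. Moving from $v_j$ towards $v_c$ always goes to the current node's $T_\Gamma$-parent, hence by Lemma~\ref{lem:labeling} strictly enlarges the interval, while moving away from $v_c$ strictly shrinks it; thus each of the three firing clauses of \textsf{GoVertical} selects a definite direction: if $g.L\subset t.L$ or $g.L\incomp t.L$ it selects the $v_c$-ward vertical neighbour (which is the $T_\Gamma$-ancestor of $g$ and, in the first case, a descendant of $t$), and if $g.L\supset t.L$ it selects the vertical $T_\Gamma$-child of $g$ whose subtree contains $t$. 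In each case one argues, exactly as in cases (i)/(ii) of Lemma~\ref{lem:routing:horizontal} but using portal labels, that $\mathcal P'$ is an ancestor portal of $\mathcal P$ (hooked into $\mathcal P$ on the $v_c$-ward side) precisely when $t.L$ is not a proper subset of $g.L$, and a descendant portal (or $t$ itself) hooked into the corresponding sub-path otherwise; so the direction chosen by \textsf{GoVertical} is the $v_e$-side, and by Step~2 it is the first edge of an optimal path.

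The main obstacle I expect is the last step in the case $t\notin\mathcal P$: there \textsf{GoVertical} is entered not because we already sit in $t$'s portal but because the wanted horizontal neighbour is momentarily absent, so one must verify carefully that the $g.L$-versus-$t.L$ test still points toward the correct exit node of $\mathcal P$ even when $\mathcal P$ has several $\Gamma$-edges to $\mathcal P'$ and $t$ lies many portals away, and even when the $T_\Gamma$-edge connecting $\mathcal P$ to a child portal does not hang off the top node of that child portal. The clean route is to propagate the comparison through the portal labels (using the tree-nesting property recorded after Lemma~\ref{lem:portal_label}) and to invoke Corollary~\ref{cor:same_portals} to guarantee that an optimal path uses the same hook, so that monotone travel towards $v_c$ (respectively towards the $t$-containing sub-path of $\mathcal P$) is optimal --- which is exactly the step the fired clause performs.
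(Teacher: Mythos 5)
Your plan shares the paper's two load-bearing ingredients---the reduction via the contrapositive of Lemma~\ref{lem:routing:vertical:aux} (so \emph{every} optimal path leaves $g$ vertically) and Corollary~\ref{cor:same_portals}, which pins down the next portal $\mathcal P'$---but the endgame differs. The paper finishes with a short contradiction: the routing path never reverses a vertical move, so if it started in the wrong vertical direction it would leave $\mathcal P$ towards a different portal than every optimal path does, contradicting Corollary~\ref{cor:same_portals}. You instead try to identify, clause by clause, the side of $\mathcal P$ on which the exit towards $\mathcal P'$ lies and match it against \textsf{GoVertical}. That forward argument can be made to work, but as written it contains a false intermediate claim: the dichotomy ``$\mathcal P'$ is an ancestor portal of $\mathcal P$ precisely when $t.L$ is not a proper subset of $g.L$'' fails in the case $g.L \incomp t.L$. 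There the lowest common ancestor of $g$ and $t$ in $T_\Gamma$ may be a node $u$ of $\mathcal P$ itself (a node strictly between $g$ and $v_c$, or $v_c$, or on the far side of $v_c$), in which case $\mathcal P'$ is a \emph{child} portal of $\mathcal P$, hooked at $u$, not an ancestor portal. Your final conclusion survives this, because such a $u$ is never a descendant of $g$, hence lies on the $v_c$-ward side of $g$, and---since \textsf{TryHorizontal} already failed at $g$ and its tests depend only on portal labels, which are constant along $\mathcal P$---the (contiguous) adjacency interval between $\mathcal P$ and $\mathcal P'$ cannot contain $g$'s row and therefore lies entirely on that same side, which is where the third clause of \textsf{GoVertical} sends $m$. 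But this is a different argument from the one you state, and it has to be supplied.

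The same contiguity-plus-failed-\textsf{TryHorizontal} observation is also what is missing behind your unproved assertion in Step~2 that ``the side of $\mathcal P$ on which $v_e$ lies is the same for all optimal paths'': if the adjacency interval with $\mathcal P'$ spanned $g$'s row, there would be an optimal path leaving $g$ horizontally, contradicting Step~1; otherwise the interval is entirely north or entirely south of $g$ and every optimal path must travel monotonically towards it. Both repairs are local, so I would classify your proposal as a correct strategy with a fixable slip rather than a wrong approach---but note how much shorter the paper's route is: by arguing by contradiction from the no-reversal property and Corollary~\ref{cor:same_portals}, it never has to analyse where child portals hook into $\mathcal P$ or how the three label clauses correspond to geometric directions, which is exactly where your version picks up its case-analysis burden and its error.
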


\begin{proof}
	First, assume that $g$ is contained in the same portal as $t$ and w.l.o.g. assume that $g.L \subset t.L$.
	Then $g$ is contained in the subtree with root $t$.
	By definition of our algorithm, our routing strategy sends $m$ up in the tree at $g$, so $g$ forwards $m$ in the direction of $t$ in the portal, which is obviously optimal.
	
	Now assume that $g$ and $t$ are contained in different portals.
	As our routing strategy decided to move vertically, and, more specifically, to \emph{not} move horizontally, any optimal path also moves vertically at $g$, due to Lemma~\ref{lem:routing:vertical:aux}.
	All that is left to show is that our routing strategy routes $m$ vertically in the optimal direction.
	Assume to the contrary that the optimal path routes $m$ in opposite vertical direction compared to our routing path.
	It is easy to see that our routing path does not reverse a move, and therefore, both paths would change to different portals once they decide to route $m$ horizontally again -- contradicting Corollary \ref{cor:same_portals}.
\end{proof}

Lemma~\ref{lem:routing:horizontal} guarantees that we route a message optimally when we decide going in horizontal direction (east or west).
Lemma~\ref{lem:routing:vertical} guarantees that we route a message optimally when we decide going in vertical direction (north or south).
Therefore, we obtain the following lemma, implying that our routing scheme for the grid graph is optimal regarding efficiency.

\begin{lemma}[Optimality]\label{lem:grid_routing:optimality}
    If the routing algorithm (Algorithm~\ref{algo:routing}) forwards a message $m$ with target node $t$ from a grid node $g$ to a grid neighbor of $g$, then there exists at least one optimal path that forwards $m$ in the same direction.
\end{lemma}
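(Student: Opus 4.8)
The plan is a direct case analysis driven by the structure of Algorithm~\ref{algo:routing}. Suppose that at a grid node $g$ the algorithm forwards the message $m$ to some grid neighbor $g'$ of $g$ (so in particular $g \neq t$, as otherwise the algorithm would report arrival rather than forward). Since $g$ has at most four neighbors in $\Gamma$ --- namely $g_W, g_E$ (horizontal) and $g_N, g_S$ (vertical) --- the forward is either a horizontal move or a vertical move, and these two cases are exhaustive. They are also mutually exclusive: by the control flow of Algorithm~\ref{algo:routing}, \textsf{GoVertical} is invoked precisely when \textsf{TryHorizontal} returns \emph{false}, i.e.\ exactly when no horizontal forward was made.

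In the first case the forward is horizontal, i.e.\ $g' \in \{g_W, g_E\}$, performed inside \textsf{TryHorizontal}. Then I would apply Lemma~\ref{lem:routing:horizontal} verbatim: it supplies an optimal $g$-$t$-path in $\Gamma$ that moves in the same horizontal direction, which is exactly what the statement asks for. In the second case the forward is vertical, i.e.\ $g' \in \{g_N, g_S\}$, performed inside \textsf{GoVertical}. Then I would apply Lemma~\ref{lem:routing:vertical}, which supplies an optimal $g$-$t$-path that moves in the same vertical direction. In both cases we are done.

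There is essentially no obstacle here: the substance of the claim is already isolated in Lemmas~\ref{lem:routing:horizontal} and~\ref{lem:routing:vertical}, and the only thing that remains is to check that every single forwarding step of Algorithm~\ref{algo:routing} falls under exactly one of the two. This is immediate from the pseudocode, since each forwarding action inside \textsf{TryHorizontal} sends to $g_W$ or $g_E$ and each forwarding action inside \textsf{GoVertical} sends to $g_N$ or $g_S$; and, as already argued in the proofs of those two lemmas, the guarding subset/superset/incomparability conditions guarantee that at most one branch fires in each routine, so the forwarded-to neighbor $g'$ is well defined. Hence the Optimality lemma follows as an immediate corollary of the two preceding lemmas, and the remark that it establishes efficiency of $\mathcal{R}_\Gamma$ completes the picture.
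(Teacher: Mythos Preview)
Your proposal is correct and matches the paper's own proof: the paper simply observes that Lemma~\ref{lem:routing:horizontal} covers the horizontal case and Lemma~\ref{lem:routing:vertical} covers the vertical case, and concludes the Optimality lemma from their combination. Your additional remarks about exhaustiveness and mutual exclusivity of the two cases are fine but not needed beyond what the paper states.
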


All that is left to show for our routing strategy to be correct is to show that in case the message $m$ is at some grid node $g \neq t$, then the routing strategy always forwards $m$ to one of $g$'s grid neighbors and never remains at $g$.

\begin{lemma} \label{lem:grid_routing:move}
	Suppose a message $m$ with target $t$ is at some node $g \neq t$.
	Then the routing algorithm always forwards $m$ to one of $g$'s grid neighbors.
\end{lemma}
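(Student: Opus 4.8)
The plan is to show that, whenever $g\neq t$, at least one of the four forwarding conditions in Algorithm~\ref{algo:routing} is triggered. The routing procedure can halt without forwarding only at the line ``\textbf{if}$(g.L=t.L)$'', and since the interval labels produced in Lemma~\ref{lem:labeling} have pairwise distinct left endpoints (the DFS preorder numbers), $g.L=t.L$ already forces $g=t$. Hence it suffices to prove the following: if \textsf{TryHorizontal} returns $false$, then one of the two conditions inside \textsf{GoVertical} holds.

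Before the case analysis I would isolate two structural facts, both consequences of the portal graph being a tree (Lemma~\ref{lem:portal_tree_construction}, which uses Lemma~\ref{lem:cell-polygon-simple}). First, every portal $\mathcal P$ retains all of its vertical edges, so $\mathcal P$ induces a path in $T_\Gamma$, it contains a unique node $v^{\mathcal P}$ closest to the root, the common portal label of the nodes of $\mathcal P$ equals the interval $I_{v^{\mathcal P}}$ (Lemma~\ref{lem:portal_label}), every other node of $\mathcal P$ lies in the subtree of $v^{\mathcal P}$, and $v^{\mathcal P}$ is the \emph{only} node of $\mathcal P$ whose parent in $T_\Gamma$ lies outside $\mathcal P$ (otherwise $T_\Gamma$ would contain a cycle). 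Second, combining this with the parenthesis property of the DFS intervals yields that the portal labels form a tree labelling of the portal-adjacency tree which is consistent with the node labelling: the $T_\Gamma$-parent $w$ of $g$ satisfies $g.L\subsetneq I_w$; if $w$ lies in a portal different from $g$'s then necessarily $g=v^{\mathcal P_g}$, so $g.P=g.L\subsetneq w.P$; and always $g.L\subseteq g.P$ and $t.L\subseteq t.P$, with portal labels nested along ancestor relations in the portal tree and $\incomp$ across distinct branches.

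The argument itself then splits on the first edge $e$ of the unique $T_\Gamma$-path from $g$ to $t$. If $g$ and $t$ lie in the same portal, \textsf{TryHorizontal} returns $false$ and $e$ is vertical, so its other endpoint is $g_N$ or $g_S$; the parenthesis property makes exactly one \textsf{GoVertical} condition true at that neighbour --- one of the ``$\subseteq$''/``$\supseteq$'' disjuncts when $g$ and $t$ are ancestor and descendant, and the incomparable disjunct $g.L\subsetneq g_N.L$ (resp.\ $g.L\subsetneq g_S.L$) otherwise, since in that case $g\neq v^{\mathcal P_g}$ and $e$ leads to the $T_\Gamma$-parent of $g$. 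If $g$ and $t$ lie in different portals, I look again at $e$: when $e$ is vertical the same computation produces a \textsf{GoVertical} move; when $e$ is horizontal, its other endpoint is $g_W$ or $g_E$, we have $g=v^{\mathcal P_g}$ hence $g.P=g.L$, and one checks that the position of $t$'s portal relative to $g$'s portal in the portal tree --- on the descendant side, the ancestor side, or in another branch --- forces, via fact two together with $t.L\subseteq t.P$, exactly one of the three disjuncts of the corresponding \textsf{TryHorizontal} condition, contradicting the assumption that \textsf{TryHorizontal} returned $false$.

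I expect this last step to be the main obstacle: confirming that a horizontal first edge of the tree path is always matched by one of \textsf{TryHorizontal}'s three disjuncts, $g.P\subsetneq{\cdot}\subseteq t.P$, $t.P\subseteq{\cdot}\subsetneq g.P$, or $g.P\incomp t.P\wedge g.P\subsetneq{\cdot}$. This is routine but delicate, since one has to track the node labels $g.L,t.L$ and the portal labels $g.P,t.P$ simultaneously and exploit the right monotonicity relations between them along tree paths, while resisting the conflation of ``$\mathcal P_t$ is an ancestor portal of $\mathcal P_g$'' with ``$t$ is an ancestor of $g$''. Everything else reduces to the parenthesis theorem for the interval labelling and to acyclicity of the portal graph.
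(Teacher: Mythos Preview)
Your approach is correct but considerably more elaborate than the paper's. The paper dispenses with the horizontal case entirely: it simply splits on whether $g.L\subsetneq t.L$, $g.L\supsetneq t.L$, or $g.L\incomp t.L$, and in each case asserts in one line that ``by definition of our labelling'' $g$ has a vertical neighbour $g'$ whose label sits in the right position relative to $t.L$, so \textsf{GoVertical} fires whenever \textsf{TryHorizontal} did not. In other words, the paper only argues that the vertical fallback never gets stuck, and never checks that \textsf{TryHorizontal} actually fires when it ought to. Your decomposition along the first edge of the $T_\Gamma$-path from $g$ to $t$ makes this explicit and in fact patches a small hole in the paper's argument: the paper's one-line existence claim for the vertical neighbour $g'$ is literally false when $g=v^{\mathcal P_g}$ and the tree parent is horizontal, and it is precisely your horizontal-edge case that shows \textsf{TryHorizontal} then fires, so the paper's conclusion holds vacuously there.

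One slip on your side: you assert that whenever the first tree edge $e$ is horizontal one has $g=v^{\mathcal P_g}$ and hence $g.P=g.L$. That is only true when $e$ leads to the $T_\Gamma$-\emph{parent} of $g$; when $e$ leads to a horizontal \emph{child} $g_W$ (the ``descendant side'' of your three-way split), $g$ need not be the top of its portal. What is true in that sub-case is that $g_W=v^{\mathcal P_{g_W}}$, hence $g_W.P=g_W.L\subsetneq g.L\subseteq g.P$, and since $\mathcal P_t$ lies at or below $\mathcal P_{g_W}$ in the portal tree one still gets $t.P\subseteq g_W.P\subsetneq g.P$, i.e.\ the second \textsf{TryHorizontal} disjunct. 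So your plan goes through, but the justification in that sub-case should hinge on $g_W$, not on $g$.
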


\begin{proof}
	Assume that $g.L \subset t.L$.
	Then, by definition of our labelling for the portal tree, $g$ has a vertical neighbor (either $g_N$ or $g_S$) $g'$ in the grid graph with $g.L \subset g'.L \subseteq t.L$.
	It follows that, in case $g$ does not forward $m$ to one of its horizontal neighbors, it instead forwards $m$ to $g'$.
	
	The cases $g.L \supset t.L$ and $g.L \incomp t.L$ work analogously to the case above.
\end{proof}

Combining Lemmas~\ref{lem:grid_routing:optimality} and \ref{lem:grid_routing:move} yields the following corollary.

\begin{corollary}[Correctness]
    For every source-destination pair $s,t \in V_\Gamma$, the routing algorithm (Algorithm~\ref{algo:routing}) determines a path in $G_\Gamma$ leading from $s$ to $t$.
\end{corollary}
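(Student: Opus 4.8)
The plan is to obtain correctness for free from the two preceding lemmas via a short potential argument. Fix a source-destination pair $s,t \in V_\Gamma$ and run Algorithm~\ref{algo:routing} on a message $m$ with target $t$ starting at $s$; write $s = g_0, g_1, g_2, \dots$ for the sequence of grid nodes that successively hold $m$. The first ingredient is Lemma~\ref{lem:grid_routing:move}: as long as the current holder $g_i$ is not $t$, the algorithm forwards $m$ to some grid neighbour $g_{i+1}$ of $g_i$. Hence the sequence is well defined, every consecutive pair $\{g_i,g_{i+1}\}$ is an edge of $\Gamma$, and the message is never stuck at a non-target node.

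The second ingredient, Lemma~\ref{lem:grid_routing:optimality}, will serve as a strictly decreasing potential. Each time the algorithm forwards $m$ from $g_i$ to $g_{i+1}$, the edge $\{g_i,g_{i+1}\}$ is the first edge of some optimal (minimum-hop) $g_i$-$t$ path in $\Gamma$, so $\hop_\Gamma(g_{i+1},t) = \hop_\Gamma(g_i,t) - 1$. By induction, $\hop_\Gamma(g_i,t) = \hop_\Gamma(s,t) - i$ for every $i$ up to the first time $m$ reaches $t$. Since hop-distances are non-negative and $\Gamma$ is connected (so $\hop_\Gamma(s,t)$ is finite), the message must arrive at $t$ after exactly $\hop_\Gamma(s,t)$ forwarding steps, and by the first ingredient it can halt nowhere else. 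The traversed sequence $g_0,\dots,g_{\hop_\Gamma(s,t)}$ is therefore a walk in $\Gamma$ from $s$ to $t$ whose length equals the hop-distance between its endpoints, hence a (shortest, in particular simple) $s$-$t$ path --- which is exactly the claim, and incidentally re-confirms that $\mathcal R_\Gamma$ is exact.

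I do not expect a genuine obstacle here; the only point that needs care is checking that the two lemmas interlock to rule out non-termination. Lemma~\ref{lem:grid_routing:move} forbids getting stuck and Lemma~\ref{lem:grid_routing:optimality} forbids ever using a ``non-optimal'' edge, and jointly they make $\hop_\Gamma(\cdot,t)$ a non-negative, strictly decreasing integer potential along the routing walk, which simultaneously forces termination, forbids revisiting a node, and pins down the terminal node as $t$.
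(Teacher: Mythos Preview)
Your proof is correct and follows essentially the same approach as the paper, which simply states that the corollary follows by combining Lemmas~\ref{lem:grid_routing:optimality} and~\ref{lem:grid_routing:move} without spelling out the argument. Your potential argument via $\hop_\Gamma(\cdot,t)$ is the natural way to make that combination explicit, and it is carried out cleanly.
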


\end{document}